\newif\iflncs
\newif\ifnotes

\newif\ifblind\blindfalse

\notesfalse

\lncsfalse

\iflncs
\RequirePackage{amsmath}
\documentclass{llncs} 
\usepackage{StyleSheets/template_lncs}
\usepackage{StyleSheets/global}
\else
\documentclass[runningheads, 11pt]{article}
\usepackage{times}
\usepackage{physics}
\usepackage{StyleSheets/template}
\usepackage{StyleSheets/global}
\fi

\usepackage{global}
\usepackage{graphicx} 
\usepackage{amsfonts}
\usepackage{amsthm}
\usepackage{amsmath}
\usepackage{float}
\usepackage[operators,sets]{cryptocode}
\usepackage[margin=1in]{geometry}

\theoremstyle{plain}
\newtheorem{problemnormal}{Problem}

\newtheorem{conjecturenormal}{Conjecture}
\newtheorem{goal}{Goal}

\title{Deterministic Hardness of Approximation \\ For SVP in all Finite $\ell_p$ Norms}
\author{Isaac M Hair \thanks{\texttt{isaacmhair@gmail.com}} \\ UCSB, UCLA \and Amit Sahai \thanks{\texttt{sahai@cs.ucla.edu}} \\ UCLA}

\begin{document}

\maketitle
\noteswarning

\thispagestyle{empty}

\begin{abstract}
    We show that, assuming NP $\not\subseteq$ $\cap_{\delta > 0}$DTIME$\left(\exp{n^\delta}\right)$, the shortest vector problem for lattices of rank $n$ in any finite $\ell_p$ norm is hard to approximate within a factor of $2^{(\log n)^{1 - o(1)}}$, via a deterministic reduction. Previously, for the Euclidean case $p=2$, even hardness of the \emph{exact} shortest vector problem was not known under a deterministic reduction.
\end{abstract}

\newpage

\setcounter{page}{1}

\section{Introduction}

Given an integer matrix $\mat B \in \mathbb{Z}^{M \times N}$, the lattice $\mathcal{L}(\mat B)$ generated by $\mat B$ is the set of all integral linear combinations of its rows,\footnote{All vectors throughout the paper are row vectors, unless stated otherwise.} i.e.
\[\mathcal{L}(\mat B) = \{\mat x \mat B : \mat x \in \mathbb{Z}^{M}\}.\]
Lattices have a wide range of applications in computer science, playing an important role in algorithmic number theory \cite{lenstra1982factoring}, integer programming \cite{lenstra1983integer, kannan1987minkowski, frank1987application, schrijver1998theory}, coding theory \cite{forney1988coset, de2002some, zamir2014lattice}, and perhaps most famously, post-quantum cryptography \cite{ajtai1998shortest, nguyen2001two, micciancio2007worst, regev2009lattices, peikert2016decade}.

An attractive feature of many contemporary lattice-based cryptosystems (see~\cite{peikert2016decade} and the references therein) is that, while the cryptosystems themselves sample lattice problems from a particular distribution, breaking the cryptosystems is as hard as solving certain worst-case lattice problems. One of the most important among these is the gap shortest vector problem (GapSVP). As the name suggests, the task is to approximate the length of the shortest nonzero vector in a given lattice:

\begin{problemnormal}[$\gamma$-GapSVP$_p$]
    Given a matrix $\mat B \in \mathbb{Z}^{M \times N}$ and a threshold $h$, distinguish between the following two cases:\footnote{The $\ell_p$ norm of a vector $\mat w \in \mathbb{Z}^N$ is defined as $\|\mat w\|_p = \left(\sum_{1 \leq i \leq N}\vert \mat w_i\vert^p\right)^{1/p}$. When $p = 0$, we instead define the $\ell_0$ pseudo-norm $\|\mat w\|_0$ as the number of nonzero entries in $\mat w$, and when $p = \infty$, we define $\|\mat w\|_\infty = \max_{1 \leq i \leq N}\vert\mat w_i\vert$.}
    \begin{enumerate}
        \item There exists a nonzero vector $\mat x \in \mathbb{Z}^M$ such that $\|\mat x\mat B\|_p \leq h$.
        \item For all nonzero vectors $\mat x \in \mathbb{Z}^M$, we have $\|\mat x \mat B\|_p > \gamma h$.
    \end{enumerate}
    We assume that the rows of $\mat B$ are linearly independent; otherwise the problem is trivial. The exact version (where $\gamma = 1$) is referred to as SVP$_p$.
\end{problemnormal}

For cryptographic applications, the most relevant is GapSVP$_2$, i.e. the version of the problem where we measure vector length using the Euclidian $\ell_2$ norm. Even for GapSVP in other $\ell_p$ norms, there is a rich body of work giving upper bounds \cite{kannan1987minkowski, ajtai2002sampling, blomer2009sampling, eisenbrand2022approximate} and lower bounds \cite{van1981another, arora1997hardness, micciancio2001shortest, dinur2002approximating, haviv2007tensor, aggarwal2018gap, bhattiprolu2024inapproximability, hair2025svp, hecht2025deterministic}.

Van Emde Boas \cite{van1981another} gave one of the first hardness results for the shortest vector problem, showing that SVP$_\infty$ is NP hard. Arora, Babai, Stern, and Sweedyk \cite{arora1997hardness} showed NP hardness of approximation for GapSVP$_\infty$, which was subsequently improved by Dinur \cite{dinur2002approximating}, culminating in a proof that $\gamma$-GapSVP$_\infty$ is NP hard to approximate within a nearly polynomial factor of $\gamma = n^{\Omega(1/\log\log n)}$.

The situation for finite $p$ is more nuanced. A breakthrough result by Ajtai \cite{ajtai1998shortest} showed that SVP$_2$ is hard, but only under the assumption that NP $\not\subseteq$ RP; his reduction makes critical use of randomization. Over the next few years, researchers improved upon Ajtai's results \cite{cai1998approximating, micciancio2001shortest, khot2003hardness, khot2005hardness, haviv2007tensor}, showing hardness of GapSVP for successively stronger approximation factors, but \emph{without removing the need for randomization}. The culmination of this line of work is summarized in the theorem below:

\begin{theorem}[Due to \cite{khot2005hardness, haviv2007tensor}]
    For all finite $p \geq 1$, $\gamma$-GapSVP$_p$ on lattices of rank $n$ is hard
    \begin{enumerate}
        \item when $\gamma$ is any constant, assuming NP $\not\subseteq$ RP.
        \item when $\gamma = 2^{(\log n)^{1 - \varepsilon}}$ for any constant $\varepsilon > 0$, assuming NP $\not\subseteq \text{RTIME}(2^{(\log n)^{O(1)}})$.
        \item when $\gamma = n^{c/\log\log n}$ for some constant $c > 0$, assuming NP $\not\subseteq \cap_{\delta > 0}\text{RTIME}(\exp(n^\delta))$.
    \end{enumerate}
\end{theorem}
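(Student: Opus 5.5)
This is a cited result, so the plan is to reconstruct the Khot / Haviv--Regev argument. It has two stages: a randomized reduction to a constant-factor gap, then tensoring to amplify the gap.

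\textbf{Stage 1: the basic constant-gap instance.} Start from an NP-hard variant of the closest vector problem with a $\{0,1\}$ structure, for instance exact set cover or CVP with $\{0,1\}$-coefficient YES witnesses. Build a lattice in the style of Khot by combining three ingredients:
\begin{itemize}
  \item the CVP instance, scaled by a large factor;
  \item a BCH-code lattice, whose nonzero vectors either have large $\ell_0$ weight or have some coordinate that is a large multiple of $2$;
  \item a random sublattice restriction.
\end{itemize}
The random step is Khot's ``augmented tensor'' gadget: a random homomorphism $\mathbb{Z}^N \to \mathbb{Z}_q$, chosen so that with good probability some YES vector of small norm survives.

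The soundness argument bounds the vectors of small $\ell_p$ norm. Vectors that are ``annoying'' (the coefficient vector is not $\{0,1\}$, or they are low-weight codewords) are few enough that a union bound shows none survive the random restriction. The remaining short vectors correspond to a small CVP solution, which cannot exist in the NO case. Completeness uses a counting argument: there are many short YES-vectors, so one survives. This yields $\gamma$-GapSVP$_p$ hardness for some constant $\gamma > 1$ under NP $\not\subseteq$ RP.

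The key extra property needed for later is that in the NO case, every nonzero lattice vector either has many nonzero coordinates or has a coordinate of large magnitude. This property is exactly what tensoring requires.

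\textbf{Stage 2: amplification by tensoring.} Following Haviv--Regev, prove that for the lattices produced in Stage 1 one has
\[
\lambda_1(\mathcal{L}\otimes\mathcal{L}') \ge \lambda_1(\mathcal{L})\,\lambda_1(\mathcal{L}')
\]
in the NO case, up to constants. The YES side is automatic, since $\|\mathbf{x}\otimes\mathbf{y}\|_p = \|\mathbf{x}\|_p\|\mathbf{y}\|_p$. The NO side is the main obstacle, because tensor products do not preserve minima in general.

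The approach is to write a vector $\mathbf{w} \in \mathcal{L}\otimes\mathcal{L}'$ as an integer matrix whose rows and columns lie in the respective lattices. Then use the Stage 1 structure: each nonzero row or column has $\ell_0$ weight at least $d$ or an entry at least $2$. A rank argument, considering the matrix modulo $2$, combined with counting nonzero entries shows that the $\ell_p^p$ norm is at least about $d \cdot d'$.

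Applying $k$-fold tensoring gives gap $\gamma^k$ at rank $n^k$. Each choice of $k$ then gives one item of the theorem:
\begin{itemize}
  \item constant $k$ gives any constant factor under NP $\not\subseteq$ RP, which is item 1;
  \item $k = (\log n)^{O(1/\varepsilon)}$ gives factor $2^{(\log N)^{1-\varepsilon}}$ in quasi-polynomial time, which is item 2;
  \item $k \approx n^{\delta}$, in terms of the original size, gives factor $N^{c/\log\log N}$ under subexponential randomized time, which is item 3.
\end{itemize}

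For general finite $p$, the base step has to be done separately for $p$ near $1$ versus large $p$. Here one uses that Khot's gadget already works in every $\ell_p$ with $p > 1$, and handles $p = 1$ through the norm-embedding and $\ell_0$-type arguments.
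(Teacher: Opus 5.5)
The paper does not prove this statement; it quotes it from Khot and Haviv--Regev. Compared against that literature, your Stage~2 has a genuine gap, and it starts with the NO-case property you attribute to Stage~1.

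Khot's BCH-based construction does \emph{not} guarantee that every nonzero vector has many nonzero coordinates or one huge coordinate. The BCH lattice $\{\mathbf z : H\mathbf z \equiv \mathbf 0 \pmod 2\}$ contains $2\mathbb{Z}^n$. Even vectors with roughly $d/2$ or more (but fewer than $d$) nonzero coordinates outnumber the good $\{0,\pm1\}$ codewords. So a random sublattice sparse enough to kill all of them would also kill every YES vector. The NO condition one actually gets, as Haviv and Regev formulate it, has a third alternative: all coordinates even, with only about $d/4$ of them nonzero.

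Note that if your clean property were available, Stage~2 would be trivial and need no rank argument. Apart from the large-coordinate case, a nonzero tensor has a nonzero column with at least $d$ nonzero entries. Every row through these entries is a nonzero lattice vector with at least $d'$ nonzero entries, so the $\ell_0$ weight is at least $dd'$. This is exactly the step the paper uses to pass from Theorem~\ref{thm:constantgap} to Theorem~\ref{thm:main}. Obtaining such an $\ell_0$ gap with a $\{-1,0,1\}$ witness is the hard part, and the paper gets it by an entirely different, deterministic route.

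With the true property, Stage~2 as you sketch it fails, for three reasons.
\begin{enumerate}
\item The hypothesis you invoke (``$\ell_0$ weight at least $d$ or an entry at least $2$'') is satisfied by $2\mathbf e_1$. But $2\mathbf e_1\otimes 2\mathbf e_1$ has $\ell_p^p = 4^p$, so the hypothesis yields no bound of order $dd'$.
\item For an all-even tensor $W$, counting rows and columns gives only $\|W\|_2^2 \ge dd'/4$. A constant loss per level cannot be absorbed ``up to constants'': after $k$ levels it becomes $c^k$, set against a base gap $\gamma^k$ with $\gamma$ a fixed constant.
\item An inequality proved only for pairs of Stage-1 lattices does not iterate, because the even alternative is not closed under $\otimes$.
\end{enumerate}

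The missing ingredient is the Haviv--Regev tensor analysis. Working in $\ell_2$, they handle the even and large-coordinate cases using three tools, in a form that can be iterated:
\begin{itemize}
\item the rank $k$ of the tensor;
\item the de Shalit--Parzanchevski (AM--GM) trace inequality, which lower-bounds $\|w\|_2^2$ by $k(\det\mathcal{L}_1'\det\mathcal{L}_2')^{2/k}$ for the $k$-dimensional sublattices spanned by the columns and rows of $w$;
\item determinant bounds for sublattices of Khot's lattices.
\end{itemize}
Other $\ell_p$ norms then follow from the Regev--Rosen norm embeddings, not from a separate tensor analysis for each $p$.

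Two minor points. First, the random sublattice is not Khot's ``augmented tensor product''; that was his amplification step, which Haviv--Regev replace by the plain tensor product. Second, items 2 and 3 are due to Haviv--Regev; Khot alone gives $2^{(\log n)^{1/2-\varepsilon}}$. Your choices of tensoring exponent for items 1--3 are fine.
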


This was the state of the art until very recently, when two independent papers by Hair and Sahai \cite{hair2025svp} and Hecht and Safra \cite{hecht2025deterministic} de-randomized and improved some of these results whenever $p > 2$. In particular, Hair and Sahai showed that for all constants $p > 2$, $\gamma$-GapSVP$_p$ is NP hard (under a deterministic reduction) to approximate within a factor of $\gamma = 2^{(\log n)^{1 - \varepsilon}}$, where $\varepsilon > 0$ is any constant.

\paragraph{Deterministic Hardness for $\ell_2$?} Putting everything together, GapSVP is known to be NP hard for all $\ell_p$ norms with $p > 2$. But for the cryptographically relevant case of $p = 2$, all known lower bounds critically leverage randomization; even showing deterministic hardness of the exact version of the problem has been open for over 40 years.
Indeed, showing hardness of GapSVP in the $\ell_2$ norm under a deterministic reduction is often described as an outstanding open problem; see for example \cite{van1981another, micciancio2001shortest, haviv2007tensor,micciancio2012inapproximability, micciancio2014locally, bennett2022hardness, bennett2023complexity, bennett2023parameterized}. In the words of Bennett, Peikert, and Tang, ``derandomizing hardness reductions for SVP (and similarly, BDD) is a notorious, decades-old open problem'' \cite{bennett2021improved}. 
One very recent mention is in a paper by Hecht and Safra \cite{hecht2025deterministic}, in which they ask for a proof of the following conjecture:
\begin{conjecturenormal}[Conjecture 7.1 from \cite{hecht2025deterministic}] \label{conj:safra}
    ``Under deterministic reductions, GapSVP is hard to approximate in the $\ell_2$ norm [...]''
\end{conjecturenormal}

Unfortunately, it does not appear that  existing techniques are well-equipped to resolve this conjecture. In particular, all existing reductions  either leverage a gadget called a \emph{locally dense lattice}, for which there are no known deterministic constructions, or else leverage what we refer to as \emph{anti-concentration gadgets}, which are only known to exist\footnote{That is, the barrier is existential, not just a de-randomization issue.} for $\ell_p$ norms with $p > 2$. We now elaborate.

\paragraph{A Barrier: Locally Dense Lattices.}
Ajtai's breakthrough hardness result for SVP$_2$ \cite{ajtai1998shortest}, and the reductions that followed \cite{cai1998approximating, micciancio2001shortest, khot2003hardness, khot2005hardness, haviv2007tensor}, all have roughly the same structure. The first step is to show hardness of approximation for (different variants of) the \emph{closest vector problem}:
\begin{problemnormal}[$\gamma$-GapCVP$_p$]
    Given a matrix $\mat B \in \mathbb{Z}^{M \times N}$, \emph{a vector $\mat w \in \mathbb{Z}^N$,} and a threshold $h$, distinguish between the following two cases:
    \begin{enumerate}
        \item There exists a vector $\mat x \in \mathbb{Z}^M$ such that $\|\mat x\mat B - \mat w\|_p \leq h$.
        \item For all vectors $\mat x \in \mathbb{Z}^M$, we have $\|\mat x \mat B - \mat w\|_p > \gamma h$.
    \end{enumerate}
\end{problemnormal}

In other words, the reductions start by showing hardness of approximation for an \emph{inhomogeneous} version of GapSVP, where instead of finding a short vector, the goal is to find a vector close to some target vector.

The main technical work lies in reducing from GapCVP to GapSVP. This homogenization process is performed by using a special gadget called a \emph{locally dense lattice}. All of the reductions use a randomized procedure to construct these gadgets, which means that the final lower bound for GapSVP only holds assuming NP $\not\subseteq$ RP.

For multiple decades, researchers have attempted to find a deterministic construction of locally dense lattices. Micciancio \cite{micciancio2001shortest} gave a deterministic construction under a certain number theoretic conjecture on the distribution of square-free smooth numbers. A few years later, the same author \cite{micciancio2012inapproximability} gave a different construction that, while still randomized, ensures that the final reduction to GapSVP will only have one sided error. Micciancio referred to this as a ``partial de-randomization.'' Later papers \cite{micciancio2014locally, bennett2022hardness, bennett2023complexity} all explore different approaches for de-randomization. But as of today, none of these approaches have resulted in a deterministic hardness result for GapSVP.

\paragraph{Another Barrier: Anti-Concentration Gadgets.}
Dinur's result for GapSVP$_\infty$ \cite{dinur2002approximating}, and the recent papers \cite{hair2025svp, hecht2025deterministic} targeted at $\ell_p$ norms with $p > 2$, follow a different approach. The reductions start by showing hardness of approximation for various \emph{constraint satisfaction problems} (CSPs). While the exact properties of these CSPs differ significantly between  papers, they all consist of a set of \emph{variables}, a set of \emph{constraints}, and an \emph{alphabet}, and the task is always to satisfy the constraints by assigning alphabet symbols to  variables.

To reduce from a CSP to GapSVP, the idea is to represent the CSP as a matrix, and then insert various gadgets into this matrix to get the final set of basis vectors. Each constraint from the CSP maps to an entire \emph{set of basis vectors}, with each individual basis vector representing a locally satisfying assignment for the constraint. The hope is to enforce that there exists a short linear combination of basis vectors if and only if the starting CSP was satisfiable.

A difficulty in the reductions is that an adversary is free to choose \emph{any} linear combination of basis vectors to make a lattice vector. In particular, the adversary might choose basis vectors that only correspond to a small number of constraints from the CSP, in which case it is difficult to implicate the properties of the CSP. To fix this issue, the reductions all use something we refer to as a (deterministic) \emph{anti-concentration gadget}.

Denoting this matrix gadget as $\mat Y$, and the initial set of basis vectors as $\mat B$, the final set of basis vectors will look something like $\big[\mat B \| \mat Y\big]$. Roughly speaking, the gadget $\mat Y$ is aligned with the basis matrix $\mat B$ in such a way that (i) linear combinations of basis vectors that correspond to only a few constraints of the CSP will map to long lattice vectors, and (ii) linear combinations which are more spread out across the constraints map to short lattice vectors. Unfortunately, \emph{such gadgets are only known to exist for $\ell_p$ norms with $p > 2$.}

\paragraph{Our Results.} Despite these barriers, we give the first deterministic hardness of approximation result for GapSVP in the $\ell_2$ norm; see Theorem \ref{thm:main} for a formal statement.

\begin{theorem}[Informal] \label{thm:maininformal}
    Let $p \geq 1$ be any constant. Assuming NP $\not\subseteq  \cap_{\delta > 0}$DTIME$\left(\exp{n^\delta}\right)$, then $\gamma$-GapSVP$_p$ on lattices of rank $n$ is hard to approximate within a factor of $\gamma = 
    2^{(\log n)^{1-\varepsilon}}
    $ where \[\varepsilon = \frac{(\log\log\log n)^{O(1)}}{\sqrt{\log\log n}} = o(1).\]
\end{theorem}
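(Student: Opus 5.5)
The plan is to avoid both barriers by building the homogenizing structure directly into the basis, using explicit codes in place of random or anti-concentration gadgets. The proof then has four steps: deterministic CSP hardness, a constant-gap reduction to GapSVP$_p$, tensor amplification, and parameter bookkeeping.

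\textbf{Step 1 (CSP hardness).} Start from a deterministic hardness result for a label-cover-type CSP with soundness $s$ as small as $2^{-(\log n)^{1-\varepsilon'}}$. Such results come from a PCP composed with derandomized parallel repetition, or from low-degree/Reed--Muller based PCPs. This costs size $\exp(n^{\delta})$, which is why the assumption is NP $\not\subseteq \cap_{\delta>0}$DTIME$(\exp(n^\delta))$.

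\textbf{Step 2 (constant-gap GapSVP$_p$).} Reduce the CSP to GapSVP$_p$ with some constant gap $\gamma_0>1$, in the style of the reductions of \cite{hair2025svp, hecht2025deterministic} from a CSP to a basis.
\begin{itemize}
\item Each constraint contributes a block of basis vectors, one for each locally satisfying assignment.
\item Replace the anti-concentration gadget $\mat Y$, which does not exist for $p\le 2$, by concatenating each block with an explicit lattice code. The natural choice is a Construction-A lattice over a Reed--Solomon or BCH code, or a Barnes--Wall-type lattice. Such a code has provably large minimum $\ell_p$ distance relative to its determinant, deterministically.
\item The code should force any nonzero combination supported on few constraints, or one that cancels improperly, to have a coordinate pattern whose $\ell_p$ norm is large purely because of minimum distance. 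No anti-concentration is needed.
\item Combinations spread over many constraints must encode a near-consistent assignment, which soundness of the CSP rules out in the NO case.
\end{itemize}
Completeness is the easy direction: a satisfying assignment yields a vector of norm exactly $h$. For soundness I would case-split on whether the support concentrates on few constraint blocks (handled by the code distance) or is spread (handled by CSP soundness via a decoding argument).

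\textbf{Step 3 (amplification).} Boost the gap by tensoring, $\mathcal L^{\otimes k}$. Tensoring does not preserve minimum distance in general. However, the instances produced in Step 2 have a code-like structure: their short vectors lie in the span of explicit codewords. I expect this to let me prove $\lambda_1(\mathcal L^{\otimes k}) \ge \lambda_1(\mathcal L)^{k}$ up to $(1+o(1))^k$ losses in the NO case, deterministically, in the spirit of the ``well-behaved'' condition of \cite{haviv2007tensor}. The YES case multiplies trivially.

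\textbf{Step 4 (parameters).} Rank grows as $n^k$ and the gap as $\gamma_0^k$, so take $k\approx (\log n)^{1-\varepsilon}$. The losses compound: field size and code-length tradeoffs in Step 2, soundness versus alphabet size in Step 1, and the per-level $(1+o(1))$ slack in Step 3. Balancing these should give $\varepsilon = (\log\log\log n)^{O(1)}/\sqrt{\log\log n}$, with the square root arising from optimizing code parameters against alphabet size.

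\textbf{Main obstacle.} I expect the soundness analysis in Step 2 for $p\le 2$ to be the hardest part, specifically showing that code distance alone punishes concentrated combinations. My first attempt would be to make the code's distance exceed the CSP block structure by a factor $(1+\Omega(1))^{1/p}$, and to use the integrality of coordinates, via $|a|^p\ge |a|$ for nonzero integers, to reduce to an $\ell_0$ argument. A close second is proving that tensoring behaves well for these specific lattices.
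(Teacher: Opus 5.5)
\textbf{The gap is in Step 2.} That step carries the whole theorem, and the proposal only asserts it. A minimum-distance gadget (Construction A over Reed--Solomon/BCH, Barnes--Wall, or the reduced Vandermonde blocks the paper itself uses) certifies exactly one kind of fact: a nonzero integer combination of fewer than $T$ of the rows carrying the gadget cannot vanish on the gadget's coordinates.

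If you attach the code globally, it penalizes combinations with few nonzero coefficients. But a combination concentrated on a few constraint blocks need not be sparse. It can put several locally satisfying assignments on each block it touches, with signs chosen to cancel.

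If you attach a code to each block, it penalizes every block that is touched at all. That includes blocks touched by the YES witness, which uses a single row per constraint and so cannot cancel inside any block's code.

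Neither version separates ``few blocks'' from ``many blocks.'' That separation is exactly what the anti-concentration gadget $\mat Y$ provides for $p>2$, and the paper notes it is unavailable for $p\le 2$ even existentially.

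The same problem breaks the other half of your case split. For integer combinations that place several labels on a constraint and cancel, there is no reason a spread combination ``encodes a near-consistent assignment.'' That decoding step is precisely what anti-concentration was used for. Passing to $\ell_0$ via integrality is the right move, and the paper does it too, but it does not supply the missing structural property.

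\textbf{The missing idea} is to put the separation into the combinatorics of the starting instance rather than into a norm gadget.
\begin{itemize}
\item The paper starts from quasi-random densest sub-hypergraph: Khot's quasi-random PCP, with the minimum-distance-of-codes step strengthened so that $\beta=n^{-\Omega(\log\log n)}$. Its NO case gives near-optimal hyperedge--vertex expansion for hyperedge sets of every size above $\beta M$.
\item It then forms the Vandermonde fortified tensor product $[\mat A\|\mat Q]$ with $\mat Q=\mat P^{\otimes q}$. A width-$M/t$ Vandermonde block on every axis-parallel line of $[M]^q$ forces the support of any $\mat x$ with $\mat x\mat A=\mat 0$ to meet each line in $0$ or at least $M/t$ points.
\item An induction on $q$ (slicing, plus bounding an intersection by the geometric mean of slice sizes) shows that such legal supports of size above $M^q\delta^d/(1-\beta t)^q$ touch more than $N^q\delta$ columns. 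This amplifies the expansion gap to roughly $r^{q/d}=n^{\Theta(\log\log n)}$.
\item Only after this do code gadgets suffice. Replacing each $1$ of $\mat Q$ by a width-$w$ Vandermonde row means a sparse $\mat x$ touching many columns must hit some column with multiplicity in $[1,w)$. A global Vandermonde block $\mat W$ rules out very sparse $\mat x$.
\item Completeness comes from pigeonhole on a row set whose support is compressed by a factor $\sqrt n$, which yields a $\{-1,0,1\}$ kernel vector.
\end{itemize}

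\textbf{Two secondary points.}
\begin{itemize}
\item Once you have an $\ell_0$ constant gap with a $\{-1,0,1\}$ YES witness, Step 3 is trivial. Minimum $\ell_0$ weight of an integer lattice is multiplicative under tensoring, so no Haviv--Regev-style argument is needed.
\item Your Step 4 scaling is off. With base rank $M_0$ tensored $k$ times, the achievable $\varepsilon$ is about $\log\log M_0/\log\log(M_0^k)$. So $k\approx(\log n)^{1-\varepsilon}$ over a polynomial-size base gives only about $2^{(\log R)^{1/2}}$. The paper instead takes $k=n^{1/\log\log n}$ over a base of size $\exp(2^{\tilde O(\sqrt{\log n})})$. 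The $\sqrt{\log\log}$ in $\varepsilon$ comes from that base size, which is dictated by the quasi-random PCP's parameters, not from a code-versus-alphabet tradeoff.
\end{itemize}
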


This resolves Conjecture \ref{conj:safra} in the affirmative.  Note that our result is also the first to show hardness of \emph{exact} SVP$_2$ under a deterministic reduction. 
Our hardness reduction departs significantly from existing lower bounds for the shortest vector problem over $\ell_2$. We do not make use of locally dense lattices, and we do not give a reduction by way of the closest vector problem. Instead, we draw inspiration from the line of work targeting GapSVP$_p$ in higher $\ell_p$ norms \cite{dinur2002approximating, hair2025svp, hecht2025deterministic}, although as we discuss below, our technical approach is quite different.

Our main technical contribution is a new type of tensor product, dubbed the \emph{Vandermonde fortified tensor product}, which we apply to the matrix representation of a quasi-random PCP~\cite{khot2006ruling}. The purpose of this tensor product is to amplify certain properties of the quasi-random PCP, so that the resulting matrix {is} amenable to a gadgetized reduction to GapSVP. We emphasize that Vandermonde fortified tensor products serve a critical role in the reduction, instead of being used to boost the approximation factor as in~\cite{khot2005hardness, haviv2007tensor}. Indeed, even if our goal was just to show deterministic hardness of \emph{exact} SVP$_2$, our  approach still requires the use of these fortified tensor products.

\section{Technical Overview}
\label{sec:tech}

In this overview, our focus will be to show constant factor hardness of approximation for GapSVP$_2$ via a deterministic reduction from SAT. We will in fact reduce to a version of the problem where satisfiable SAT instances map to short lattice vectors that also have all of their entries in $\{-1,0,1\}$.

With this (and some additional requirements) at hand, the approximation factor can be amplified via direct tensoring; after balancing all of the relevant parameters, we recover Theorem \ref{thm:maininformal}. For a more detailed discussion of the direct tensoring step, see Section \ref{sec:setup}.

\subsection{A Different Starting Problem} \label{sec:newprob}
Towards finding a reduction that works for the $\ell_2$ norm (and indeed all $\ell_p$ norms), we will not start from the closest vector problem (as was done by \cite{ajtai1998shortest, cai1998approximating, micciancio2001shortest, khot2003hardness, khot2005hardness, haviv2007tensor}), and we will not start from a constraint satisfaction problem (as was done by \cite{dinur2002approximating, hecht2025deterministic, hair2025svp}). Instead, our reduction begins with an instance of the hypergraph problem stated below (see Problem \ref{prob:qrdh} for a formal statement). As discussed in Section \ref{sec:qrdhproof}, this problem is an equivalent formulation of the inner verifier in Khot's quasi-random PCP \cite{khot2006ruling}, which was introduced in the context of proving hardness of approximation for various graph problems including graph min-bisection, densest $k$-subgraph, and bipartite clique.

\begin{problemnormal}[Quasi-Random Densest Sub-Hypergraph (QRDH), Informal] \label{prob:informalqrdh}
    Given a hypergraph $H = (V, E)$ of arity $d$ with $\vert V\vert = N$ and $\vert E\vert = M$, along with parameters $r \geq 1$ and $\beta \in [0, 1]$, distinguish between the following two cases:
    \begin{enumerate}
        \item There exists a vertex subset $V' \subseteq V$ of size at most $N/r$ that fully contains at least $(1/r)^{d-1} M$ hyperedges.
        \item Every vertex subset $V' \subseteq V$ fully contains at most $(\vert V'\vert/N)^d M + \beta M$ hyperedges.
    \end{enumerate}
\end{problemnormal}

QRDH is very similar to the usual densest sub-hypergraph problem. To see the connection, assume that $\beta$ is relatively small, fix a vertex subset size parameter $k = N/r$, and a hyperedge subset size parameter $h = (1/r)^{d-1} M$. Then in case (1), there exists a size-$k$ vertex subset that contains $h$ hyperedges, whereas in case (2) every size-$k$ vertex subset contains at best slightly more than $h/r$ hyperedges.

The difference between QRDH and densest sub-hypergraph is that case (2) imposes a stronger requirement. We want that \emph{every} vertex subset $V'$ of \emph{any} size contains (at best) only slightly more than $(\vert V'\vert/N)^d M$ hyperedges. Notice that no matter the structure of the hypergraph, if we choose a subset $V'$ at random, we expect it to fully contain approximately $(\vert V'\vert/N)^d M$ hyperedges. This means that the upper bound in case (2) is essentially the best we can hope for. It turns out that random hypergraphs satisfy case (2) with high probability, which is why we follow~\cite{khot2006ruling} and refer to the problem as ``quasi-random.'' This property will play an important role later, when we manipulate instances of QRDH using a special tensor product.

We show in Section \ref{sec:qrdhproof} that a relatively direct adaptation of Khot's original quasi-random PCP \cite{khot2006ruling} and the quasi-random PCP presented by Khot and Saket \cite{khot2016hardness} gives us 
the following theorem (we omit factors of $\log\log\log M$; see Theorem \ref{thm:khotqrdh} for a formal statement):

\begin{theorem}[Informal] \label{thm:informalkhotqrdh}
    There is a deterministic, subexponential time reduction from SAT to
    \[(N \leq M^{O(1)}, M, d = \log\log M, r = \log M, \beta = 1/(\log M)^{100\log\log M})\text{-QRDH}.\]

\end{theorem}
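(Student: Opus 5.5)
We plan to follow Khot's quasi-random PCP \cite{khot2006ruling}, in the streamlined form of Khot and Saket \cite{khot2016hardness}, and simply read its inner verifier as a hypergraph. Concretely, we start from a Label Cover instance produced deterministically from SAT by the PCP theorem followed by Raz's parallel repetition. The number of repetitions, $t = \mathrm{poly}(\log\log M)$, is chosen so that the soundness is at most $(\log M)^{-C\log\log M}$ for a large constant $C$. The instance size then becomes $n^{O(t)}$, which explains why the reduction is only subexponential. The outer instance has a smooth, regular structure, and each outer vertex $v$ is encoded by a long code (or a low-degree/Reed--Muller style encoding) over $\{0,1\}^{[R]}$ with a biased measure of bias $1/r = 1/\log M$. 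The vertex set $V$ of the hypergraph is the disjoint union of these encodings, weighted appropriately; integrality of the weights is obtained by replicating vertices, which keeps $N \le M^{O(1)}$.

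The hyperedges are the query tuples of the inner test. We pick an outer edge $(u,v)$ and sample $d = \log\log M$ points $x^{(1)},\dots,x^{(d)}$ in the $p$-biased cubes of $u$ and $v$. The points are correlated through the projection $\pi_{uv}$ and perturbed by small noise, and the tuple forms a hyperedge exactly when it passes the test's local check (for instance, the points form a sub-cube or share a common coordinate structure). Completeness is checked directly. Given a labeling $\sigma$, we take $V'$ to be the dictator sets $\{x : x_{\sigma(v)} = 1\}$. These have density $1/r$ in each block, so $|V'| \le N/r$. A hyperedge is fully contained in $V'$ exactly when the correlated coordinate equals $1$ in all $d$ points, which happens with probability $(1/r)^{d-1}$ because the points are made perfectly correlated on the label coordinate up to the first. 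This gives at least $(1/r)^{d-1}M$ contained hyperedges, as required.

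Soundness is the heart of the proof. Take an arbitrary $V'$ and let $f_v$ be its indicator on block $v$, with mean $\mu_v$. The number of contained hyperedges is $M \cdot \mathbb{E}_{(u,v)}\mathbb{E}[\prod_i f(x^{(i)})]$. We expand this product in the $p$-biased Fourier basis and split it into two parts. The first is the term coming purely from the means, which by Hölder/Jensen over blocks is at most $\mathbb{E}_v[\mu_v^d] \le$ (after the weighting is chosen so that $|V'|/N = \mathbb{E}_v\mu_v$ and by using the regularity of the outer graph to avoid convexity loss) roughly $(|V'|/N)^d$. Everything else is bounded by $\beta$. Influential coordinates of the $f_v$ are decoded into a random labeling, and if the remainder exceeded $\beta$ this labeling would beat the Label Cover soundness. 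When no coordinate is influential, an invariance principle or hypercontractive estimate (a Mossel-type bound for correlated biased spaces) shows that the non-constant part is small.

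The main obstacle is quantitative. We need the error term to be $\beta = (\log M)^{-100\log\log M}$ while the bias is $1/\log M$ and the arity is $\log\log M$. Hypercontractivity constants for $p$-biased spaces degrade like $r^{O(d)}$, and the noise must be small enough to preserve completeness. So we will first establish a bound of the form $\text{err} \le r^{O(d)}\cdot(\text{Label Cover soundness})^{\Omega(1)} + r^{O(d)}\tau^{\Omega(1)}$, and then pick $t$ and the influence threshold $\tau$, both as $\mathrm{poly}(\log\log M)$ powers, to absorb these losses. The $\log\log\log M$ factors that the informal statement omits come from this balancing. A second delicate point is obtaining exactly $(|V'|/N)^d$ rather than $\mathbb{E}_v\mu_v^d$, since the latter can be much larger when $V'$ concentrates on a few blocks. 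We handle this the way Khot does, by building the hyperedges across $d$ \emph{distinct} outer vertices chosen through a random walk on an expanding outer graph. The mean term then becomes $\mathbb{E}[\mu_{v_1}\cdots\mu_{v_d}]$, and an expander mixing argument bounds it by $(\mathbb{E}\mu)^d + \beta$.
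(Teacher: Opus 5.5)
\textbf{Main gap: the long code is far too large for the required soundness.} Your proposal takes a completely different route from the paper: Label Cover via the PCP theorem and parallel repetition, then a $p$-biased long-code inner test, influence decoding, and an invariance principle. As set up, it cannot reach the stated parameters.

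A projection Label Cover with soundness $\epsilon$ needs label sets of size $R \ge 1/\epsilon$, because a uniformly random labeling already satisfies a $1/R$ fraction of the constraints. A long code over $[R]$ has $2^R$ points, so $\log N \ge R \ge 1/\epsilon$. Your own error bound $r^{O(d)}\epsilon^{\Omega(1)} + \cdots \le \beta = r^{-100d}$ forces $\epsilon \le r^{-\Omega(d)}$. Indeed you explicitly choose $\epsilon \le (\log M)^{-C\log\log M}$. That requires $t = \Omega((\log\log M)^2)$ repetitions and an alphabet of size $2^{\Omega(t)}$, hence $\log N \ge (\log M)^{\Omega(\log\log M)}$.

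But $N \le M^{O(1)}$ gives $\log N = O(\log M)$. Independently, $M \ge N/d$ once no vertex is isolated. Since $r$, $d$ and $\beta$ are all tied to $M$, inflating $M$ does not help: the target $\beta$ shrinks along with it, and the contradiction persists. Your remark that the instance size is $n^{O(t)}$ misses exactly this doubly exponential blow-up. Switching to a Reed--Muller encoding would avoid the blow-up, but then none of your $p$-biased Fourier, influence, or invariance analysis applies.

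\textbf{A second, independent problem.} A coordinate-symmetric test cannot correlate only the label coordinate. Whatever correlation you use for completeness is present in every coordinate. With positively correlated coordinates, low-influence sets of density $\zeta$, such as $p$-biased thresholds, reach a Gaussian noise-stability value well above $\zeta^d$. To avoid this you would need a per-coordinate distribution that is $(d-1)$-wise independent yet has $\Pr[\text{all ones}] = r^{-(d-1)}$. That atom is then tiny, and known invariance bounds degrade badly with the smallest atom probability. The resulting error is much worse than the $r^{O(d)}\tau^{\Omega(1)}$ you budget.

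\textbf{What the paper does instead.} The paper never uses Label Cover or long codes; the missing idea is an algebraic encoding in which the error is governed by the field size rather than by the encoding length. The steps are:
\begin{enumerate}
\item It amplifies deterministic MDC hardness (tensoring, expander walks, Vandermonde evaluations, ending with one walk of length $\log n\log\log n$) to soundness $1-n^{-\Omega(\log\log n)}$ over $\mathbb{F}_{2^\lambda}$, with $\lambda = \Theta(\log n\log\log n)$.
\item It feeds this into Khot's GapMDC$\to$GapAlgCSP reduction. Assignments there are polynomials of degree $O(\sqrt{\log n}/\log^2\log n)$ on $\mathbb{F}^m$, with $m = 2^{O(\sqrt{\log n}\log^4\log n)}$.
\item It then applies the Khot--Saket GapAlgCSP$\to$QRDH reduction with $r = 2^{\lambda'}$, where $\lambda' \approx \sqrt{\log n}$ divides $\lambda$.
\end{enumerate}

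The soundness error is (CSP soundness)$^{\Omega(1)} + 2^{-\Omega(\lambda)} = n^{-\Omega(\log\log n)}$. Meanwhile $\log N \le (md)^{O(1)}(\log|\mathcal{C}| + \lambda) = 2^{O(\sqrt{\log n}\log^4\log n)}$. So the error is exponentially small in $\lambda$ while $\log N$ grows only linearly in it, giving $1/\beta \gg \log N$. A long-code inner layer, where $\log N \ge 1/\epsilon$, can never achieve this.
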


Let hypergraph $H = (V, E)$ be the QRDH instance output by the reduction, and set $k = N/r = N/\log M$ and $h = (1/r)^{d-1}M = M/(\log M)^{\log\log M - 1}$. Cases (1) and (2) of the QRDH problem imply the following \emph{vertex-hyperedge containment} gap for $H$:
\begin{enumerate}
    \item If the starting SAT instance was satisfiable, then there exists a size-$k$ vertex subset that fully contains at least $h$ hyperedges.
    \item Otherwise, every size-$k$ vertex subset fully contains at most $\frac{1 + o(1)}{\log M} \cdot h$ hyperedges.
\end{enumerate}

This gives nearly a $\log M$ factor gap between cases (1) and (2). But for our reduction to GapSVP, we will not be concerned with the number of edges contained within a size-$k$ vertex subset. Instead, we view the problem in a contrapositive form, and ask: what is the minimum number of vertices touched by any size-$h$ subset of hyperedges? From this perspective, the dichotomy between cases (1) and (2) is weaker, giving only a constant factor \emph{hyperedge-vertex expansion} gap:\footnote{For technical reasons, our reduction actually requires the initial hyperedge-vertex expansion to be at least $d^{\omega(1)} = (\log\log M)^{\omega(1)}$, not just constant. The disparity arises because we are ignoring the factors of $\log\log\log M$ in Theorem \ref{thm:informalkhotqrdh}. For the sake of exposition, we ignore this detail for the rest of the overview.}
\begin{enumerate}
    \item If the starting SAT instance was satisfiable, then there exists a size-$h$ subset of hyperedges touching at most $k$ vertices.
    \item Otherwise, every size-$h$ subset of hyperedges touches at least $(2-o(1)) \cdot k$ vertices.
\end{enumerate}
To permit a reduction to GapSVP$_2$, we need to amplify this hyperedge-vertex expansion gap from $(2-o(1))$ to at least\footnote{The constant $100$ in the exponent here is only for illustrative purposes.} $(\log M)^{100}$. We note that, even if the goal was just to show hardness of exact SVP$_2$, our general technique still requires this amplification step.

The overall plan for our reduction will go as follows. In Section \ref{sec:newtensor}, we pre-process the QRDH instance using a special type of tensor product, which allows us to achieve the desired hyperedge-vertex expansion gap, albeit in a slightly different form. This transformation is the main technical contribution of our paper. Then in Section \ref{sec:puttingtogether}, we manipulate the resulting matrix, inserting carefully chosen gadgets to get our final set of basis vectors.

\subsection{A New Tensor Product} \label{sec:newtensor}
In this section, we introduce the \emph{Vandermonde fortified tensor product} (VF tensor product). Towards reasoning about tensor products, view the hypergraph $H = (V, E)$ output by Theorem \ref{thm:informalkhotqrdh} in terms of its 0/1 indicator matrix $\mat P$. Each column of $\mat P$ is indexed by a distinct vertex of $H$, each row of $\mat P$ is indexed by a distinct hyperedge of $H$, and every row has $d$ nonzero entries to indicate the vertices contained in the corresponding hyperedge. We say that a subset $V' \subseteq V$ of columns in $\mat P$ (vertices in $H$) \emph{supports} a row $e$ of $\mat P$ (hyperedge $e$ of $H$) whenever all the nonzero entries of row $e$ fall inside of $V'$ (the vertices of $e$ are all contained in $V'$).

Roughly speaking, the $q$-fold VF tensor product of $\mat P$ is a matrix $\mat T = \big[\mat A \| \mat Q \big]$, where $\mat Q = \mat P^{\otimes q}$ is the $q$-fold tensor product of $\mat P$, and $\mat A$ is a special gadget based on Vandermonde matrices. (We sketch the construction later in this subsection. See Definition \ref{defn:VFTP} for a formal definition of VF tensor products.) Each row of $\mat Q$ corresponds to a $q$-tuple of rows from $\mat P$. As such, we index each row of $\mat Q$ using the corresponding $q$-tuple $(e_1, \ldots, e_q) \in [M]^q$ of row indices for $\mat P$, or equivalently the corresponding $q$-tuple $(e_1, \ldots, e_q) \in [M]^q$ of hyperedges from $H$. We use the same row indexing scheme for $\mat T$. Sometimes, we refer to the rows of $\mat Q$ as hyperedges, since $\mat Q$ is the indicator matrix for the $q$-fold tensor product of the hypergraph $H$ with itself.

\paragraph{Why are VF Tensor Products Useful?}
Before discussing VF tensor products in more detail, we characterize exactly what properties we want to achieve, and relate these properties back to our overall reduction to GapSVP$_2$. Recall that our goal from Section \ref{sec:newprob} was to amplify the hyperedge-vertex expansion gap for $H$, and that we start with a multiplicative gap of $(2 - o(1))$. As we discuss later in this section, using the tensor product $\mat Q = \mat P^{\otimes q}$ itself does \emph{not} give us the desired gap amplification, because we cannot rule out the existence of certain \emph{ill-behaved hyperedge subsets}, i.e. hyperedge subsets that are supported on a vertex subset which is too small.

Our insight is that \emph{every ill-behaved hyperedge subset has special structure}. 
This allows us to construct a matrix $\mat A$ with the following property: Let $\mat x \in \mathbb{Z}^{M^q}$ be any nonzero coefficient vector, and let $E' \subseteq [M]^q$ be the indices for the nonzero entries of $\mat x$. Then if $E'$ is ill-behaved, it must be the case that $\mat x \mat A \neq \mat 0$.


\begin{remark}
At first glance, it may seem that we are simply requiring the rows of $\mat A$ indexed by $E'$ to be linearly independent. However, there is a subtle but crucial distinction:
Notice that by construction $\mat x$ will have a nonzero entry corresponding to \emph{every} index in $E'$. This means that $E'$ might index a linearly dependent set of rows, but at the same time every linear combination which gives a nonzero coefficient to \emph{all} of those rows results in a nonzero sum.
\end{remark}

Looking ahead in our reduction to GapSVP$_2$, each row of $\mat T = \big[\mat A \| \mat Q\big]$ will map to a basis vector, and we will ensure that every subset of basis vectors that does \emph{not} admit a zero-sum linear combination of the rows in $\mat A$ will always map to a long lattice vector. That way, the only subsets of basis vectors which are even relevant for constructing short lattice vectors \emph{must correspond to well-behaved hyperedge subsets.}

As a result, we know that every short lattice vector must come from a linear combination of basis vectors that is well-behaved. And, there is a large hyperedge-vertex expansion gap for all the well-behaved subsets of rows in $\mat Q$, depending on whether the starting SAT instance was satisfiable or not. The only remaining detail is that, of course, we would like satisfiable SAT instances to map to short lattice vectors.

In the context of our final reduction, we will need satisfiable SAT instances to map to a set of rows in $\mat T$, and hence a set of basis vectors, which is not only well-behaved, but also admits a zero-sum linear combination of the corresponding rows of $\mat A$ \emph{that has small coefficients}. A slightly stronger property that implies this is the following: If the starting SAT instance was satisfiable, then there exists a subset $E' \subseteq [M]^q$ that (i) maps to a compressing set of rows in $\mat Q$, and (ii) maps to a set of rows in $\mat A$ whose nonzero entries are supported\footnote{The constant $100$ in the exponent here is only for illustrative purposes.} on at most $\vert E'\vert/(\log M)^{100}$ columns. 

Below, we summarize all of these requirements. 

\begin{goal} Let $H$ be the hypergraph output by the reduction in Theorem \ref{thm:informalkhotqrdh}, and denote by $\mat P$ its indicator matrix. Set $k = N/r = N/\log M$ and $h = (1/r)^{d-1}M = M/(\log M)^{\log\log M - 1}$. The desired properties of the $q$-fold VF tensor product $\mat T = \big[\mat A \| \mat Q \big]$ of $\mat P$ are: \label{goal:VFTP}
\begin{enumerate}
    \item Suppose that the starting SAT instance was satisfiable. Then there exists a subset $E' \subseteq [M]^q$ of size $\vert E'\vert = h^q$ such that \emph{both of the following are satisfied}:
    \begin{enumerate}
        \item The rows indexed by $E'$ in $\mat Q$ are supported on at most $k^q$ columns.
        \item The rows indexed by $E'$ in $\mat A$ have their nonzero entries supported on at most $h^q/(\log M)^{100}$ distinct columns.
    \end{enumerate}
    \item Suppose that the starting SAT instance was unsatisfiable. Then for all subsets $E' \subseteq [M]^q$ of size $\vert E'\vert = h^q$, \emph{at least one of the following is satisfied}\footnote{Formally speaking, case (2) needs to hold for a range of subset sizes, but we restrict our attention to $\vert E'\vert = h^q$ for simplicity.}
    \begin{enumerate}
        \item The rows indexed by $E'$ in $\mat Q$ are supported on at least $(2-o(1))^qk^q$ columns.
        \item Every coefficient vector $\mat x \in \mathbb{Z}^{M^q}$ whose nonzero entries are exactly the entries indexed by $E'$ satisfies $\mat x \mat A \neq \mat 0$.
    \end{enumerate}
\end{enumerate}
\end{goal}

\paragraph{Shortcomings of Direct Tensoring.} A tantalizing approach to achieve the hyperedge-vertex expansion goal is to simply use the tensor product $\mat Q = \mat P^{\otimes q}$. We begin our discussion of VF tensor products by highlighting the shortcomings of this approach. Along the way, we highlight why we started with the \emph{quasi-random} densest subhypergraph problem, as opposed to the usual densest subhypergraph problem.

We re-state the hyperedge-vertex expansion gap for $H$ (without the ``quasi-random'' strengthening):
\begin{enumerate}
    \item If the starting SAT instance was satisfiable, then there exists a size-$h$ subset of hyperedges touching at most $k$ vertices.
    \item Otherwise, every size-$h$ subset of hyperedges touches at least $(2-o(1)) \cdot k$ vertices.
\end{enumerate}

Now suppose that we defined the $q$-fold VF tensor product of $\mat P$ simply as $\big[\mat 0 \| \mat Q\big]$, that is, we set the matrix $\mat A$ to just be zero. Conditions (1a) and (1b) of Goal \ref{goal:VFTP} are satisfied by definition of $\mat Q$ and by the fact that $\mat A = \mat 0$, respectively. But because all rows of $\mat A$ are linearly dependent, the only way for condition (2) to be satisfied is if we satisfy condition (2a). This could fail miserably, even in the case of $q = 2$. Suppose that $\mat P$ has a subset of $h/\log M$ rows supported on just $k^{0.1}$ columns, and a subset of $h \log M$ rows supported on $100k$ columns. Taking the tensor product of these two subsets, we get a subset $E' \subseteq [M]^2$ of size $h^2$ that indexes a set of rows whose nonzero entries are only supported on just $100k^{1.1}$ columns. This not only violates condition (2a) of Goal~\ref{goal:VFTP}, but is even more compressing than the guarantee in condition (1a)!

In this case, tensoring failed because our original conditions on the matrix $\mat P$ only mentioned row subsets of size $h$. However, because our starting hypergraph problem was quasi-random, we can enforce something useful about row subsets of all sizes.

Suppose that the starting SAT instance was unsatisfiable, so that $H$ satisfies case (2) of the QRDH problem. Then we know that every vertex subset of size $k'$ supports at most
\[(k'/N)^dM + \beta M = (k'/N)^{\log\log M}M + M/(\log M)^{100\log\log M}\]
hyperedges. We would like to find a contrapositive form of this statement. Notice that we have no guarantees on hyperedge subsets of size at most $\beta M$. 
But for a cutoff size of, say, $\beta^{1/3}M$, it turns out that every hyperedge subset of size $h' > \beta^{1/3}M$ is incident to at least
\begin{align} \label{eq:techlower}
    (1 - \beta^{1/3})(h'/M)^{1/d}N
\end{align}
vertices. Observe that in a random hypergraph of arity $d$, the lower bound would be approximatley $(h'/M)^{1/d}N$, so the above lower bound comes very close to the best possible.

Relating this back to condition (2a) of Goal \ref{goal:VFTP}, we would like to understand which subsets $E' \subseteq [M]^q$ of size $\vert E'\vert = h^q$ violate the condition, and which ones do not. For the time being, we limit ourselves to subsets $E' = E_1 \times \ldots \times E_q$ formed as a Cartesian product of factors $E_1, \ldots, E_q \subseteq [M]$. (We show implicitly in Theorem \ref{thm:vandtensor} that sets of this form are the worst case.) Suppose that all of the factors $E_i$ satisfy $\vert E_i\vert > \beta^{1/3}M$, so that the lower bound (\ref{eq:techlower}) applies. We argue that in this case, $E'$ satisfies condition (2a).

For all $i$, denote by $V_i$ the set of columns supporting the rows of $\mat P$ indexed by $E_i$. We know that $\vert E_1\vert \cdot \ldots \cdot \vert E_q\vert = h^q$ by assumption. By the Cartesian product structure, we know that the number of columns supporting the rows of $\mat Q$ indexed by $E'$ is equal to $\vert V_1\vert \cdot \ldots \cdot \vert V_q\vert$. Plugging in (\ref{eq:techlower}), and assuming $\beta^{1/3} \ll 1/q$, we have
\begin{align*}
    \prod_{1 \leq i \leq q}{\vert V_i\vert} & \geq (1 - \beta^{1/3})^q\prod_{1 \leq i \leq q}\left((\vert E_i\vert/M)^{1/d}N\right) \tag{Using (\ref{eq:techlower})} \\
    & \geq (1-o(1)) \prod_{1 \leq i \leq q}\frac{\vert E_i\vert^{1/d}N}{M^{1/d}} \tag{$\beta^{1/3} \ll 1/q$}\\
    & \geq (1-o(1)) \frac{h^{q/d}N^q}{M^{q/d}} \tag{$\prod_{1 \leq i \leq q}{\vert E_i\vert} = h^q$}\\
    & \geq (1-o(1)) (1/r)^{q - q/d}N^q \tag{$h = (1/r)^{d-1}M$}\\
    & \geq (1-o(1)) k^q(1/r)^{-q/d} \tag{$k = N/r$}\\
    & \geq (1-o(1)) 2^qk^q \tag{$r = \log M$ and $d = \log\log M$}
\end{align*}

In other words, we know that (at least when restricted to Cartesian product structure), the only subsets $E' \subseteq [M]^q$ that could possibly violate condition (2a) of Goal \ref{goal:VFTP} are those with at least one factor $E_i \subseteq [M]$ which is ``very small.'' These are exactly the ill-behaved subsets that the matrix $\mat A$ in our Vandermonde fortified tensor product is designed to filter out. (While this sketch only considers Cartesian product sets, we stress that our formal Theorem \ref{thm:vandtensor}  shows that, by including $\mat A$, we in fact filter out all possible ill-behaved subsets $E'$, not just those limited to Cartesian product structure.)

\paragraph{Constructing the VF Tensor Product.}
Recall that the $q$-fold VF tensor product of $H$'s indicator matrix $\mat P$ is denoted as $\mat T = \big[\mat A \| \mat Q\big]$, and that we set $\mat Q = \mat P^{\otimes q}$. All that remains is to specify the structure of the matrix $\mat A$, and describe how it will filter out ill-behaved subsets. Before doing so, we define a specific type of matrix over the integers that has strong linear independence properties:

\begin{definition}[Reduced Vandermonde Matrix]
    Let $a, b$ be positive integers such that $a > b$ and $a$ is prime. We define an $(a, b)$ reduced Vandermonde matrix $\mat V \in \mathbb{Z}^{(a - 1) \times b}$ as $\mat V_{i, j} = i^{j-1} \mod a$.
\end{definition}

We observe in Lemma \ref{lem:vandindependent} that every subset of at most $b$ rows from an $(a, b)$ reduced Vandermonde matrix is linearly independent (this follows by a simple application of known results on Vandermonde matrices).

Towards constructing $\mat A$, notice that the ill-behaved subsets $E' \subseteq [M]^q$ with Cartesian product structure all share a common feature: because one of the factors $E_i \subseteq [M]$ is small (of size at most $\beta^{1/3}M$ but at least $1$), we know that there exists a choice of coordinate $\ell \in [q]$, and a choice of indices $e_1, \ldots, e_{\ell-1}, e_{\ell+1}, \ldots, e_q \in [M]$ for the other coordinates, such that the set
$S = \{(e_1, \ldots, e_{\ell-1}, e_\ell', e_{\ell+1}, \ldots, e_q) : e_\ell' \in [M]\}$
satisfies
\[0 < \vert E' \cap S\vert \leq \beta^{1/3}M.\]

Using this observation, the construction of $\mat A$ is simple. For every possible set $S$, we append a matrix $\mat A^{(S)} \in \mathbb{Z}^{M^q \times \beta^{1/3}M}$ to $\mat A$. Each row of $\mat A^{(S)}$ indexed by a member of $S$ is set to a distinct row of a width-$\beta^{1/3}M$ reduced Vandermonde matrix, and all the other rows are set to zero.

Notice that for each $S$, we have the following properties:
\begin{enumerate}
    \item Every subset $E' \subseteq [M]^q$ satisfying $0 < \vert E' \cap S\vert \leq \beta^{1/3}M$ indexes a linearly independent set of rows in $\mat A^{(S)}$.
    \item All other subsets index a linearly dependent set of rows in $\mat A^{(S)}$.
\end{enumerate}

\begin{figure}[H]
    \centering
    \includegraphics[width=0.8\textwidth]{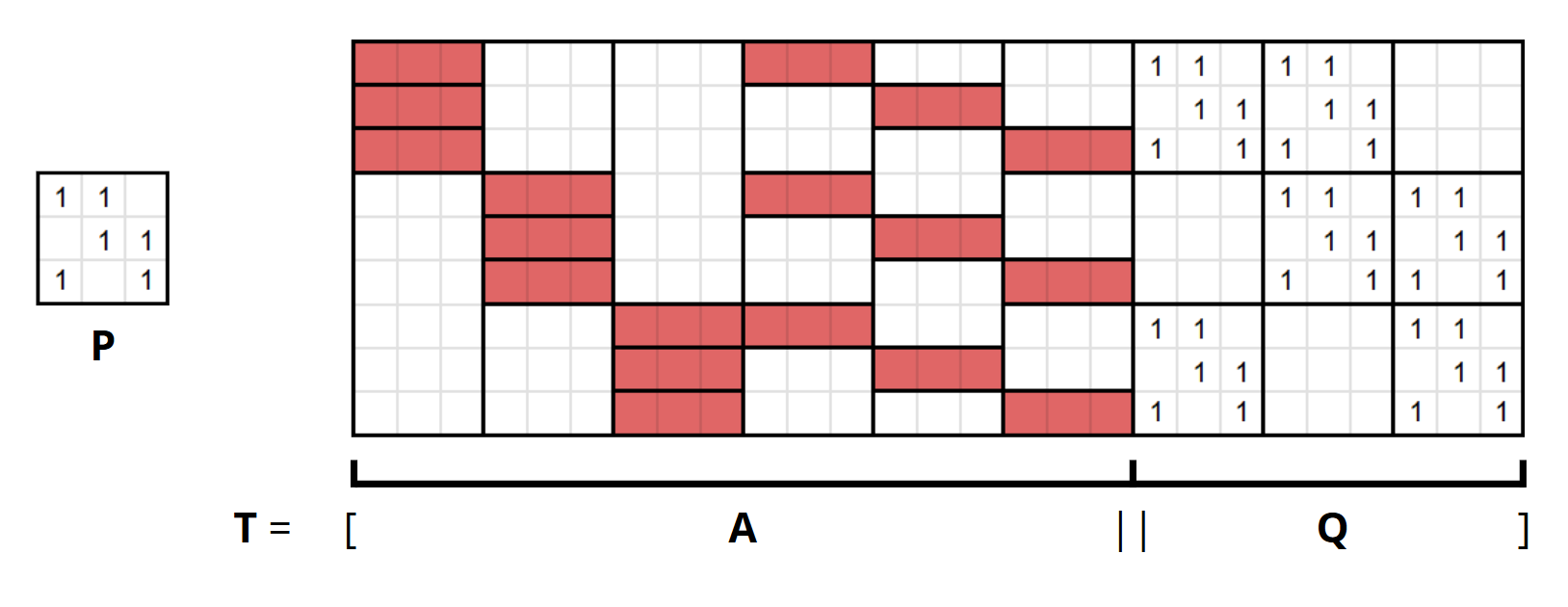}
    \caption{(Left) An example 0/1 matrix $\mat P$ with $N = 3$ columns and $M = 3$ rows. (Right) The $2$-fold VF tensor product of $\mat P$, denoted as $\mat T = \big[\mat A \| \mat Q\big]$. Each (empty) white square is a zero entry, and each red bar is a row taken from a reduced Vandermonde matrix.}
    \label{fig:VFTP}
\end{figure}

All that remains is to relate back to Goal \ref{goal:VFTP}. Suppose that the starting SAT instance was satisfiable, meaning there exists a set of $h$ rows in $\mat P$ supported on at most $k$ columns. Taking the $q$-fold tensor product of this set, we directly get a set of $h^q$ rows in $\mat Q$ supported on at most $k^q$ columns, meaning that we satisfy condition (1a). Let the indices for these rows be $E' \subseteq [M]^q$. Notice that for all submatrices $\mat A^{(S)}$, we either have $\vert E' \cap S\vert = 0$, or $\vert E' \cap S \vert = h \gg \beta^{1/3}M$. After some calculations, this means that (assuming $\beta^{1/3} \ll 1/q$) the row-induced submatrix $\mat A'$ of $\mat A$ indexed by $E'$ will have less than $h^q/(\log M)^{100}$ nonzero columns, meaning that we satisfy condition (1b).

If the SAT instance wasn't satisfiable, then every subset $E'$ is either well-behaved, and hence satisfies condition (2a), or else it is ill-behaved, and there exists at least one sub-matrix $\mat A^{(S)}$ of $\mat A$ such that $0 < \vert E' \cap S\vert \leq \beta^{1/3}M$. In this case, every nonzero coefficient vector $\mat x \in \mathbb{Z}^{M^q}$ whose nonzero entries are exactly the entries indexed by $E'$ will satisfy $\mat x \mat A^{(S)} \neq \mat 0$, and hence $\mat x \mat A \neq \mat 0$, meaning that we satisfy condition (2b).

\subsection{The GapSVP$_2$ Reduction} \label{sec:puttingtogether}
With the Vandermonde fortified tensor product in hand, the rest of the reduction proceeds similarly to the reduction by Hair and Sahai \cite{hair2025svp}, but without needing anti-concentration gadgets. We sketch the main steps below; see Section \ref{sec:constgap} for the details.

Apply Theorem \ref{thm:informalkhotqrdh} to the starting SAT instance to obtain a hypergraph $H$ with indicator matrix $\mat P$. Then we let $\mat T = \big[\mat A \| \mat Q\big]$ be the $q$-fold VF tensor product of $\mat P$, where\footnote{The value of $q$ in our actual reduction differs slightly from this by some $\log\log\log M$ factors.} $q = (\log\log M)^2$. As stated in Goal \ref{goal:VFTP}, we know the following:
\begin{enumerate}
    \item Suppose that the starting SAT instance was satisfiable. Then there exists a subset $E' \subseteq [M]^q$ of size $\vert E'\vert = h^q$ such that \emph{both of the following are satisfied}:
    \begin{enumerate}
        \item The rows indexed by $E'$ in $\mat Q$ are supported on at most $k^q$ columns.
        \item The rows indexed by $E'$ in $\mat A$ have their nonzero entries supported on at most $h^q/(\log M)^{100}$ distinct columns.
    \end{enumerate}
    \item Suppose that the starting SAT instance was unsatisfiable. Then for all subsets $E' \subseteq [M]^q$ of size $\vert E'\vert = h^q$, \emph{at least one of the following is satisfied}
    \begin{enumerate}
        \item The rows indexed by $E'$ in $\mat Q$ are supported on at least $(2-o(1))^qk^q = (\log M)^{\omega(1)}k^q$ columns.
        \item Every coefficient vector $\mat x \in \mathbb{Z}^{M^q}$ whose nonzero entries are exactly the entries indexed by $E'$ satisfies $\mat x \mat A \neq \mat 0$.
    \end{enumerate}
\end{enumerate}

Now, we replace each nonzero entry in $\mat Q$ with a distinct row of a width-$(h^q/k^q)/(\log M)^{100}$ reduced Vandermonde matrix, and denote by $\mat R$ the resulting matrix. (See Figure~\ref{fig:PQR} for an example.)

Suppose that the starting SAT instance was satisfiable. Let $E'$ be the subset guaranteed to exist by condition (1) from above. Then $E'$ indexes a set of $h^q$ rows in $\mat R$, whose nonzero entries are supported on at most $k^q \cdot (h^q/k^q)/(\log M)^{100} = h^q/(\log M)^{100}$ distinct columns of $\mat R$. We also know that $E'$ indexes a set of $h^q$ rows in $\mat A$, whose nonzero entries are supported on at most $h^q/(\log M)^{100}$ distinct columns. Thus the submatrix of $\big[\mat A \| \mat R\big]$ indexed by $E'$ is highly compressing.

Now suppose that the starting SAT instance was not satisfiable. Combining the properties of $\mat A$ with the properties of $\mat R$, we have the following. Let $\mat x \in \mathbb{Z}^{M^q}$ be any coefficient vector with $h^q$ nonzero entries, and let $E' \subseteq [M]^q$ be the set of all indices for these nonzero entries. Then either $\mat x \mat A \neq \mat 0$, or else the submatrix of $\mat R$ indexed by $E'$ has its nonzero entries supported on at least $(\log M)^{\omega(1)} k^q \cdot (h^q/k^q)/(\log M)^{100} = h^q(\log M)^{\omega(1)}$ distinct columns. In this second case, the submatrix of $\mat R$ indexed by $E'$ is highly expanding.

\begin{figure}[H]
    \centering
    \includegraphics[width=0.8\textwidth]{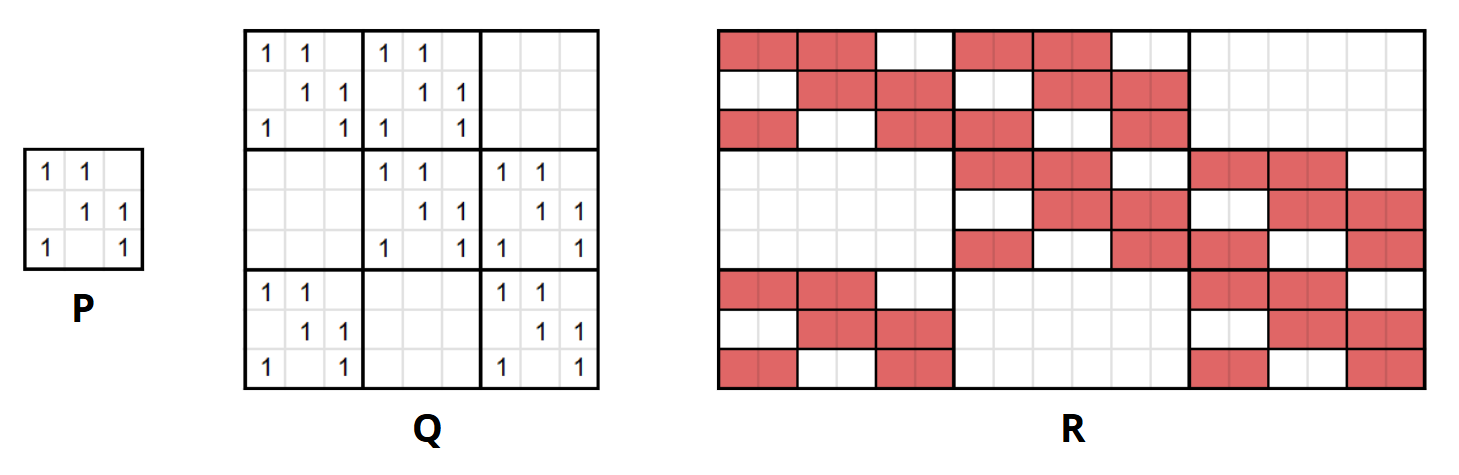}
    \caption{An example sequence of matrices $\mat P, \mat Q$, and $\mat R$. Each (empty) white square is a zero entry, and each red bar is a row taken from a reduced Vandermonde matrix.} 
    \label{fig:PQR}
\end{figure}

The last few steps are as follows: First, we append a width-$h^q/(\log M)^{100}$ reduced Vandermonde matrix $\mat W$ to $\big[\mat A \| \mat R\big]$, to get a matrix $\mat C = \big[\mat A \| \mat R \| \mat W\big]$. This enforces that any nontrivial linear combination of the rows of $\mat C$ which is even able to cancel out $\mat W$ must use at least $h^q/(\log M)^{100}$ of the rows. Combining this with the properties of $\big[\mat A \| \mat R\big]$, and performing several calculations (see Sections \ref{sec:complete} and \ref{sec:sound}), we can guarantee the following:
\begin{enumerate}
    \item Suppose that the starting SAT instance was satisfiable. Then there exists a nonzero vector $\mat x \in \{-1,0,1\}^{M^q}$ with at most $h^q$ nonzero entries, such that $\mat x \mat C = \mat 0$.
    \item Suppose that the starting SAT instance was unsatisfiable. Then for all nonzero vectors $\mat x \in \mathbb{Z}^{M^q}$ such that $\mat x \mat C = \mat 0$, it must be that $\mat x$ has at least $2h^q$ nonzero entries.
\end{enumerate}
In case (1), we have that $\mat x \in \{-1,0,1\}^{M^q}$, not just $\mat x \in \mathbb{Z}^{M^q}$, because in this case we can find a submatrix of $\mat C$ with $h^q$ rows that is compressing \emph{by a factor of $\Omega((\log M)^{100})$.} This allows us to use the pigeonhole principle to find a zero-sum linear combination with small coefficients.

\begin{figure}[H]
    \centering
    \includegraphics[width=0.8\textwidth]{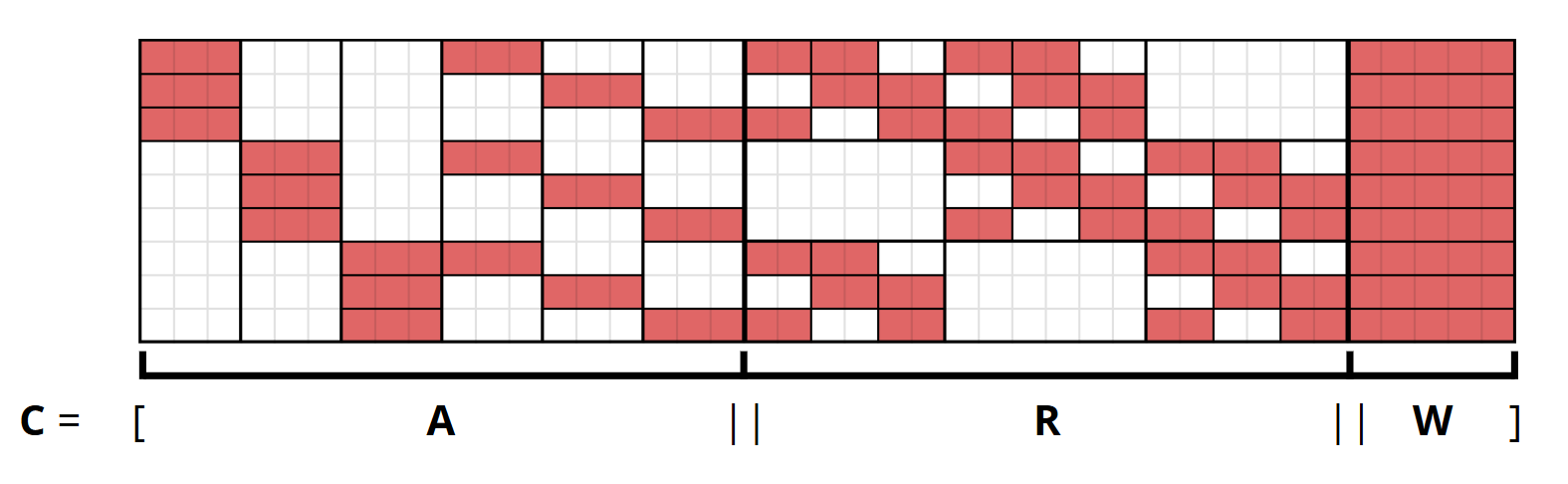}
    \caption{An example matrix $\mat C$. As in Figure \ref{fig:PQR}, each (empty) white square is a zero entry, and each red bar is a row taken from a reduced Vandermonde matrix.} 
    \label{fig:matC}
\end{figure}

To get our final basis matrix $\mat B$, we just horizontally concatenate $2h^q$ copies of $\mat C$, and then append a single $M^q \times M^q$ identity matrix. This ensures that, if the starting SAT instance was satisfiable, there exists a nonzero vector $\mat x$ such that $\mat x \mat B = \big[ \mat 0 \| \mat x \mat I\big]$ has at most $h^q$ nonzero entries, all of which are in $\{-1,1\}$. On the other hand, if the starting SAT instance was unsatisfiable, then every nonzero vector $\mat x$ either satisfies either (i) $\mat x \mat C \neq \mat 0$, in which case the $2h^q$ copies of $\mat C$ guarantee that $\mat x \mat B$ has at least $2h^q$ nonzero entries, or (ii) $\mat x$ itself has at least $2h^q$ nonzero entries, in which case the identity matrix in $\mat B$ ensures $\mat x \mat B$ has at least $2h^q$ nonzero entries.
In either case, we get the $\ell_2$ norm gap that we are aiming for.

\section{Setting Up the Main Reduction}\label{sec:setup}

All vectors throughout the paper are row vectors unless otherwise stated, and all logs/exponentials are taken base $2$ unless otherwise stated. We use $[n]$ to denote the set $\{1, \ldots, n\}$, and $[a, b]$ to denote the set $\{a, a+1, \ldots, b\}$. We index the rows and columns of an $m \times n$ matrix $\mat A$ using members of $[m]$ and $[n]$, respectively. $\mat A_i$ denotes the $i$\textsuperscript{th} row of $\mat A$, and $\mat A_{\cdot, j}$ denotes the $j$\textsuperscript{th} column of $\mat A$. We assume throughout the paper that all matrices have at least one row/column, and all hypergraphs have at least one hyperedge/vertex.

\paragraph{Formal Statement of the Main Theorem.}
Our overall goal is to prove the following theorem:

\begin{theorem}\label{thm:main}
    Let $p \geq 1$ be any constant. There is a deterministic $\exp{n^{O(1/\log\log n)}}$ time reduction from SAT instances of size $n$ to $\gamma$-GapSVP$_p$ on lattices of rank $M$, where $M = \exp{n^{O(1/\log\log n)}}$ and $\gamma = \exp{\Omega((\log M)^{1 - \varepsilon})}$ for \[\varepsilon = \frac{(\log\log\log M)^{O(1)}}{\sqrt{\log\log M}} = o(1).\]
\end{theorem}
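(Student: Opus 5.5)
The proof of Theorem~\ref{thm:main} splits into two stages. First we build a constant-factor (or $1+\Omega(1)$) gap instance of GapSVP$_p$ whose completeness witness lies in $\{-1,0,1\}$. Then we amplify the gap by direct tensoring.

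\textbf{Stage one: the base instance.} We apply the formal version of Theorem~\ref{thm:informalkhotqrdh} (Theorem~\ref{thm:khotqrdh}) to the SAT instance of size $n$. With the PCP parameters chosen so that $M = \exp(n^{O(1/\log\log n)})$, this gives a QRDH instance with indicator matrix $\mat P$, arity $d \approx \log\log M$, $r \approx \log M$ (up to $\log\log\log M$ factors), and $\beta = (\log M)^{-100\log\log M}$. Next we form the $q$-fold VF tensor product $\mat T = [\mat A \| \mat Q]$ with $q \approx (\log\log M)^2$.

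The central lemma is the formal Goal~\ref{goal:VFTP}, i.e.\ Theorem~\ref{thm:vandtensor}. It must hold for all subset sizes $|E'|$ in a suitable range, not only $|E'| = h^q$. The soundness side I would prove by induction on $q$. Fix $E'\subseteq[M]^q$ such that no line $S$ meets $E'$ in between $1$ and $\beta^{1/3}M$ points. Then every nonempty fiber of $E'$ along the last coordinate is large. Apply the contrapositive bound (\ref{eq:techlower}) to each fiber, and use the inductive hypothesis on the projection to the first $q-1$ coordinates. By concavity of $x\mapsto x^{1/d}$, the number of supporting columns is at least $(1-\beta^{1/3})^q(|E'|/M^q)^{1/d}N^q$. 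Substituting $|E'| = h^q$ yields the $(2-o(1))^q k^q$ expansion computed in Section~\ref{sec:newtensor}. If instead some line meets $E'$ in few points, Lemma~\ref{lem:vandindependent} gives $\mat x\mat A^{(S)}\neq\mat 0$. Completeness is the product witness $E_1^{q}$: every line meets it in $0$ or $h$ points, and counting the nonzero Vandermonde columns gives (1b).

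\textbf{Building the matrix $\mat B$.} Following Section~\ref{sec:puttingtogether}, we replace the ones of $\mat Q$ by reduced Vandermonde rows to obtain $\mat R$, append the Vandermonde block $\mat W$, then concatenate $2h^q$ copies of $\mat C=[\mat A\|\mat R\|\mat W]$ together with an identity. For completeness we need a $\{-1,0,1\}$ kernel vector of $\mat C$ supported in $E'$. The $h^q$ rows span a space of dimension at most $h^q/(\log M)^{100}$, so pigeonhole over $\{0,1\}^{E'}$ modulo the bounded-entry images gives a difference vector with entries in $\{-1,0,1\}$. This needs care: an exact equality of integer sums requires bounding the entries of the Vandermonde rows, by the prime $a$, and checking that $2^{h^q}$ exceeds the number of possible sums. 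The same pigeonhole argument must also ensure that all $h^q$ coordinates are nonzero, or else handle a smaller support. Soundness is the case analysis already sketched. Finally, $\ell_0$ counts translate into $\ell_p$ norms: completeness gives norm at most $(h^q)^{1/p}$, soundness gives at least $(2h^q)^{1/p}$.

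\textbf{Stage two: amplification.} The completeness vectors are $\{-1,0,1\}$-valued. Following the tensoring framework of \cite{khot2005hardness,haviv2007tensor} as adapted in \cite{hair2025svp}, we take the $t$-fold tensor power of the lattice and get gap $\gamma_0^{t}$ at rank $M^{O(t)}$. Choosing $t$ and balancing $q$, $d$, and the $\log\log\log$ losses gives $\gamma = \exp((\log M)^{1-\varepsilon})$ with the stated $\varepsilon$.

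\textbf{Main obstacle.} The hardest step is the soundness of Theorem~\ref{thm:vandtensor} for arbitrary, non-product $E'$. The induction must carry the $(1-\beta^{1/3})$ losses and the size thresholds through all $q$ coordinates, and fibers that are small but nonzero must be charged to some $\mat A^{(S)}$ at every level. The second delicate point is tensor stability in soundness: a short vector in the tensored lattice must not arise from non-product combinations. I would handle this with the ``additional requirements'' mentioned in Section~\ref{sec:tech}, namely showing that soundness lattice vectors have many nonzero coordinates, which is an $\ell_0$ statement that tensors well.
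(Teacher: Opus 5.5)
Your overall pipeline matches the paper's: Theorem~\ref{thm:khotqrdh}, the $(t,q)$-VF tensor product, $\mat C=[\mat A\|\mat R\|\mat W]$ with $2h$ copies plus an identity, a pigeonhole witness for completeness, and tensoring the $\ell_0$ instance before passing to $\ell_p$. Three side remarks:
\begin{itemize}
\item A smaller support in the completeness witness is harmless, since only an upper bound on $\|\mat x\|_0$ is needed.
\item The $\ell_0$ minimum is exactly multiplicative under tensoring, so nothing from \cite{khot2005hardness,haviv2007tensor} is required.
\item The base instance has $M=\exp(2^{O(\sqrt{\log n}\log^4\log n)})$. The bound $\exp(n^{O(1/\log\log n)})$ is the rank only after the $n^{1/\log\log n}$-fold tensoring.
\end{itemize}

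The genuine gap is in the step you flag as hardest: your induction for Theorem~\ref{thm:vandtensor} does not go through for non-product $E'$.
\begin{itemize}
\item \emph{Non-uniform fibers.} Fibering along the last coordinate, the fibers $F_{\mat e'}$ over points $\mat e'$ of the projection $\pi(E')$ can have any size between the legality threshold and $M$.
\item \emph{What the inductive hypothesis gives.} Applied to $\pi(E')$, it controls only how many columns $\mat v'\in[N]^{q-1}$ are touched. It says nothing about how the large fibers are distributed among those columns.
\item \emph{Resulting loss.} What you can actually derive uses the smallest fiber in place of the average one. That loses up to $(f_{\max}/f_{\min})^{1/d}\approx\beta^{-1/(3d)}$ per coordinate. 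With your parameters this is about $(\log M)^{100/3}$, against a gain of only $r^{1/d}\approx 2$.
\item \emph{Concavity points the wrong way.} Power mean gives $\sum_{i\le K}x_i^{1/d}\le K^{1-1/d}(\sum_i x_i)^{1/d}$, which is an upper bound. Subadditivity gives a lower bound but discards the count $K$ of touched columns.
\item \emph{A layer-cake repair also fails.} Applying the hypothesis to each superlevel set $\{\mat e':|F_{\mat e'}|>s\}$ would cope with non-uniform fibers, but superlevel sets of a legal set need not be legal. For $q=2$, take $E'=(A\times B)\cup(\{x_0\}\times(B\cup C))\cup(A'\times C)$ with $A\cap A'=\emptyset$, $x_0\notin A\cup A'$, $B\cap C=\emptyset$, and $|A|=|A'|=|B|=|C|=M/t$. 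This set is $(t,2)$-legal, yet the only first coordinate whose fiber exceeds $M/t$ is $x_0$.
\end{itemize}

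The missing idea is the paper's $d$-th-power argument. It proves the contrapositive $|E'|\le\frac{M^q}{N^{dq}(1-\beta t)^q}|V'|^d$ for $V'=N^{(q)}_{\mat P}(E')$, as follows.
\begin{itemize}
\item Fiber along the \emph{first} sub-edge. Each $Y^{(q)}_{e_1}(E')$ is $(t,q-1)$-legal (Claim~\ref{claim:slicelegal}) and has support inside $\bigcap_{v_1\in N^{(1)}_{\mat P}(e_1)}W^{(q)}_{v_1}(V')$. This gives $|E'|\le\Gamma\sum_{e_1}|\bigcap_{v_1}W^{(q)}_{v_1}(V')|^d$ with $\Gamma=M^{q-1}/(N^{d(q-1)}(1-\beta t)^{q-1})$.
\item Expand each $d$-th power as a count of $d$-tuples $(\mat v^{(1)},\ldots,\mat v^{(d)})\in([N]^{q-1})^d$ whose $X$-slices all contain $N^{(1)}_{\mat P}(e_1)$, then swap the sums.
\item Bound the number of $e_1$ supported on $\bigcap_i X^{(q)}_{\mat v^{(i)}}(V')$ by the QRDH expansion, using $|\bigcap_i X_i|\le\prod_i|X_i|^{1/d}$.
\item Absorb the additive $\beta M$ into the factor $1/(1-\beta t)$ using Claim~\ref{claim:minnonzero}: legality forces every nonempty $X$-slice to have at least $(1/t-\beta)^{1/d}N$ elements.
\item The sum then factors as $(\sum_{\mat v}|X^{(q)}_{\mat v}(V')|)^d=|V'|^d$.
\end{itemize}
Because the hypothesis enters through $d$-th powers and expansion is applied to intersections of $d$ slices, arbitrary non-uniform slice sizes are handled without ever needing level sets to be legal. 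This is exactly what your fiber/projection/concavity sketch lacks.
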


Since DTIME$\left(\exp{n^{O(1/\log\log n)}}\right) \subseteq$ $\cap_{\delta > 0}$DTIME$\left(\exp{n^\delta}\right)$, this gives hardness of approximation for the shortest vector problem in every finite $\ell_p$ norm assuming NP $\not\subseteq$ $\cap_{\delta > 0}$DTIME$\left(\exp{n^\delta}\right)$.

\paragraph{An Intermediate Lattice Problem.} We establish Theorem \ref{thm:main} by first showing constant factor hardness of approximation for a specific type of shortest vector problem:

\begin{theorem} \label{thm:constantgap}
    There is a deterministic $\exp{2^{O(\sqrt{\log n}(\log\log n)^{O(1)})}}$ time reduction from SAT instances of size $n$ to the following problem, where $M = \exp{2^{O(\sqrt{\log n}(\log\log n)^{O(1)})}}$ and $N = \exp{2^{O(\sqrt{\log n}(\log\log n)^{O(1)})}}$. Given a matrix $\mat B \in \mathbb{Z}^{M \times N}$ and a threshold $h$, distinguish between the following two cases:
    \begin{enumerate}
        \item There exists a nonzero vector $\mat x \in \mathbb{Z}^M$ such that $\|\mat x\mat B\|_0 \leq h$, \emph{and additionally} $\mat x \mat B \in \{-1,0,1\}^N$.
        \item For all nonzero vectors $\mat x \in \mathbb{Z}^M$, we have $\|\mat x \mat B\|_0 \geq 2h$.
    \end{enumerate}
\end{theorem}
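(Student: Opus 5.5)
The plan follows the pipeline of the technical overview, with parameters matched to the stated running time. First, apply Theorem \ref{thm:informalkhotqrdh} (formally Theorem \ref{thm:khotqrdh}) to the SAT instance. This yields a QRDH instance $H=(V,E)$ with indicator matrix $\mat P\in\{0,1\}^{M_0\times N_0}$, arity $d\approx\log\log M_0$, $r\approx\log M_0$, and $\beta=(\log M_0)^{-\Theta(\log\log M_0)}$, up to $\log\log\log M_0$ factors. Choosing the PCP parameters so that $M_0=\exp{2^{O(\sqrt{\log n}(\log\log n)^{O(1)})}}$ is consistent with the stated bounds. Set $k=N_0/r$ and $h_0=r^{-(d-1)}M_0$. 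From case (2) of QRDH, derive the contrapositive expansion bound (\ref{eq:techlower}): every hyperedge set of size $h'>\beta^{1/3}M_0$ touches at least $(1-\beta^{1/3})(h'/M_0)^{1/d}N_0$ vertices. This follows by contradiction, taking $V'$ to be the touched vertex set.

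Next, form the $q$-fold VF tensor product $\mat T=[\mat A\|\mat Q]$ (Definition \ref{defn:VFTP}) with $q\approx(\log\log M_0)^2$, and invoke Theorem \ref{thm:vandtensor} to obtain the properties of Goal \ref{goal:VFTP}. We need these not only at $|E'|=h_0^q$ but for every size in a range around $h_0^q$. The expansion in case (2a) should then read that $E'$ touches at least $(1-o(1))\,|E'|^{1/d}(N_0/M_0^{1/d})^q$ columns. Completeness (1a, 1b) comes from $E'=E_0^{q}$, where $E_0$ is the dense witness set. In $E_0^q$, every line $S$ meets $E'$ in $0$ or $h_0\gg\beta^{1/3}M_0$ rows. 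Hence each block $\mat A^{(S)}$ contributes at most $\beta^{1/3}M_0$ columns per nonempty line. Summing over the $\le q h_0^{q-1}$ nonempty lines gives about $q\beta^{1/3}M_0h_0^{q-1}=h_0^q\cdot q\beta^{1/3}r^{d-1}$ columns, which is at most $h_0^q/(\log M)^{100}$.

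Then build $\mat R$ by replacing each nonzero entry in column $j$ of $\mat Q$ with a distinct row of a width-$w$ reduced Vandermonde matrix, with $w=(h_0/k)^q/(\log M)^{100}$ and a prime $a>M_0^q$ found deterministically. Form $\mat C=[\mat A\|\mat R\|\mat W]$, where $\mat W$ is a width-$t$ reduced Vandermonde matrix with $t=h_0^q/(\log M)^{100}$. The final matrix is $\mat B=[\mat C\|\cdots\|\mat C\|\mat I]$ with $2h$ copies of $\mat C$ and threshold $h:=h_0^q$.

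Completeness: the rows of $\mat C$ indexed by $E'$ have their support in at most $3h/(\log M)^{100}$ columns, with entries bounded by $a$. Consider the $2^{h}$ vectors $\mat x\in\{0,1\}^{E'}$. The sums $\mat x\mat C$ take at most $(ha)^{3h/(\log M)^{100}}$ values, which is fewer than $2^h$ because $\log(ha)=O(q\log M_0)\ll(\log M)^{100}$. By pigeonhole, two such vectors collide, and their difference is a nonzero $\{-1,0,1\}$ vector in the kernel. Then $\mat x\mat B=[\mat 0\|\mat x]$ has $\ell_0\le h$ and entries in $\{-1,0,1\}$.

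Soundness: let $\mat x\neq0$ with support $E'$. If $\mat x\mat C\neq\mat 0$, the $2h$ copies give $\|\mat x\mat B\|_0\ge2h$. If $|E'|\ge 2h$, the identity block gives the same bound. Otherwise $t\le|E'|<2h$, since cancelling $\mat W$ needs more than $t$ rows by Lemma \ref{lem:vandindependent}. In this regime, $\mat x\mat A=0$ forces (2a), so $E'$ touches at least $C\cdot(|E'|/h)^{1/d}\,k^q\,2^{\Omega(q)}$ columns of $\mat Q$. Now use a local fact about $\mat R$: restricted to a column $j$ of $\mat Q$, the rows of $E'$ containing $j$ cancel in the $w$ Vandermonde columns only if at least $w$ of them are used. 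So each touched column carries at least $w$ rows of $E'$. Each row has $d^q$ nonzeros, so $|E'|\,d^q\ge w\cdot\#\text{cols}\ge w\,k^q 2^{\Omega(q)}(|E'|/h)^{1/d}$. With $|E'|\le2h$ and $q=(\log\log M_0)^2$, the factor $2^{\Omega(q)}/d^q$ exceeds $(\log M)^{100}\cdot 2$ only if the expansion per factor beats $d$. This is exactly why the paper needs initial expansion $d^{\omega(1)}$, and I would secure it by tuning the $\log\log\log$ factors of $r$ relative to $d$. The result is a contradiction, so $\mat x\mat C\neq0$.

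I expect the main obstacle to be this soundness bookkeeping. Specifically, we must make Theorem \ref{thm:vandtensor}'s guarantee hold uniformly over the range of sizes $[t,2h]$ and account for the $d^q$ row-weight loss. The parameters $r,d,q,\beta$ have to be balanced so that $r^{q/d}\gg d^q(\log M)^{100}$ while still $q\beta^{1/3}r^{d}\ll(\log M)^{-100}$. My first attempt would be to take $r=2^{\Theta(d\log d)}$ inside Theorem \ref{thm:khotqrdh}.
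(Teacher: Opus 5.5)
Your proposal is essentially the paper's proof. It uses the same pipeline (Theorem~\ref{thm:khotqrdh}, the $(t,q)$-VF tensor product, Vandermonde-expanded $\mat R$, the gadget $\mat W$, $2h$ copies of $\mat C$ plus an identity), pigeonhole completeness on $E_0^q$, and soundness via $\mat W$, Lemma~\ref{lem:mustbelegal} with Theorem~\ref{thm:vandtensor}, and averaging over a column of $\mat Q$, with only cosmetic parameter differences: the paper uses block width $M/t\approx \alpha r^{1-d}M/n$, compression factor $n$ instead of $(\log M)^{100}$, $q=\Theta(\log n/\log\log n)$, and a single fixed $\delta$ together with $|E'|\ge h/n$.

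The obstacles you anticipate are not real. Theorem~\ref{thm:vandtensor} holds for every $\delta\in[0,1]$, and the formal Theorem~\ref{thm:khotqrdh} already gives $r^{1/d}=2^{\Omega(\log^2\log n)}=d^{\omega(1)}$, so no retuning is needed (your $r=2^{\Theta(d\log d)}$ would only be borderline). You do need to keep $\alpha=0.5$ inside $h_0$, though: with your choice of $q$, losing a factor $2^q$ would swamp the $(\log M)^{100}$ compression in completeness.
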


In particular, case (1) enforces that the lattice vector is not only of small norm but also in $\{-1,0,1\}^N$. Theorem \ref{thm:constantgap} implies Theorem \ref{thm:main} via direct tensoring.

\begin{proof}[Proof of Theorem \ref{thm:main} using Theorem \ref{thm:constantgap}.]
    Fix any constant $p \geq 1$, and use the reduction from Theorem \ref{thm:constantgap} to get an integer matrix $\mat B \in \mathbb{Z}^{M \times N}$, where $1 \leq M \leq \exp{2^{O(\sqrt{\log n}(\log\log n)^{O(1)})}}$ and $1 \leq N \leq \exp{2^{O(\sqrt{\log n}(\log\log n)^{O(1)})}}$. Now let $\mat B' = \mat B^{\otimes n^{1/\log\log n}}$, i.e., the $n^{1/\log\log n}$-fold tensor product of $\mat B$ with itself.\footnote{We assume without loss of generality that $n^{1/\log\log n}$ is an integer.} This is the matrix of basis vectors for our final SVP instance.
    
    Notice that $\mat B' \in \mathbb{Z}^{M' \times N'}$, where
    \[M' = \exp{n^{1/\log\log n}\log M} \leq \exp{n^{O(1/\log\log n)}}\]
    and
    \[N' = \exp{n^{1/\log\log n}\log N} \leq \exp{n^{O(1/\log\log n)}}.\]
    So the rank of the lattice generated by $\mat B'$, and the time to construct $\mat B'$, both satisfy the bound in the theorem statement.
    
    The number of nonzero entries in the sparsest vector of a lattice is multiplicative under tensoring, so cases (1) and (2) from Theorem \ref{thm:constantgap} become:
    \begin{enumerate}
        \item There exists a nonzero vector $\mat x' \in \mathbb{Z}^{M'}$ such that $\|\mat x'\mat B'\|_0 \leq h^{n^{1/\log\log n}}$. \label{case:new1}
        \item For all nonzero vectors $\mat x' \in \mathbb{Z}^{M'}$, we have $\|\mat x'\mat B'\|_0 \geq 2^{n^{1/\log\log n}}h^{n^{1/\log\log n}}$. \label{case:new2}
    \end{enumerate}
    The first case also maintains that $\mat x'\mat B' \in \{-1,0,1\}^{N'}$.
    
    For all finite $p \geq 1$, the $\ell_p$ norm of an integer vector $\mat w'$ having $x$ nonzero entries is at least $x^{1/p}$, and this is realized with equality when the entries of $\mat w'$ are in $\{-1,0,1\}$. So cases (1) and (2) from above have the following dichotomy, where we define
    \[h' = h^{n^{1/\log\log n}/p},\]
    and change the parameters slightly to enforce strict inequality for case (2):\footnote{We assume without loss of generality that $h \geq 2$ and hence $h' \geq 2$; otherwise the problem is trivial.}
    \begin{enumerate}
        \item There exists a nonzero vector $\mat x' \in \mathbb{Z}^{M'}$ such that $\|\mat x'\mat B'\|_p \leq h'$.
        \item For all nonzero vectors $\mat x' \in \mathbb{Z}^{M'}$, we have $\|\mat x'\mat B'\|_p > 2^{n^{1/\log\log n}/p - 1}h'$.
    \end{enumerate}
    All that remains is to lower bound the multiplicative gap $\gamma = 2^{n^{1/\log\log n}/p - 1}$ in terms of the rank $M'$.

    \begin{claim} \label{claim:calculategamma}
        Let $p \geq 1$ be any constant, let $M$ be a parameter such that $1 \leq M \leq \exp{2^{O(\sqrt{\log n}(\log\log n)^{O(1)})}}$, and let $M' = \exp{n^{1/\log\log n}\log M}$. Then
        \[2^{n^{1/\log\log n}/p - 1} \geq \exp{\Omega\left((\log M')^{1 - (\log\log\log M')^{O(1)}/\sqrt{\log\log M'}}\right)}.\]
    \end{claim}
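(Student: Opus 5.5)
Write $t = n^{1/\log\log n}$, so $\log M' = t\log M$ and the left-hand side is $2^{t/p-1} = \exp{\Omega(t)}$ for constant $p$ (in the regime $t \geq 2p$, which holds once $n$ is large; small $n$ is absorbed into the constant). The claim therefore reduces to showing
\[t \geq (t\log M)^{1-\varepsilon'}, \qquad \varepsilon' = \frac{(\log\log\log M')^{O(1)}}{\sqrt{\log\log M'}}.\]
Taking logs, this is equivalent to $\varepsilon'\log(t\log M) \geq \log\log M$, i.e.\ $\varepsilon'\log\log M' \geq \log\log M$. The approach is to bound $\log\log M$ from above in terms of $n$, bound $\log\log M'$ in terms of $n$ from both sides, and then substitute.

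First, the hypothesis on $M$ gives $\log\log M \leq O(\sqrt{\log n}\,(\log\log n)^{c})$ for some constant $c$. We may assume $M \geq 4$, since otherwise $\log\log M \le 1$ and the claim is trivial. Second, $\log\log M' = \log t + \log\log M$, where $\log t = \log n/\log\log n$. Using the bound on $\log\log M$, this gives
\[\frac{\log n}{\log\log n} \leq \log\log M' \leq \frac{2\log n}{\log\log n}\]
for large $n$. Consequently $\log\log\log M' = \Theta(\log\log n)$, and $\log n \leq \log\log M' \cdot O(\log\log\log M')$.

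Now it suffices to have $\varepsilon'\log\log M' = (\log\log\log M')^{C}\sqrt{\log\log M'} \geq \log\log M$, where $C$ is the constant hidden in the $O(1)$ exponent. Substituting the bounds above, $\sqrt{\log\log M'} \geq \sqrt{\log n}/\sqrt{\log\log n}$, so the left side is at least $\Omega\bigl((\log\log n)^{C-1/2}\sqrt{\log n}\bigr)$. This dominates $O\bigl(\sqrt{\log n}(\log\log n)^{c}\bigr)$ once $C > c + 1/2$ and $n$ is large enough to swallow the constants, which is permitted since $C$ is ours to choose.

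The main obstacle is bookkeeping rather than any real difficulty. The exponent $O(1)$ in the claim must be chosen after the $(\log\log n)^{O(1)}$ exponent coming from Theorem~\ref{thm:constantgap}. The $\Omega$ in the claim absorbs both the factor $1/p$ and the $-1$. Degenerate small cases ($M$ tiny, or $n$ bounded) must be handled by adjusting constants. I would verify the direction of each inequality carefully, since we need an upper bound on $\log\log M$ but a lower bound on $\log\log M'$.
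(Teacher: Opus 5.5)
Your proof is correct and follows essentially the same route as the paper. Both reduce the claim to $(\log\log\log M')^{C}\sqrt{\log\log M'} \geq \log\log M$; the paper writes this as $\log\log M' - \sqrt{\log\log M'}(\log\log\log M')^{c+2} \leq \log n/\log\log n$. Both then close it using $\log\log M' \geq \log n/\log\log n$ and $\log\log M \leq \sqrt{\log n}(\log\log n)^{c}$, with the paper taking $C = c+2$ where you allow any $C > c + 1/2$.
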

    We defer the calculations to Appendix \ref{app:proof:calculategamma}.
\end{proof}

\section{Vandermonde Fortified Tensor Product} \label{sec:VFtensor}

Our main technical contribution is a new type of tensor product, dubbed the \emph{Vandermonde fortified tensor product}, which interacts synergistically with a specific type of densest sub-hypergraph problem. This problem is just a different way to phrase the ``inner verifier'' from Khot's quasi-random PCP \cite{khot2006ruling}. It will be the starting point for our proof of Theorem \ref{thm:constantgap}.

\begin{problemnormal}[Quasi-Random Densest Sub-Hypergraph (QRDH), implicit in~\cite{khot2006ruling}] \label{prob:qrdh}
    Given a hypergraph $H = (V, E)$ of arity $d$ with $\vert V\vert = N$ and $\vert E\vert = M$, along with a parameter $r \geq 1$ and parameters $\alpha, \beta \in [0, 1]$, distinguish between the following two cases:
    \begin{enumerate}
        \item There exists a vertex subset $V' \subseteq V$ of size at most $N/r$ that fully contains at least $\alpha(1/r)^{d-1} M$ hyperedges.
        \item Every vertex subset $V' \subseteq V$ fully contains at most $(\vert V'\vert/N)^d M + \beta M$ hyperedges.
    \end{enumerate}
    As shorthand, we refer to the problem as $(N, M, d, r, \alpha, \beta)$-QRDH. We refer to $r$ and $\alpha$ as the \emph{completeness parameters}, and $\beta$ as the \emph{soundness parameter}.
\end{problemnormal}

\paragraph{Tensor Products and Indicator Matrices.} We are interested in computing tensor products of hypergraph indicator matrices.

\begin{definition}[Hypergraph Indicator Matrix]
    Given a hypergraph $H$ of arity $d$ with $N$ vertices and $M$ hyperedges, its indicator matrix is a matrix $\mat P \in \{0,1\}^{M \times N}$. There is one column for every vertex, one row for every hyperedge, and each row has $d$ nonzero entries to indicate the endpoints of the hyperedge.
\end{definition}

For an index $e \in [M]$, we use the notation $N^{(1)}_{\mat P}(e) \subseteq [N]$ to denote the set of $d$ columns supporting the nonzero entries of row $e$ in $\mat P$. For a subset $E' \subseteq [M]$, we use $N^{(1)}_{\mat P}(E')$ to denote the union of $N^{(1)}_{\mat P}(e)$ for each $e \in E'$.

We denote the $q$-fold tensor product of $\mat P$ with itself as $\mat Q = \mat P^{\otimes q}$. $\mat Q$ has $N^q$ columns, which we index using members of $[N]^q$; these should be interpreted as $q$-tuples of columns from the original matrix $\mat P$. Similarly, $\mat Q$ has $M^q$ rows, which we index using members of $[M]^q$ and interpret as $q$-tuples of rows from the original matrix $\mat P$.

Given a row index $\mat e \in [M]^q$, we use $N^{(q)}_{\mat P}(\mat e)$ to denote the set of columns supporting the nonzero entries of row $\mat e$ in $\mat P^{\otimes q}$. More precisely, for each $\mat e = (e_1, \ldots, e_q) \in [M]^q$, we have that
\[N^{(q)}_{\mat P}(\mat e) = N^{(1)}_{\mat P}(e_1) \times \ldots \times N^{(1)}_{\mat P}(e_q) \subseteq [N]^q.\]
Note that $\vert N^{(q)}_{\mat P}(\mat e)\vert = d^q$ for all $\mat e \in [M]^q$. Given a subset $E' \subseteq [M]^q$, we use $N^{(q)}_{\mat P}(E')$ to denote the union of $N^{(q)}_{\mat P}(\mat e)$ for all $\mat e \in E'$.

We can view indices $\mat e = (e_1, \ldots, e_q) \in [M]^q$ as individual hyperedges in the hypergraph represented by $\mat Q$. We refer to each of $e_1, \ldots, e_q$ as the $1$\textsuperscript{st}, \ldots, $q$\textsuperscript{th} sub-edges of $\mat e$, respectively. Similarly, we can view an index $\mat v = (v_1, \ldots, v_q) \in [N]^q$ as an individual vertex in the hypergraph represented by $\mat Q$, and each of $v_1, \ldots, v_q$ is a sub-vertex. For $q \geq 1$, we say that an index (hyperedge) $\mat e \in [M]^q$ is \emph{supported} on a set of columns (set of vertices) $V' \subseteq [N]^q$ if $N_{\mat P}^{(q)}(\mat e) \subseteq V'$. An analogous definition holds for subsets of indices (hyperedges) being supported on subsets of columns (subsets of vertices).

\paragraph{Vandermonde Gadgets.} One of our building blocks is a specific type of Vandermonde matrix over the integers.

\begin{definition}[Reduced Vandermonde Matrix]
    Let $a, b$ be positive integers such that $a > b$ and $a$ is prime. We define an $(a, b)$ reduced Vandermonde matrix $\mat V \in \mathbb{Z}^{(a - 1) \times b}$ as $\mat V_{i, j} = i^{j-1} \mod a$.
\end{definition}

Observe that reduced Vandermonde matrices can be computed deterministically in time $a^{O(1)}$. We have the following useful property:
\begin{lemma} \label{lem:vandindependent}
    Every $b \times b$ submatrix of an $(a, b)$ reduced Vandermonde matrix is of full rank.
\end{lemma}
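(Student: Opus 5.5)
Take any $b$ distinct rows $i_1, \ldots, i_b \in [a-1]$ of the $(a,b)$ reduced Vandermonde matrix $\mat V$, and let $\mat V'$ be the $b \times b$ submatrix they form. We will show that $\det(\mat V') \neq 0$ over $\mathbb{Z}$. The plan is to reduce modulo the prime $a$: since $\det$ is a polynomial with integer coefficients in the entries, we have $\det(\mat V') \bmod a = \det(\mat V' \bmod a)$, computed over the field $\mathbb{F}_a$. It therefore suffices to show that $\mat V' \bmod a$ is invertible over $\mathbb{F}_a$, because a nonzero residue mod $a$ forces the integer determinant itself to be nonzero.

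Next, observe that the entries of $\mat V'$ reduce mod $a$ to $(\mat V')_{s,j} \equiv i_s^{\,j-1} \pmod a$. So $\mat V' \bmod a$ is exactly the classical Vandermonde matrix over $\mathbb{F}_a$ with nodes $i_1, \ldots, i_b \in \mathbb{F}_a$. Its determinant is
\[\prod_{1 \leq s < t \leq b} (i_t - i_s) \pmod a.\]
The nodes are distinct integers in $[1, a-1]$, so their pairwise differences lie strictly between $-(a-1)$ and $a-1$ and are nonzero. None of these differences is therefore divisible by $a$. Because $a$ is prime, $\mathbb{F}_a$ has no zero divisors, so the product is nonzero mod $a$.

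Combining the two steps, $\det(\mat V') \not\equiv 0 \pmod a$, hence $\det(\mat V') \neq 0$ over $\mathbb{Z}$ (and over $\mathbb{Q}$), and $\mat V'$ has full rank.

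The only point requiring care is justifying that reducing entries mod $a$ before taking the determinant gives the same result mod $a$. This holds because reduction is a ring homomorphism $\mathbb{Z} \to \mathbb{F}_a$. The hypothesis $a > b$ is only needed so that $a - 1 \geq b$ and $b$ distinct rows actually exist. I do not expect a genuine obstacle.
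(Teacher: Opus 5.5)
Your proof is correct and takes the same route as the paper. Both reduce modulo the prime $a$ to get a genuine Vandermonde matrix over $\mathbb{F}_a$ with distinct nodes, then lift back: a nonzero determinant mod $a$ forces a nonzero integer determinant, which the paper phrases as ``casting from $\mathbb{F}_a$ to $\mathbb{Z}$ cannot introduce new linear dependencies''. Your explicit use of the ring homomorphism $\mathbb{Z} \to \mathbb{F}_a$ and the product-of-differences formula simply fills in the details the paper cites as well-known.
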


\begin{proof}
    Well-known; see for example \cite{horn1994topics}. The idea is that an $(a - 1) \times b$ Vandermonde matrix over $\mathbb{F}_a$ has every $b \times b$ submatrix being of full rank, and casting from $\mathbb{F}_a$ to $\mathbb{Z}$ cannot introduce new linear dependencies.
\end{proof}

\paragraph{Defining the VF Tensor Product.} With the above tools at hand, we are ready to define the Vandermonde fortified tensor product. See Figure \ref{fig:VFTP} for an example. 

\begin{definition}[$(t, q)$-VF Tensor Product] \label{defn:VFTP}
    Given a matrix $\mat P \in \{0,1\}^{M \times N}$, along with positive integer parameters $t$ and $q$ such that $t$ divides $M$, the $(t, q)$-VF tensor product of $\mat P$ is constructed as follows.
    \begin{enumerate}
        \item Let $\mat Q = \mat P^{\otimes q}$ be the $q$-fold tensor product of $\mat P$ with itself.
        \item Define a collection $\mathcal{S}$ of $qM^{q-1}$ subsets of $[M]^q$:
        \[\mathcal{S} = \Big\{\{(e_1, \ldots, e_{\ell-1}, e_\ell', e_{\ell+1}, \ldots, e_q) : e_\ell' \in [M]\} \quad : \quad \ell \in [q] \text{ and } e_1, \ldots, e_{\ell-1}, e_{\ell+1}, \ldots, e_q \in [M]\Big\}.\]
        That is, each set in $\mathcal{S}$ is constructed by choosing a designated coordinate $\ell$, fixing the values $e_1, \ldots, \\ e_{\ell-1}, e_{\ell+1}, \ldots, e_q$ of all other coordinates, and allowing the designated coordinate to vary over all choices in $[M]$.
        \item Let $a$ be a prime satisfying $M^q < a < 3M^q$,\footnote{We use the range $(M^q, 3M^q)$ and not $(M^q, 2M^q)$ because of the edge case where $M = 1, q = 1$.} and let $\mat V$ be an $(a, M/t)$ reduced Vandermonde matrix. Index the first $M^q$ rows of $\mat V$ using distinct elements of $[M]^q$. Now for each $S \in \mathcal{S}$, define an $M^q \times M/t$ integer matrix $\mat A^{(S)}$ as follows. For all $\mat e \in [M]^q$, if $\mat e \in S$ then set $\mat A^{(S)}_{\mat e} = \mat V_{\mat e}$, and otherwise set $\mat A^{(S)} = \mat 0^{M/t}$.
        \item Denote the horizontal concatenation of all $\mat A^{(S)}$ matrices as $\mat A \in \mathbb{Z}^{M^q \times qM^q/t}$. The $(t,q)$-VF tensor product of $\mat P$ is $\mat T = \big[\mat A \| \mat Q\big]$.
    \end{enumerate}
\end{definition}

The synergy between VF tensor products and instances of QRDH is formalized below. Roughly speaking, VF tensor products allow us to amplify the gap between cases (1) and (2) in the QRDH problem, without creating spurious vertex subsets in case (2) that contain too many hyperedges.

\begin{theorem} \label{thm:vandtensor}
    Let $H = (V, E)$ be a hypergraph of arity $d$ with $N$ vertices and $M$ hyperedges, let $\beta \in [0,1]$, and let $t$ and $q$ be positive integer parameters. Suppose that
    \begin{enumerate}
        \item $t$ divides $M$,
        \item $1/t > \beta$, and
        \item (Expansion) Every vertex subset $V' \subseteq V$ fully contains at most $(\vert V'\vert/N)^d M + \beta M$ hyperedges.
    \end{enumerate} Denote $H$'s indicator matrix as $\mat P \in \{0,1\}^{M \times N}$, and let $\mat T = \big[\mat A \|\mat Q\big]$ be the $(t, q)$-VF tensor product of $\mat P$.

    The following holds for all vectors $\mat x \in \mathbb{Z}^{M^q}$ such that $\mat x \mat A = \mat 0$, and all parameters $\delta \in [0,1]$. Let $E'$ be the subset of $[M]^q$ containing all indices for the nonzero entries of $\mat x$. If $\vert E'\vert > \frac{M^q\delta^d}{(1 - \beta t)^q}$, then the rows of $\mat Q$ indexed by $E'$ have their nonzero entries supported on more than $N^q\delta$ distinct columns.
\end{theorem}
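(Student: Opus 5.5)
The plan is to prove the contrapositive in a form that survives induction on $q$. For a set $F\subseteq[M]^q$, write $\rho(F)=\vert F\vert/M^q$ for its edge density and $\nu(F)=\vert N^{(q)}_{\mat P}(F)\vert/N^q$ for the density of its column support. Call $F$ \emph{line-closed} if for every $S\in\mathcal S$ we have $\vert F\cap S\vert=0$ or $\vert F\cap S\vert>M/t$. The claim to establish is
\[
\rho(F)\le \frac{\nu(F)^d}{(1-\beta t)^q}\qquad\text{for every line-closed } F\subseteq[M]^q .
\]
Given this, if $\vert E'\vert>M^q\delta^d/(1-\beta t)^q$ then $\nu(E')^d>\delta^d$, so $\vert N^{(q)}_{\mat P}(E')\vert>N^q\delta$, which is the theorem.

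\textbf{Step 1: $\mat x\mat A=\mat 0$ forces $E'$ to be line-closed.} Fix $S\in\mathcal S$. The product $\mat x\mat A^{(S)}$ is a linear combination of rows of the $(a,M/t)$ reduced Vandermonde matrix $\mat V$ indexed by $E'\cap S$, and every coefficient in it is nonzero. Suppose $0<\vert E'\cap S\vert\le M/t$. By Lemma~\ref{lem:vandindependent}, any at most $M/t$ rows of $\mat V$ are linearly independent, since they extend to an $(M/t)\times(M/t)$ full-rank submatrix. The combination is therefore nonzero, contradicting $\mat x\mat A=\mat 0$. So $E'$ is line-closed.

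\textbf{Step 2: the base case $q=1$.} Here $\mathcal S=\{[M]\}$, so a line-closed $F$ is either empty or has $\vert F\vert>M/t$. Take $V'=N^{(1)}_{\mat P}(F)$; it fully contains every edge of $F$. The expansion hypothesis then gives
\[
\vert F\vert\le(\vert V'\vert/N)^dM+\beta M<(\vert V'\vert/N)^dM+\beta t\vert F\vert .
\]
Since $\beta t<1$, rearranging yields $(1-\beta t)\rho(F)\le\nu(F)^d$.

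\textbf{Step 3: the inductive step $q-1\to q$.} Write each $\mat e\in F$ as $(\mat b,e_q)$ with $\mat b\in[M]^{q-1}$, and let $G_{\mat b}=\{e_q:(\mat b,e_q)\in F\}$. Each $G_{\mat b}$ is the intersection of $F$ with a line in coordinate $q$, so it is empty or has size $>M/t$. The base case gives $\vert G_{\mat b}\vert/M\le n_{\mat b}^d/(1-\beta t)$, where $n_{\mat b}=\vert N^{(1)}_{\mat P}(G_{\mat b})\vert/N$. Summing over $\mathbf b$,
\[
\rho(F)\le\frac{1}{(1-\beta t)M^{q-1}}\sum_{\mat b}n_{\mat b}^d .
\]
For each column $w\in[N]$, define $B_w=\{\mat b: w\in N^{(1)}_{\mat P}(G_{\mat b})\}\subseteq[M]^{q-1}$. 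Three facts hold:
\begin{itemize}
\item Each $B_w$ is line-closed in $[M]^{q-1}$. A line of $B_w$ is a union of lines of $F$ (one for each witnessing $e_q\in N$-preimage of $w$), and a union of sets each of size $0$ or $>M/t$ again has size $0$ or $>M/t$.
\item $N^{(q)}_{\mat P}(F)=\bigcup_w N^{(q-1)}_{\mat P}(B_w)\times\{w\}$. Hence, with $m_w=\nu(B_w)$, we get $\nu(F)=\frac1N\sum_w m_w$.
\item The quantity $n_{\mat b}^d$ is the probability that $d$ independent uniform columns $w_1,\dots,w_d$ all lie in $N^{(1)}_{\mat P}(G_{\mat b})$. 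Therefore
\[
\sum_{\mat b}n_{\mat b}^d=\mathbb E_{w_1,\dots,w_d}\bigl\vert B_{w_1}\cap\cdots\cap B_{w_d}\bigr\vert .
\]
\end{itemize}

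\textbf{Step 4: closing the induction.} Bound the size of the intersection by the geometric mean of the sizes, $\vert B_{w_1}\cap\cdots\cap B_{w_d}\vert\le\prod_i\vert B_{w_i}\vert^{1/d}$. Independence of the $w_i$ then gives
\[
\sum_{\mat b}n_{\mat b}^d\le\bigl(\mathbb E_w\vert B_w\vert^{1/d}\bigr)^d .
\]
The induction hypothesis applied to each $B_w$ gives $\vert B_w\vert^{1/d}\le M^{(q-1)/d}\,m_w/(1-\beta t)^{(q-1)/d}$. Combining,
\[
\rho(F)\le\frac{(\mathbb E_w m_w)^d}{(1-\beta t)^q}=\frac{\nu(F)^d}{(1-\beta t)^q},
\]
which is the claim for $q$.

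\textbf{Main obstacle.} The delicate point is Steps 3--4. Naive slice-by-slice averaging loses because $x\mapsto x^{1/d}$ is concave. For example, if half the fibres $G_{\mat b}$ are full and half are empty, averaging the per-fibre bounds yields a bound far weaker than $\rho^{1/d}$. Rewriting $n_{\mat b}^d$ as the probability that $d$ independent columns all land in the support, and then using the min/geometric-mean bound, is the device that recovers the exact power $d$. The other thing to check carefully is that line-closedness passes to the sets $B_w$. It does, because these sets are built by unions, never by intersections.
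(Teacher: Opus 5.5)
Your proof is correct. It uses the same core device as the paper but arranges the induction in the opposite order.

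\textbf{The paper's order.} The proof of Theorem~\ref{thm:legalexpand} peels off the \emph{first} coordinate. It applies the inductive hypothesis to the genuine slices $Y^{(q)}_{e_1}(E')$, whose legality is inherited by restriction (Claim~\ref{claim:slicelegal}). It then upper-bounds their supports by $\bigcap_{v_1\in N^{(1)}_{\mat P}(e_1)}W^{(q)}_{v_1}(V')$. Only at that point does it invoke expansion, on intersections of one-dimensional \emph{vertex} columns $X^{(q)}_{\mat v^{(i)}}(V')$. Because the additive $\beta M$ term first appears there, the paper needs Claim~\ref{claim:minnonzero} (every nonempty column has size at least $(1/t-\beta)^{1/d}N$) and a second use of Lemma~\ref{lem:xx'y} to absorb it.

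\textbf{Your order.} You apply expansion first, via the base case, to the one-dimensional \emph{edge} fibres $G_{\mat b}$. The $\beta$ term is absorbed immediately because nonempty fibres have more than $M/t$ elements. After that your inductive step is a clean multiplicative tensorization with no additive error, and no analogue of Claim~\ref{claim:minnonzero} is needed. Your decomposition $N^{(q)}_{\mat P}(F)=\bigcup_w N^{(q-1)}_{\mat P}(B_w)\times\{w\}$ is also an exact identity, where the paper uses only a containment.

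\textbf{Trade-off.} You apply the inductive hypothesis to the sets $B_w$, which are unions of traces of lines of $F$ rather than slices of $F$. So you need the extra observation that the ``empty or larger than $M/t$'' property survives unions; you state and verify it correctly.

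\textbf{Common device.} Both proofs write a $d$-th power as a count of $d$-tuples, bound an intersection by the geometric mean of the sizes, and factor the sum over independent tuples. Yours applies this to $(q-1)$-dimensional edge sets $B_{w_i}$; the paper applies it to one-dimensional vertex sets $X^{(q)}_{\mat v^{(i)}}(V')$.

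\textbf{Cosmetic slip.} In Step~2 the inequality $\beta M<\beta t\vert F\vert$ fails to be strict when $\beta=0$. Write $\le$ instead, which is all you use.
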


The purpose of having matrix $\mat A$ in the Vandermonde fortified tensor product is to impose some structure on the possible sets $E'$. In particular, we would like to enforce the following.

\begin{definition}
    Let $E' \subseteq [M]^q$. We say that this subset is \emph{$(t,q)$-legal} if for all $\ell \in [q]$, for all $e_1, \ldots, e_{\ell-1}, \\e_{\ell+1}, \ldots, e_q \in [M]$, the intersection
    \[E' \cap \{(e_1, \ldots, e_{\ell-1}, e_\ell', e_{\ell+1}, \ldots, e_q) : e_\ell' \in [M]\}\]
    either has cardinality $0$ or cardinality at least $M/t$.
\end{definition}

The lemma below uses the linear independence properties of reduced Vandermonde matrices.

\begin{lemma} \label{lem:mustbelegal}
    Let $H = (V, E)$ be a hypergraph of arity $d$ with $N$ vertices and $M$ hyperedges, let $\beta \in [0,1]$, and let $t$ and $q$ be positive integer parameters. Suppose that
    \begin{enumerate}
        \item $t$ divides $M$, and
        \item $1/t > \beta$.\footnote{This lemma does not require an expansion condition.}
    \end{enumerate} Denote $H$'s indicator matrix as $\mat P \in \{0,1\}^{M \times N}$, and let $\mat T = \big[\mat A \|\mat Q\big]$ be the $(t, q)$-VF tensor product of $\mat P$.
    
    The following holds for all vectors $\mat x \in \mathbb{Z}^{M^q}$ such that $\mat x \mat A = \mat 0$. Let $E'$ be the subset of $[M]^q$ containing all indices for the nonzero entries of $\mat x$. Then $E'$ is $(t,q)$-legal.
\end{lemma}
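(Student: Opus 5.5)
We argue by contradiction: assume $E'$ is not $(t,q)$-legal and show that $\mat x \mat A \neq \mat 0$. If $E'$ is not legal, then by definition there exist a coordinate $\ell \in [q]$ and fixed values $e_1, \ldots, e_{\ell-1}, e_{\ell+1}, \ldots, e_q \in [M]$ such that the corresponding line
\[S = \{(e_1, \ldots, e_{\ell-1}, e_\ell', e_{\ell+1}, \ldots, e_q) : e_\ell' \in [M]\}\]
satisfies $0 < \vert E' \cap S\vert < M/t$. By step 2 of Definition \ref{defn:VFTP}, this $S$ belongs to the collection $\mathcal{S}$, so $\mat A$ contains the block $\mat A^{(S)}$ as a set of its columns. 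It therefore suffices to show that $\mat x \mat A^{(S)} \neq \mat 0$.

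By construction, the rows of $\mat A^{(S)}$ indexed by elements outside $S$ are zero. Hence
\[\mat x \mat A^{(S)} = \sum_{\mat e \in E' \cap S} \mat x_{\mat e} \mat V_{\mat e}.\]
This is a linear combination of the rows of the $(a, M/t)$ reduced Vandermonde matrix $\mat V$ indexed by $E' \cap S$, and every coefficient $\mat x_{\mat e}$ in it is nonzero. There is at least one such term, since $E' \cap S \neq \emptyset$.

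The rows $\mat V_{\mat e}$ for $\mat e \in [M]^q$ are distinct rows of $\mat V$, because we indexed the first $M^q$ rows of $\mat V$ injectively by $[M]^q$; this is possible since $a - 1 \geq M^q$. There are fewer than $M/t$ of them. Extend this set to any $M/t$ distinct rows of $\mat V$; this requires $a - 1 \geq M/t$, which holds because $a > M^q \geq M$. By Lemma \ref{lem:vandindependent}, the resulting $(M/t) \times (M/t)$ submatrix has full rank, so any subset of its rows is linearly independent over $\mathbb{Q}$. A nontrivial integer combination of linearly independent rows is nonzero, so $\mat x \mat A^{(S)} \neq \mat 0$. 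This contradicts $\mat x \mat A = \mat 0$.

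The only delicate point is matching the definitions. Legality forbids intersections of size strictly between $0$ and $M/t$, while the Vandermonde width is exactly $M/t$. A violating intersection therefore has at most $M/t - 1 \leq M/t$ rows, which is within the range Lemma \ref{lem:vandindependent} covers. The hypothesis $1/t > \beta$ is not needed for this lemma. It only matters elsewhere, and here at most it ensures that $M/t$ is a positive integer (which also follows from $t \mid M$). No real obstacle remains beyond this bookkeeping.
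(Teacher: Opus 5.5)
Your proof is correct and follows the paper's argument. Take a line $S$ with $0 < \vert E' \cap S\vert < M/t$, restrict attention to the block $\mat A^{(S)}$, and apply Lemma \ref{lem:vandindependent} to conclude that a nontrivial combination of fewer than $M/t$ distinct reduced Vandermonde rows is nonzero; your extra bookkeeping (checking $a-1 \geq M^q$ and extending to a full $(M/t)\times(M/t)$ submatrix) only makes explicit what the paper leaves implicit.
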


\begin{proof}
    We prove the contrapositive, namely: if $\mat x$ is not $(t,q)$-legal, then $\mat x\mat A \neq \mat 0$. By definition of legality, if $\mat x$ is not $(t,q)$-legal then there must exist an index $\ell \in [q]$ and indices $e_1, \ldots, e_{\ell-1}, e_{\ell+1}, \ldots, e_q \in [M]$ such that the set
    \[S = \{(e_1, \ldots, e_{\ell-1}, e_\ell', e_{\ell+1}, \ldots, e_q) : e_\ell' \in [M]\}\]
    satisfies
    \[0 < \vert E' \cap S \vert < M/t.\]
    By construction of the $(t, q)$-VF tensor product, there exists a column-induced submatrix $\mat A^{(S)}$ of $\mat A$ where all rows are set to $\mat 0^{M/t}$, except for those indexed by $S$, which are instead set to distinct rows of a width-$M/t$ reduced Vandermonde matrix. By our choice of the set $S$, we know that the matrix-vector product $\mat x\mat A^{(S)}$ corresponds to a linear combination that assigns a nonzero coefficient to at least 1, and strictly less than $M/t$, of these reduced Vandermonde rows. By Lemma \ref{lem:vandindependent}, any such set of rows is linearly independent, meaning that $\mat x\mat A^{(S)}$ is nonzero and hence $\mat x\mat A$ is also nonzero.
\end{proof}

A final ingredient that will be useful in proving Theorem \ref{thm:vandtensor} is the following simple observation:

\begin{lemma} \label{lem:xx'y}
    Let $x' \geq x > 0$ and $y \geq 0$. Then $(x' + y)/x' \leq (x+y)/x$.
\end{lemma}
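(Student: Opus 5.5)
We prove Lemma \ref{lem:xx'y} by clearing denominators, which is legitimate because both $x$ and $x'$ are strictly positive. Multiplying both sides of the desired inequality $(x'+y)/x' \leq (x+y)/x$ by the positive quantity $x x'$ turns it into the equivalent statement
\[x(x'+y) \leq x'(x+y).\]
Expanding, this reads $xx' + xy \leq xx' + x'y$, which after cancelling the common term $xx'$ is equivalent to $xy \leq x'y$. This last inequality holds because $y \geq 0$ and $x \leq x'$, so $y(x'-x) \geq 0$. Since every step is reversible when multiplying by positive quantities, the claimed inequality follows.

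Alternatively, one can write $(x+y)/x = 1 + y/x$ and $(x'+y)/x' = 1 + y/x'$. The claim then reduces to $y/x' \leq y/x$, which is immediate because $0 < x \leq x'$ gives $1/x' \leq 1/x$, and multiplying by $y \geq 0$ preserves the inequality. We would present this second form, since it is the most transparent.

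There is no real obstacle here. The only points to state explicitly are that the hypothesis $x > 0$ (and hence $x' > 0$) is what allows division and preserves the direction of the inequality, and that $y \geq 0$ is needed for the sign. When $y = 0$ the inequality holds with equality.
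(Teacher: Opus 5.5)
Your proof is correct: both the cross-multiplication argument and the rewriting as $1 + y/x' \leq 1 + y/x$ are valid, with positivity of $x$ and $x'$ justifying the manipulations and $y \geq 0$ fixing the sign. The paper states this lemma as a ``simple observation'' and gives no proof, so your one-line argument is exactly the routine verification it leaves to the reader.
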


\subsection{Proving the VF Tensor Product Theorem} \label{sec:prove3.5}
We are now ready to prove Theorem \ref{thm:vandtensor}. By Lemma \ref{lem:mustbelegal}, it will be sufficient to prove the following (slightly stronger) theorem:

\begin{theorem}\label{thm:legalexpand}
    Let $H = (V, E)$ be a hypergraph of arity $d$ with $N$ vertices and $M$ hyperedges, let $\beta \in [0,1]$, and let $t$ be a positive integer parameter. Suppose that
    \begin{enumerate}
        \item $t$ divides $M$,
        \item $1/t > \beta$, and
        \item (Expansion) Every vertex subset $V' \subseteq V$ fully contains at most $(\vert V'\vert/N)^d M + \beta M$ hyperedges.
    \end{enumerate} Denote $H$'s indicator matrix as $\mat P \in \{0,1\}^{M \times N}$.
    
    For all positive integers $q$, for all $(t,q)$-legal subsets $E' \subseteq [M]^q$, and for all parameters $\delta \in [0,1]$, the following holds. If $\vert E'\vert > \frac{M^q\delta^d}{(1 - \beta t)^q}$, then $\vert N^{(q)}_{\mat P}(E') \vert > N^q\delta$.
\end{theorem}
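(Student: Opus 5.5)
Induction on $q$, proving the contrapositive in product form: for every $(t,q)$-legal $E' \subseteq [M]^q$, if $\vert N^{(q)}_{\mat P}(E')\vert \le N^q\delta$ then $\vert E'\vert \le M^q\delta^d/(1-\beta t)^q$. It is convenient to write $F(s) = \max\{\vert E''\vert : E''\subseteq [M],\ \vert N^{(1)}_{\mat P}(E'')\vert \le s\}$; the expansion hypothesis gives $F(s)\le (s/N)^dM+\beta M$.

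\textbf{Base case $q=1$.} Legality says $E'$ is empty or $\vert E'\vert\ge M/t$; let $V'=N^{(1)}_{\mat P}(E')$, so $V'$ fully contains every hyperedge of $E'$. By expansion, $\vert E'\vert \le (\vert V'\vert/N)^dM+\beta M \le \delta^dM+\beta M$. Since $\beta M \le \beta t\cdot\vert E'\vert$ (because $\vert E'\vert \ge M/t$), we get $\vert E'\vert(1-\beta t)\le \delta^d M$, using $1-\beta t>0$ from $1/t>\beta$. This is exactly the bound, and it is where the $(1-\beta t)$ factor comes from: legality converts the additive $\beta M$ error into a multiplicative one. (Lemma \ref{lem:xx'y} is the tool for this kind of rearrangement when the quantity appears as a ratio.)

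\textbf{Inductive step.} Slice $E'$ by the last coordinate. For $e\in[M]$ let $E'_e=\{(e_1,\dots,e_{q-1}) : (e_1,\dots,e_{q-1},e)\in E'\}$. Each nonempty slice is $(t,q-1)$-legal, since lines in the first $q-1$ coordinates of a slice are lines of $E'$. Symmetrically, for each $\mat f\in[M]^{q-1}$ the fiber $G_{\mat f}=\{e : (\mat f,e)\in E'\}$ is empty or has size $\ge M/t$. The column set is a union of pieces:
\[N^{(q)}_{\mat P}(E') = \bigcup_{v\in[N]} U_v\times\{v\},\qquad U_v = N^{(q-1)}_{\mat P}\Bigl(\textstyle\bigcup_{e:\,v\in N^{(1)}_{\mat P}(e)}E'_e\Bigr).\]
The key is a double-counting/layer-cake argument. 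For each threshold, consider the set $W_\lambda=\{v : \vert U_v\vert \ge \lambda N^{q-1}\}$. An index $(\mat f, e)$ lies in $E'$ only if $\mat f\in E'_e$, which forces $N^{(1)}_{\mat P}(e) \subseteq \{v : U_v\supseteq N^{(q-1)}_{\mat P}(\mat f)\}$.

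The plan is to fix $\mat f$ and bound $\vert G_{\mat f}\vert$ using the base case with $\delta_{\mat f} = \vert\{v : N^{(q-1)}(\mat f)\subseteq U_v\}\vert/N$. Legality of $G_{\mat f}$ gives $\vert G_{\mat f}\vert \le M\delta_{\mat f}^d/(1-\beta t)$. Summing over $\mat f$ gives $\vert E'\vert \le \frac{M}{1-\beta t}\sum_{\mat f}\delta_{\mat f}^d$. This has to be matched against $\sum_v \vert U_v\vert \le N^q\delta$.

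\textbf{Main obstacle.} The hard step is combining the two levels, with the convexity of $x\mapsto x^d$ running in the right direction. I would instead organize it as follows. Sort the vertices by $\vert U_v\vert$, and define the level sets $L_j$ as the set of $v$ in the top $j$. The set of $\mat f$ with $\delta_{\mat f}\ge j/N$ is then supported (in the induced $(q-1)$-legal sense) within the sets $U_v$, $v\in L_j$. Apply induction to the $(q-1)$-legal set $\{\mat f : \delta_{\mat f}\ge j/N\}\cap(\text{projection of }E')$. This projection is legal because each of its lines is a union of slice lines. Its support is at most $\min_{v\in L_j}\vert U_v\vert$, which yields $\#\{\mat f:\delta_{\mat f}\ge j/N\} \le M^{q-1}\mu_j^d/(1-\beta t)^{q-1}$, where $\mu_j N^{q-1}$ is the $j$-th largest $\vert U_v\vert$. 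Then by the layer-cake formula,
\[\sum_{\mat f}\delta_{\mat f}^d = \sum_j \bigl((j/N)^d-((j-1)/N)^d\bigr)\,\#\{\mat f:\delta_{\mat f}\ge j/N\}.\]
Bounding this by $\delta^d$ given $\frac1N\sum_j\mu_j\le\delta$ and $\mu_j$ nonincreasing is a rearrangement/Hölder-type inequality: $\sum_j ((j/N)^d-((j-1)/N)^d)\mu_j^d \le (\frac1N\sum_j\mu_j)^d$, by the monotonicity of $\mu_j$. I expect verifying this final inequality and the legality of the thresholded projections to be the delicate part.
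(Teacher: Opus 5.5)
Several of your steps are correct: the base case, the fiber bound $|G_{\mat f}|\le M\delta_{\mat f}^d/(1-\beta t)$, the layer-cake identity, and the closing inequality. The last one holds because $\frac{j-1}{N}\mu_j\le\frac1N\sum_{i<j}\mu_i$ and $s\mapsto(s+b)^d-s^d$ is increasing, so the sum telescopes. The gap is the bound $\#F_j\le M^{q-1}\mu_j^d/(1-\beta t)^{q-1}$ for $F_j=\{\mat f\in\pi(E'):\delta_{\mat f}\ge j/N\}$, where $\pi(E')$ is the projection onto the first $q-1$ coordinates. This bound needs both a support bound and legality of $F_j$, and neither of your arguments delivers it.

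Below, write $N(\cdot)$ for $N^{(q-1)}_{\mat P}(\cdot)$ or $N^{(1)}_{\mat P}(\cdot)$ as appropriate.

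\textbf{The support claim is false.} Each $\mat f\in F_j$ is supported inside $U_v$ for $j$ vertices $v$, but different $\mat f$ may use different $U_v$'s. Take $q=2$, the complete $d$-uniform hypergraph (so $\beta=0$), and $E'=(A_1\times B_1)\cup(A_2\times B_2)$. Here $A_i$ (resp.\ $B_i$) is the set of all edges inside $S_i$ (resp.\ $T_i$), with $S_1,S_2$ disjoint of size $\sigma N$ and $T_1,T_2$ disjoint of size $\tau N$; take $\sigma^d,\tau^d$ well above $1/t$ so that $E'$ is legal. Then $U_v=S_i$ for $v\in T_i$, so $\mu_j=\sigma$ for all $j\le 2\tau N$. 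But for $j\le\tau N$ you get $F_j=A_1\cup A_2$, whose support has size $2\sigma N$ and whose cardinality is about $2\sigma^dM>M\sigma^d$.

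This part can be repaired. Every point of $N(F_j)$ lies in at least $j$ of the sets $U_v$. So you can replace $\mu_j$ by the density $\rho_j$ of such points; $\rho_j$ is nonincreasing with $\frac1N\sum_j\rho_j=|V'|/N^q$, and your final inequality still applies.

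\textbf{The real obstruction is legality.} ``Lines are unions of slice lines'' proves that $\pi(E')$ is legal, but $F_j$ is a \emph{threshold} of $\pi(E')$, not a union of slices. Take $q=2$ and $E'=(A\times B)\cup(A'\times B')$ in a simple hypergraph, with
\begin{itemize}
\item $A\cap A'=\{f_0\}$ and $N(A)\cap N(A')=N(f_0)$;
\item $N(B)\cap N(B')=\emptyset$ and $|N(B)|=|N(B')|$.
\end{itemize}
Then $F_j=\{f_0\}$ for $j=|N(B)|+1$, which is not $(t,1)$-legal. The inductive hypothesis says nothing about non-legal sets. At level one the only fallback is the additive bound $(|N(F)|/N)^dM+\beta M$, and its $\beta M$ cannot be absorbed into $(1-\beta t)^{-q}$: legality is exactly what supplies a lower bound on support size, and here the support has only $d$ vertices.

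\textbf{The missing idea is to reverse the roles of the two levels}, as the paper does.
\begin{enumerate}
\item Apply the inductive hypothesis to the genuine slices $E'_e$, which are legal and satisfy $N(E'_e)\subseteq\bigcap_{v\in N(e)}U_v$.
\item Write $|\bigcap_{v\in N(e)}U_v|^d$ as the number of $d$-tuples $(\mat u^{(1)},\dots,\mat u^{(d)})$ with $N(e)\subseteq X_{\mat u^{(i)}}:=\{v:\mat u^{(i)}\in U_v\}$ for all $i$, and swap the sums.
\item Bound $\#\{e\in[M]:N(e)\subseteq\bigcap_iX_{\mat u^{(i)}}\}$ using the raw expansion hypothesis, which ranges over \emph{all} hyperedges and so needs no legality, together with $|\bigcap_iX_{\mat u^{(i)}}|^d\le\prod_i|X_{\mat u^{(i)}}|$.
\end{enumerate}
The additive $\beta M$ is absorbed because every nonempty $X_{\mat u}$ contains $N(G_{\mat f})$ for some nonempty legal fiber $G_{\mat f}$, and hence has size at least $(1/t-\beta)^{1/d}N$. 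Finally, $\sum\prod_i|X_{\mat u^{(i)}}|=(\sum_v|U_v|)^d=|V'|^d$, which gives the theorem's bound.
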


\begin{proof}
    We give a proof by induction on the tensoring exponent $q$, working with the contrapositive of the claim in the theorem, namely: if $\vert N^{(q)}_{\mat P}(E') \vert \leq N^q\delta$, then $\vert E'\vert \leq \frac{M^q\delta^d}{(1 - \beta t)^q}$. Define a function $L^{(q)}_{\mat P}(x)$, which takes as input a number $x$ and outputs the\footnote{Technically, there may be multiple subsets $E'$ of maximum size.} cardinality of the largest $(t, q)$-legal subset $E' \subseteq [M]^q$ such that $\vert N^{(q)}_{\mat P}(E')\vert \leq x$. Our inductive hypothesis is that for all $\delta \in [0,1]$, we have $L^{(q)}_{\mat P}(N^q\delta) \leq \frac{M^q\delta^d}{(1 - \beta t)^q}$.

    \paragraph{Base case $q = 1$.} Let $\delta \in [0,1]$ be any parameter, and let $E' \subseteq [M]$ be any $(t, 1)$-legal subset such that $\vert N^{(1)}_{\mat P}(E')\vert \leq N\delta$. Our goal is to show that $\vert E'\vert \leq \frac{M\delta^d}{1-\beta t}$.

    Since $E'$ is $(t, 1)$-legal, we have that either $\vert E'\vert = 0$ or $\vert E'\vert \geq M/t$. In the first case, the claim is automatically satisfied for all choices of $\delta$ (our assumption that $1/t > \beta$ implies that $1 - \beta t > 0$, so the upper bound on $\vert E'\vert$ is always nonnegative). Now assume that $\vert E'\vert \geq M/t$.
    
    Using the expansion assumption $\mat P$, we have that
    \begin{align*}
        \vert E'\vert \leq & \left(\frac{\vert N^{(1)}_{\mat P}(E')\vert}{N}\right)^dM + \beta M \\
        \leq & \delta^dM + \beta M \tag{Assumed $\vert N^{(1)}_{\mat P}(E')\vert \leq N\delta$} \\
        \leq & \left(\frac{\delta^d + \beta}{\delta^d}\right) \delta^d M.
    \end{align*}

    Since $\vert E'\vert \geq M/t$, the above implies that $\delta^d \geq (1/t-\beta)$. We also know that $1/t - \beta$ is strictly positive by assumption. Using these along with Lemma \ref{lem:xx'y}, we can write:
    
    \begin{align*}
        \left(\frac{\delta^d + \beta}{\delta^d}\right) \delta^d M & \leq \left(\frac{(1/t - \beta) + \beta}{(1/t - \beta)}\right) \delta^d M \\
        & \leq \frac{\delta^d M}{1 - \beta t}.
    \end{align*}

    So when $\vert E'\vert \geq M/t$, we have $\vert E'\vert \leq \frac{M\delta^d}{1-\beta t}$, and the base case holds.

    \paragraph{Inductive Case $q > 1$: Setup.} 
    
    We start by defining a series of slice operators for subsets $V' \subseteq [N]^q$ and $E' \subseteq [M]^q$. Below, let $v_1, \ldots, v_q \in [N]$ and $e_1, \ldots, e_q \in [M]$ be any indices.
    \begin{align*}
        W^{(q)}_{v_1}(V') & \coloneqq \{(v_2', \ldots, v_q') \in [N]^{q-1} : (v_1, v_2', \ldots, v_q') \in V'\} \\
        X^{(q)}_{v_2, \ldots, v_q}(V') & \coloneqq \{v_1' \in [N] : (v_1', v_2, \ldots, v_q) \in V'\} \\
        Y^{(q)}_{e_1}(E') & \coloneqq \{(e_2', \ldots, e_q') \in [M]^{q-1} : (e_1, e_2', \ldots, e_q') \in E'\} \\
        Z^{(q)}_{e_2, \ldots, e_q}(E') & \coloneqq \{e_1' \in [M] : (e_1', e_2, \ldots, e_q) \in E'\}
    \end{align*}
    
    We argue that slices of legal subsets of $[M]^q$ are themselves legal, with respect to the appropriate tensoring parameter.

    \begin{claim} \label{claim:slicelegal}
        Let $E' \subseteq [M]^q$ be any $(t, q)$-legal subset. Then for all $e_1, \ldots, e_q \in [M]$, the slice $Y^{(q)}_{e_1}(E')$ is $(t, q-1)$-legal, and the slice $Z^{(q)}_{e_2, \ldots, e_q}(E')$ is $(t, 1)$-legal.
    \end{claim}

    \begin{proof}
        We prove that $Y^{(q)}_{e_1}(E')$ is $(t, q-1)$-legal, as the proof for $Z^{(q)}_{e_2, \ldots, e_q}(E')$ is nearly identical. Suppose for contradiction that $Y^{(q)}_{e_1}(E')$ was not $(t, q-1)$-legal. Then by definition of (il)legality, there exists $\ell \in [2, q]$ and a choice of $e_2^*, \ldots, e_{\ell-1}^*, e_{\ell+1}^*, \ldots, e_q^*$ such that
        \[0 < \vert Y^{(q)}_{e_1}(E') \cap \{(e_2^*, \ldots, e_{\ell-1}^*, e_\ell', e_{\ell+1}^*, \ldots, e_q^*) : e_\ell' \in [M]\} \vert < M/t.\]
        By construction of $Y^{(q)}_{e_1}(E')$, this implies that
        \[0 < \vert E' \cap \{(e_1, e_2^*, \ldots, e_{\ell-1}^*, e_\ell', e_{\ell+1}^*, \ldots, e_q^*) : e_\ell' \in [M]\} \vert < M/t,\]
        which violates the assumption that $E'$ is $(t, q)$-legal.
    \end{proof}
    
    Using Claim \ref{claim:slicelegal}, we have the following dichotomy for slices of $N^{(q)}_{\mat P}(E')$ when $E'$ is $(t, q)$-legal.

    \begin{claim} \label{claim:minnonzero}
        Let $E' \subseteq [M]^q$ be any $(t, q)$-legal subset, and set $V' = N^{(q)}_{\mat P}(E')$. Then for all $v_2, \ldots, v_q \in [N]$, we have that $\vert X^{(q)}_{v_2, \ldots, v_q}(V')\vert$ is either 0 or at least $(1/t - \beta)^{1/d} N$.
    \end{claim}

    \begin{proof}
        Let $v_2, \ldots, v_q \in [N]$ be any indices, which will stay fixed for the rest of the proof. If $\vert X^{(q)}_{v_2, \ldots, v_q}(V')\vert = 0$, then we are done, so assume that $\vert X^{(q)}_{v_2, \ldots, v_q}(V')\vert > 0$. Our goal will be to show that $\vert X^{(q)}_{v_2, \ldots, v_q}(V')\vert \geq (1/t-\beta)^{1/d}N$. By definition of $V'$ and the assumption that $X^{(q)}_{v_2, \ldots, v_q}(V')$ is nonempty, we know that there exists $\mat e = (e_1, e_2, \ldots, e_q) \in E'$ such that
        \[N^{(q)}_{\mat P}(\mat e) \cap \{(v_1', v_2, \ldots, v_q) : v_1' \in X^{(q)}_{v_2, \ldots, v_q}(V')\} \neq \emptyset.\]
        In other words, there exists at least one hyperedge $\mat e = (e_1, e_2, \ldots, e_q) \in E'$ incident to a vertex $(v_1', v_2, \ldots, v_q)$ such that $v_1' \in X^{(q)}_{v_2, \ldots, v_q}(V')$. Fix this choice of $e_2, \ldots, e_q$ for the rest of the proof.

        Now consider the slice $Z^{(q)}_{e_2, \ldots, e_q}(E')$. Recall that, from the hypergraph perspective, this is constructed by first taking the set of all hyperedges in $E'$ whose last $q-1$ sub-edges are $e_2, \ldots, e_q$, and then deleting all but the first sub-edge from each hyperedge. We know that $Z^{(q)}_{e_2, \ldots, e_q}(E')$ is $(t, 1)$-legal by Claim \ref{claim:slicelegal}, and by construction it is nonempty. Therefore
        \[\vert Z^{(q)}_{e_2, \ldots, e_q}(E')\vert \geq M/t.\]
        By definition of $V'$, each sub-edge $e_1' \in Z^{(q)}_{e_2, \ldots, e_q}(E')$ has all of its endpoints $N_{\mat P}^{(1)}(e_1')$ appearing in $X^{(q)}_{v_2, \ldots, v_q}(V')$, meaning that
        \[N^{(1)}_{\mat P}(Z^{(q)}_{e_2, \ldots, e_q}(E')) \subseteq X^{(q)}_{v_2, \ldots, v_q}(V').\]
        So to prove the claim, it will be sufficient to argue that every subset $E'' \subseteq [M]$ of size at least $M/t$ satisfies $\vert N^{(1)}_{\mat P}(E'')\vert \geq (1/t-\beta)^{1/d}N$. By the expansion assumption on $\mat P$, we know that
        \[\vert E''\vert \leq \left(\frac{\vert N^{(1)}_{\mat P}(E'')\vert}{N}\right)^dM + \beta M.\]
        Substituting $\vert E''\vert \geq M/t$ and re-arranging gives
        \begin{align*}
            M/t \leq & \left(\frac{\vert N^{(1)}_{\mat P}(E'')\vert}{N}\right)^dM + \beta M \\
            (1/t - \beta)M \leq & \left(\frac{\vert N^{(1)}_{\mat P}(E'')\vert}{N}\right)^dM \\
            (1/t - \beta)^{1/d} \leq & \vert N^{(1)}_{\mat P}(E'')\vert/N \\
            (1/t - \beta)^{1/d}N \leq & \vert N^{(1)}_{\mat P}(E'')\vert.
        \end{align*}
        This is well-defined since we assumed that $1/t-\beta > 0$.
    \end{proof}
    
    \paragraph{Inductive Case $q > 1$: Bounding $L^{(q)}_{\mat P}(N^q\delta)$.} Recall that we defined a function $L^{(q)}_{\mat P}(x)$, which takes as input a number $x$ and outputs the cardinality of the largest $(t, q)$-legal subset $E' \subseteq [M]^q$ such that $\vert N^{(q)}_{\mat P}(E')\vert \leq x$. By induction, we know that for all $\delta \in [0,1]$ and all $q' < q$, we have $L^{(q')}_{\mat P}(N^{q'}\delta) \leq \frac{M^{q'}\delta^d}{(1 - \beta t)^{q'}}$. Our goal is to prove that for all $\delta \in [0,1]$, we have $L^{(q)}_{\mat P}(N^q\delta) \leq \frac{M^q\delta^d}{(1-\beta t)^q}$. To this end, let $E' \subseteq [M]^q$ be any $(t, q)$-legal subset such that $\vert N^{(q)}_{\mat P}(E')\vert \leq N^q\delta$, and set $V' = N^{(q)}_{\mat P}(E')$. For notational purposes, define a parameter
    \[\Gamma = \frac{M^{q-1}}{N^{d(q-1)}(1-\beta t)^{q-1}}.\]

    We first upper bound $\vert E'\vert$ in terms of $(t, q-1)$-legal subsets supported on slices of $V'$, allowing us to invoke the inductive hypothesis with $q' = q-1$. 

    \begin{claim} \label{claim:firstslices}
        \[\vert E'\vert \leq \Gamma \cdot \sum_{e_1 \in [M]} \vert \cap_{v_1 \in N^{(1)}_{\mat P}(e_1)} W^{(q)}_{v_1}(V')\vert^d.\]
    \end{claim}

    \begin{proof}
    By Claim \ref{claim:slicelegal}, we know that for any choice of $e_1 \in [M]$, the slice $Y^{(q)}_{e_1}(E')$ is $(t, q-1)$-legal. Notice that we can write $\vert N^{(q-1)}_{\mat P}(Y^{(q)}_{e_1}(E'))\vert = N^{q-1} \delta$, where
    \[\delta =\frac{\vert N^{(q-1)}_{\mat P}(Y^{(q)}_{e_1}(E'))\vert}{N^{q-1}}. \]
    So by the inductive hypothesis,
    \[\vert Y^{(q)}_{e_1}(E') \vert \leq \left(\frac{\vert N^{(q-1)}_{\mat P}(Y^{(q)}_{e_1}(E'))\vert}{N^{q-1}}\right)^d \cdot M^{q-1}/(1-\beta t)^{q-1}.\]
    Ranging over all choices of $e_1 \in [M]$, the slices $Y^{(q)}_{e_1}(E')$ partition $E'$, so we have
    \begin{align} \label{eq:1}
        \vert E'\vert \leq \sum_{e_1 \in [M]} \left(\frac{\vert N^{(q-1)}_{\mat P}(Y^{(q)}_{e_1}(E'))\vert}{N^{q-1}}\right)^d \cdot M^{q-1}/(1-\beta t)^{q-1}.
    \end{align}

    By definition of $V'$, for all $e_1 \in [M]$ and $v_1 \in N^{(1)}_{\mat P}(e_1)$, we have
    \[N^{(q-1)}_{\mat P}(Y^{(q)}_{e_1}(E')) \subseteq W^{(q)}_{v_1}(V'),\]
    which implies that
    \[N^{(q-1)}_{\mat P}(Y^{(q)}_{e_1}(E')) \subseteq \cap_{v_1 \in N^{(1)}_{\mat P}(e_1)} W^{(q)}_{v_1}(V').\]
    Thus we can re-write Equation (\ref{eq:1}) as
    \begin{align*}
        \vert E'\vert \leq \sum_{e_1 \in [M]} \left(\frac{\vert \cap_{v_1 \in N^{(1)}_{\mat P}(e_1)} W^{(q)}_{v_1}(V')\vert}{N^{q-1}}\right)^d \cdot M^{q-1}/(1-\beta t)^{q-1}.
    \end{align*}
    Using that
    \[\Gamma = \frac{M^{q-1}}{N^{d(q-1)}(1-\beta t)^{q-1}},\]
    this becomes
    \begin{align*}
        \vert E'\vert \leq \Gamma \cdot \sum_{e_1 \in [M]} \vert \cap_{v_1 \in N^{(1)}_{\mat P}(e_1)} W^{(q)}_{v_1}(V')\vert^d.
    \end{align*}
    \end{proof}

    We now convert from a summation in terms of $W^{(q)}_{v_1}(V')$ slices to a summation in terms of $X^{(q)}_{v_2, \ldots, v_q}(V')$ slices. Notice that for any $e_1 \in [M]$, we have 
    \[\vert \cap_{v_1 \in N^{(1)}_{\mat P}(e_1)} W^{(q)}_{v_1}(V')\vert^d = \sum_{\mat v^{(1)}, \ldots, \mat v^{(d)} \in [N]^{q-1}} \left(\prod_{i \in [d]} \mat{1}_{N^{(1)}_{\mat P}(e_1) \subseteq X^{(q)}_{\mat v^{(i)}}(V')}\right).\]
    In other words, the following two processes are equivalent:
    \begin{enumerate}
        \item \emph{For each endpoint $v_1$ of the hyperedge $e_1$, take the set of all vertices in $V'$ whose first sub-vertex is $v_1$. Then intersect the resulting sets, and raise the cardinality of their intersection to the power of $d$.}
        \item \emph{Say that a $(q-1)$-tuple $(v_2, \ldots, v_q) \in [N]^{q-1}$ supports a hyperedge $e_1 \in [M]$ whenever every endpoint $v_1$ of $e_1$ satisfies that $(v_1, v_2, \ldots, v_q) \in V'$. Now count the number of $d$-tuples of $(q-1)$-tuples $((v_2, \ldots, v_q)^{(1)}, \ldots, (v_2, \ldots, v_q)^{(d)})$ such that every constituent $(q-1)$-tuple $(v_2, \ldots, v_q)^{(i)}$ supports $e_1$.}
    \end{enumerate}
    Plugging this into the inequality from Claim \ref{claim:firstslices}, we have
    \begin{align*}
        \vert E'\vert \leq \Gamma \cdot \sum_{e_1 \in [M]} \left(\sum_{\mat v^{(1)}, \ldots, \mat v^{(d)} \in [N]^{q-1}} \left(\prod_{i \in [d]} \mat{1}_{N^{(1)}_{\mat P}(e_1) \subseteq X^{(q)}_{\mat v^{(i)}}(V')}\right)\right).
    \end{align*}
    Switching the order of summation gives:
    \begin{align} \label{eq:2}
        \vert E'\vert \leq \Gamma \cdot \sum_{\mat v^{(1)}, \ldots, \mat v^{(d)} \in [N]^{q-1}} \left(\sum_{e_1 \in [M]} \left(\prod_{i \in [d]} \mat{1}_{N^{(1)}_{\mat P}(e_1) \subseteq X^{(q)}_{\mat v^{(i)}}(V')}\right)\right).
    \end{align}

    Below we give a more useful upper bound on the inner summation:

    \begin{claim} \label{claim:usefulupper}
        Let $\mat v^{(1)}, \ldots, \mat v^{(d)} \in [N]^{q-1}$. Then
        \[\sum_{e_1 \in [M]} \left(\prod_{i \in [d]} \mat{1}_{N^{(1)}_{\mat P}(e_1) \subseteq X^{(q)}_{\mat v^{(i)}}(V')}\right) \leq \left(\prod_{i \in [d]}\vert X^{(q)}_{\mat v^{(i)}}(V')\vert\right)\frac{M}{N^d(1 - \beta t)}.\]
    \end{claim}

    \begin{proof}
    Notice that the left hand side simply counts the number of indices $e_1 \in [M]$ such that $N^{(1)}_{\mat P}(e_1)$ is contained within the set $\cap_{i \in [d]} X^{(q)}_{\mat v^{(i)}}(V').$
    If $\cap_{i \in [d]} X^{(q)}_{\mat v^{(i)}}(V')$ is empty, then
    \[\sum_{e_1 \in [M]} \left(\prod_{i \in [d]} \mat{1}_{N^{(1)}_{\mat P}(e_1) \subseteq X^{(q)}_{\mat v^{(i)}}(V')}\right) = 0,\]
    in which case the claim holds automatically because the upper bound is always non-negative. So assume that $\cap_{i \in [d]} X^{(q)}_{\mat v^{(i)}}(V')$ is nonempty. Here, we can use the expansion assumption on $\mat P$ to get the upper bound
    \begin{align*}
        \sum_{e_1 \in [M]} \left(\prod_{i \in [d]} \mat{1}_{N^{(1)}_{\mat P}(e_1) \subseteq X^{(q)}_{\mat v^{(i)}}(V')}\right) & \leq \left(\frac{\vert \cap_{i \in [d]} X^{(q)}_{\mat v^{(i)}}(V')\vert}{N}\right)^d M + \beta M \\
        & \leq \left(\frac{\prod_{i \in [d]}\vert X^{(q)}_{\mat v^{(i)}}(V')\vert^{1/d}}{N}\right)^d M + \beta M \tag{Upper bounding intersection size by geometric mean of sizes.}\\
        & \leq \left(\prod_{i \in [d]}\vert X^{(q)}_{\mat v^{(i)}}(V')\vert\right) \cdot M/N^d + \beta M.
    \end{align*}
    
    Since $\cap_{i \in [d]} X^{(q)}_{\mat v^{(i)}}(V')$ is nonempty, each $X^{(q)}_{\mat v^{(i)}}(V')$ is nonempty. Hence by Claim \ref{claim:minnonzero}, each of these sets must be of size at least $(1/t-\beta)^{1/d}N$. This implies that
    \[\prod_{i \in [d]}\vert X^{(q)}_{\mat v^{(i)}}(V')\vert \geq (1/t - \beta)N^d.\]
    Writing $\Lambda = \prod_{i \in [d]}\vert X^{(q)}_{\mat v^{(i)}}(V')\vert$ and using Lemma \ref{lem:xx'y}, we have
    \begin{align*}
        \left(\prod_{i \in [d]}\vert X^{(q)}_{\mat v^{(i)}}(V')\vert\right) \cdot M/N^d + \beta M & \leq \Lambda \cdot M/N^d + \beta M \\
        & \leq \left(\frac{\Lambda/N^d + \beta}{\Lambda/N^d}\right) \Lambda M/N^d \\
        & \leq \left(\frac{1/t}{1/t - \beta}\right) \Lambda M/N^d \\
        & \leq \Lambda \frac{M}{N^d(1 - \beta t)}. \\
    \end{align*}
    Putting everything together, we have that whenever $\cap_{i \in [d]} X^{(q)}_{\mat v^{(i)}}(V')$ is nonempty, then
    \[\sum_{e_1 \in [M]} \left(\prod_{i \in [d]} \mat{1}_{N^{(1)}_{\mat P}(e_1) \subseteq X^{(q)}_{\mat v^{(i)}}(V')}\right) \leq \left(\prod_{i \in [d]}\vert X^{(q)}_{\mat v^{(i)}}(V')\vert\right)\frac{M}{N^d(1 - \beta t)}.\]
    \end{proof}

    From here, the inductive upper bound follows by just combining Equation (\ref{eq:2}) with Claim \ref{claim:usefulupper} and then performing algebraic manipulations.
    \begin{align*}
        \vert E'\vert \leq & \Gamma \cdot \sum_{\mat v^{(1)}, \ldots, \mat v^{(d)} \in [N]^{q-1}} \left(\sum_{e_1 \in [M]} \left(\prod_{i \in [d]} \mat{1}_{N^{(1)}_{\mat P}(e_1) \subseteq X^{(q)}_{\mat v^{(i)}}(V')}\right)\right) \tag{Equation (\ref{eq:2}).} \\
        \leq & \Gamma \cdot \sum_{\mat v^{(1)}, \ldots, \mat v^{(d)} \in [N]^{q-1}} \left(\left(\prod_{i \in [d]}\vert X^{(q)}_{\mat v^{(i)}}(V')\vert\right)\frac{M}{N^d(1 - \beta t)}\right) \tag{Claim \ref{claim:usefulupper}.} \\
        \leq & \Gamma \cdot \frac{M}{N^d(1 - \beta t)} \cdot \sum_{\mat v^{(1)}, \ldots, \mat v^{(d)} \in [N]^{q-1}} \left(\prod_{i \in [d]}\vert X^{(q)}_{\mat v^{(i)}}(V')\vert\right) \\
        \leq & \frac{M^q}{N^{dq}(1 - \beta t)^q} \cdot \sum_{\mat v^{(1)}, \ldots, \mat v^{(d)} \in [N]^{q-1}} \left(\prod_{i \in [d]}\vert X^{(q)}_{\mat v^{(i)}}(V')\vert\right) \tag{$\Gamma = \frac{M^{q-1}}{N^{d(q-1)}(1-\beta t)^{q-1}}$} \\
        \leq & \frac{M^q}{N^{dq}(1 - \beta t)^q} \cdot \left(\sum_{\mat v \in [N]^{q-1}} \vert X^{(q)}_{\mat v}(V')\vert\right)^d \\
    \end{align*}
    Because the sets $X^{(q)}_{\mat v}(V')$ partition $V'$, we can write:
    \begin{align*}
        \vert E'\vert \leq & \frac{M^q}{N^{dq}(1 - \beta t)^q} \cdot \vert V'\vert^d \\
        \leq & \frac{M^q}{N^{dq}(1 - \beta t)^q} \cdot (N^q\delta)^d \tag{Set $V' = N^{(q)}_{\mat P}(E')$ and assumed $\vert N^{(q)}_{\mat P}(E')\vert \leq N^q\delta$}\\
        \leq & \frac{M^q\delta^d}{(1 - \beta t)^q} \\
    \end{align*}
    Thus the inductive upper bound holds.
\end{proof}

\section{Getting a Constant Gap for SVP} \label{sec:constgap}

In this section, we prove Theorem \ref{thm:constantgap}, which is re-stated below.

\paragraph{Theorem \ref{thm:constantgap}, Restated.} 
    \emph{There is a deterministic $\exp{2^{O(\sqrt{\log n}(\log\log n)^{O(1)})}}$ time reduction from SAT instances of size $n$ to the following problem, where $M = \exp{2^{O(\sqrt{\log n}(\log\log n)^{O(1)})}}$ and \\ $N = \exp{2^{O(\sqrt{\log n}(\log\log n)^{O(1)})}}$. Given a matrix $\mat B \in \mathbb{Z}^{M \times N}$ and a threshold $h$, distinguish between the following two cases:
    \begin{enumerate}
        \item There exists a nonzero vector $\mat x \in \mathbb{Z}^M$ such that $\|\mat x\mat B\|_0 \leq h$, \emph{and additionally} $\mat x \mat B \in \{-1,0,1\}^N$.
        \item For all nonzero vectors $\mat x \in \mathbb{Z}^M$, we have $\|\mat x \mat B\|_0 \geq 2h$.
    \end{enumerate}}
\noindent
For reference, we also re-state the QRDH problem.

\paragraph{Problem \ref{prob:qrdh}, Restated.}
    \emph{Given a hypergraph $H = (V, E)$ of arity $d$ with $\vert V\vert = N$ and $\vert E\vert = M$, along with a parameter $r \geq 1$ and parameters $\alpha, \beta \in [0, 1]$, distinguish between the following two cases:
    \begin{enumerate}
        \item There exists a vertex subset $V' \subseteq V$ of size at most $N/r$ that fully contains at least $\alpha(1/r)^{d-1} M$ hyperedges.
        \item Every vertex subset $V' \subseteq V$ fully contains at most $(\vert V'\vert/N)^d M + \beta M$ hyperedges.
    \end{enumerate}
    As shorthand, we refer to the problem as $(N, M, d, r, \alpha, \beta)$-QRDH. We refer to $r$ and $\alpha$ as the \emph{completeness parameters}, and $\beta$ as the \emph{soundness parameter}.}\\

We show in Section \ref{sec:qrdhproof} that a modification of Khot's quasi-random PCP \cite{khot2006ruling}, combined with tools from a paper by Khot and Saket \cite{khot2016hardness}, gives the following theorem.

\begin{theorem} \label{thm:khotqrdh}
    There is a deterministic $\exp{2^{O(\sqrt{\log n}\log^4\log n)}}$ time reduction from SAT instances of size $n$ to 
    \begin{align*}
        \big( & N = \exp{2^{O(\sqrt{\log n}\log^4\log n)}}, \\
        & M = \exp{2^{O(\sqrt{\log n}\log^4\log n)}}, \\
        & d = O(\sqrt{\log n}/\log^2\log n), \\
        & r = 2^{\Theta(\sqrt{\log n})}, \alpha = 0.5, \beta = n^{-\Omega(\log\log n)}\big)\textnormal{-QRDH}.
    \end{align*}
\end{theorem}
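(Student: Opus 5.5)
The plan is to realize QRDH as an equivalent reformulation of the inner verifier of Khot's quasi-random PCP \cite{khot2006ruling}, and to instantiate the outer verifier with the deterministic Label Cover construction of Khot and Saket \cite{khot2016hardness}, with parameters tuned to the $\sqrt{\log n}$ regime. First I fix the outer PCP. Starting from SAT of size $n$, I apply a deterministic PCP with quasi-polynomial blowup, for instance Moshkovitz–Raz style Label Cover or the smooth variant used in \cite{khot2016hardness}, whose soundness is $2^{-\Omega(\sqrt{\log n})}$ or better. I follow it with a deterministic field-extension / low-degree-test step in the style of Khot–Saket. That step yields an instance in which each constraint is a ``subspace'' or ``curve'' test over $\mathbb{F}^m$ with $\vert\mathbb{F}\vert = 2^{\Theta(\sqrt{\log n})}$. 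Khot's quasi-random hypergraph is built over exactly such geometric objects. The sizes $\exp{2^{O(\sqrt{\log n}\log^4\log n)}}$ come from enumerating all affine subspaces and points of $\mathbb{F}^m$ with $m$ polylogarithmic, together with a long-code-like inner encoding of alphabet size $2^{2^{O(\sqrt{\log n})}}$.

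Second, I build the hypergraph. Vertices are (table-entry, value) pairs in the Khot-style encoding. Hyperedges of arity $d$ are $d$-tuples of points sampled along a random low-dimensional subspace or curve, each tuple checked for consistency with a single outer constraint. I would take $r = \vert\mathbb{F}\vert^{\Theta(1)} = 2^{\Theta(\sqrt{\log n})}$ and $d = O(\sqrt{\log n}/\log^2\log n)$, so that $(1/r)^{d-1}$ and the soundness error balance. \textbf{Completeness} is direct. A satisfying assignment picks, in each block, the $1/r$ fraction of vertices consistent with the honest encoding. A hyperedge is fully contained whenever all of its $d$ points land in the chosen sets, which happens with probability about $(1/r)^{d-1}$ because the first point fixes a coset and the remaining $d-1$ points each survive independently with probability $1/r$. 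The $\alpha = 1/2$ absorbs the small loss from the outer verifier's completeness and from degenerate tuples.

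Third, I prove \textbf{soundness}, which I expect to be the main obstacle. Given an arbitrary $V'$ with density $\mu = \vert V'\vert/N$, I write the fraction of contained hyperedges as $\mathbb{E}[\prod_{i\le d} f(x_i)]$, where $f = \mathbf 1_{V'}$ and the $x_i$ are correlated along a random subspace. I split $f$ into a structured part and a pseudorandom part. The pseudorandom part is handled by a Fourier / higher-order-uniformity bound: since the points on a random subspace are pairwise (and $d$-wise, via the algebraic structure) close to independent, the expectation is $\mu^d$ plus an error. The key bound I aim for is an error of $\vert\mathbb{F}\vert^{-\Omega(\log\log n)}$-type, which gives $\beta = n^{-\Omega(\log\log n)}$. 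The structured part yields influential coordinates, and I decode those into a labeling of the outer Label Cover satisfying more than its soundness fraction, which is a contradiction. Because $d$ is large, a naive union bound over the $d$ factors loses too much, so I would follow Khot's argument of bounding $d$-wise correlations through a Gowers-style repeated Cauchy–Schwarz. I would choose the subspace dimension to be about $\log^4\log n$, which is why those factors appear in the size. Finally, I would check that every step is deterministic: the outer PCP, the Khot–Saket derandomized composition, and explicit enumeration of all subspaces. The running time is then polynomial in the output size, namely $\exp{2^{O(\sqrt{\log n}\log^4\log n)}}$.
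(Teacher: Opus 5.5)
\textbf{The gap: your construction cannot reach the required soundness.} The one delicate feature of this statement is $\beta = n^{-\Omega(\log\log n)}$, and the parameters you fix rule it out.

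\emph{Small ambient field.} You tie $r$ to the ambient field, $|\mathbb{F}| = 2^{\Theta(\sqrt{\log n})}$. Then Schwartz--Zippel and low-degree-test error terms are already of order $2^{-O(\sqrt{\log n})}$. Even your target $|\mathbb{F}|^{-\Omega(\log\log n)}$ is only $2^{-\Theta(\sqrt{\log n}\log\log n)} = n^{-o(1)}$.

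\emph{Outer Label Cover.} A Label Cover with label sets of size $2^{O(\sqrt{\log n})}$ (which is what an inner long-code-like table of size $2^{2^{O(\sqrt{\log n})}}$ corresponds to) has value at least $2^{-O(\sqrt{\log n})}$ on every instance. So a decoding-based soundness argument cannot certify an additive error much below that. Driving its soundness down to $n^{-\Omega(\log\log n)}$ would need labels of size $n^{\Omega(\log\log n)}$, and long codes far beyond the allowed $\exp(2^{O(\sqrt{\log n}\log^4\log n)})$ size.

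\emph{No gap at all.} For $d$ of order $\sqrt{\log n}/\log^2\log n$, as you choose it, an additive error of $2^{-\Theta(\sqrt{\log n}\log\log n)}$ exceeds the completeness value $\alpha r^{-(d-1)} = 2^{-\Theta(\log n/\log^2\log n)}$. The resulting instance would therefore have no gap.

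\emph{Unproved soundness step.} You assert the soundness bound rather than argue it. With subspaces of dimension $\log^4\log n \ll d$, the $d$ queried points are far from $d$-wise independent. The real content of a quasi-random PCP is that any $V'$ beating $\mu^d$ by more than $\beta$ decodes to a good solution of the outer problem. That forces the outer problem itself to have soundness around $n^{-\Omega(\log\log n)}$, which you never construct.

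\textbf{The missing idea: import the tiny $\beta$ from a strengthened starting problem.} In the paper, $\beta$ does not come from a new inner analysis. The argument runs as follows.
\begin{enumerate}
\item Start from deterministic GapMDC (Bhattiprolu, Guruswami, and Ren), not Label Cover.
\item Amplify it by tensoring plus expander walks, with the walked column tuples recombined through Vandermonde matrices. This is done first to arbitrary constants, then over $O(\log\log n)$ recursive rounds that push completeness to $1/\log^2 n$ at soundness $0.75$.
\item Take one final walk of length $\log n\log\log n$ with $w = n^{\log\log n}$. This gives completeness $0.5$ and relative distance $1 - n^{-\Omega(\log\log n)}$ over $\mathbb{F}_{2^\lambda}$ with $2^\lambda \ge n^{\Theta(\log\log n)}$.
\item Apply Khot's GapMDC$\to$GapAlgCSP reduction with $m^* = \lceil 2^{\sqrt{\log n}\log^4\log n}\rceil$ and $d^* = \lceil\sqrt{\log n}/\log^2\log n\rceil$. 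These are chosen so that $\binom{m^*}{d^*}$ exceeds the MDC instance size; this, not a subspace dimension, is where the $\log^4\log n$ comes from. The resulting soundness is $\max(1-\beta_{\mathrm{MDC}}, O(d^*/2^\lambda)) = n^{-\Omega(\log\log n)}$.
\item Apply the Khot--Saket GapAlgCSP$\to$QRDH reduction as a black box. This gives $\beta^{\Omega(1)} + 2^{-\Omega(\lambda)} = n^{-\Omega(\log\log n)}$.
\end{enumerate}

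Crucially, $r = 2^{\lambda'}$ comes from a subfield with $\lambda' \approx \sqrt{\log n}$ dividing $\lambda$. So the completeness parameter $r$ is decoupled from the huge ambient field that controls the soundness error. Your proposal conflates the two.
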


\paragraph{Setting Parameters and Starting the Reduction.}
Start with a SAT instance of size $n$, and apply the reduction in Theorem \ref{thm:khotqrdh} to obtain a hypergraph $H^-$ with $N^-$ vertices and $M^-$ hyperedges, which we represent using an indicator matrix $\mat {P^-} \in \{0,1\}^{M^- \times N^-}$. The hypergraph is of arity $d$, the completeness parameters are $r$ and $\alpha$, and the soundness parameter is $\beta$.

We now define a sequence of additional parameters. First fix a positive integer $t$ satisfying
\[ n/(\alpha(1/r)^{d-1}) \leq t \leq 2n/(\alpha(1/r)^{d-1}).\]
By the setting of parameters in Theorem \ref{thm:khotqrdh}, the lower bound on $t$ is superconstant, so an appropriate choice always exists assuming $n$ is sufficiently large.

The next step is to (slightly) modify the hypergraph $H^-$, so that the number of rows in the resulting hypergraph is divisible by $t$ (in addition to satisfying other useful properties). To this end, duplicate every hyperedge of $H^-$ exactly $n^2tN^-$ times to get a new hypergraph $H$, and leave the vertex set unchanged. $H$ has $N = N^-$ vertices and $M = Nn^2tM^-$ hyperedges, and we represent $H$ using an indicator matrix $\mat P \in \{0,1\}^{M \times N}$. Notice that $M$ is bounded as $\exp{2^{O(\sqrt{\log n}\log^4\log n)}}$, and this conversion maintains cases (1) and (2) of the QRDH problem. To be more precise:
\begin{enumerate}
    \item \emph{If there exists a vertex subset $V' \subseteq V$ of size at most $N/r$ that fully contains at least $\alpha(1/r)^{d-1}M^-$ hyperedges of $H^-$, then that same subset fully contains at least $\alpha(1/r)^{d-1}M$ hyperedges of $H$.}
    \item \emph{If every vertex subset $V' \subseteq V$ fully contains at most $(\frac{\vert V'\vert}{N})^d M^- + \beta M^-$ hyperedges of $H^-$, then every vertex subset $V' \subseteq V$ fully contains at most $(\frac{\vert V'\vert}{N})^d M + \beta M$ hyperedges of $H$.}
\end{enumerate}

Fix three additional positive integers $q, h,$ and $w$ satisfying:
\begin{align*}
    \log n/\log\log n \quad \leq & \quad q \quad \leq \quad 2\log n/\log\log n \\
    \lceil\alpha(1/r)^{d-1}M\rceil^q \quad \leq & \quad h \quad \leq \quad (2 \alpha(1/r)^{d-1}M)^q \\
    \left(\frac{\alpha(1/r)^{d-1}M}{N/r}\right)^q/n \quad \leq & \quad w \quad \leq \quad 2\left(\frac{\alpha(1/r)^{d-1}M}{N/r}\right)^q/n
\end{align*}
We also require that $n$ divides $h$. Notice that, by our choice of parameters and the fact that we duplicated each hyperedge of the original hypergraph $H^-$ exactly $Nn^2t$ times, we have
\begin{align*}
    \log n/\log\log n & \geq \omega(1)
\end{align*}
\begin{align*}
    \alpha(1/r)^{d-1}M & \geq \Theta\left(1/2^{\sqrt{\log n}}\right)^{O(\sqrt{\log n}/\log^2\log n)} \cdot Nn^2tM^- \\
    & \geq n^{2 - O(1/\log^2\log n)} \\
    & \geq \omega(n)
\end{align*}
\begin{align*}
    \left(\frac{\alpha(1/r)^{d-1}M}{N/r}\right)^q/n & \geq \left(\frac{\Theta\left(1/2^{\sqrt{\log n}}\right)^{O(\sqrt{\log n}/\log^2\log n)} \cdot Nn^2tM^-}{N/r}\right)^q/n \\
    & \geq n^{2 - O(1/\log^2\log n)}/n \\
    & \geq \omega(1)
\end{align*}
So an appropriate choice always exists, assuming that $n$ is sufficiently large.

\paragraph{Constructing the Lattice.} We construct the matrix $\mat B$ of basis vectors for our SVP instance as follows. The matrix will have $M^q$ rows, which (as in Section \ref{sec:VFtensor}) we index using members of $[M]^q$. See Figures \ref{fig:PQR} and \ref{fig:matC} for an example construction.
\begin{enumerate}
    \item Let $\mat T = \big[\mat A \|\mat Q\big]$ be the $(t, q)$-VF tensor product of $\mat P$. Notice that $\mat A$ is an integer matrix with $M^q$ rows and $qM^q/t$ columns, and $\mat Q$ is an integer matrix with $M^q$ rows and $N^q$ columns. All entries of $\mat A$ are nonnegative and less than $3M^q$, and $\mat Q$ is a 0/1 matrix.
    \item 
    Let $\mat V \in \mathbb{Z}^{M^q \times w}$ be an $(a, w)$ reduced Vandermonde matrix, where $a$ is a prime satisfying $M^q < a < 3M^q$. Index the first $M^q$ rows of $\mat V$ using distinct elements of $[M]^q$. For each $\mat v \in [N]^q$, we construct a matrix $\mat R^{(\mat v)} \in \mathbb{Z}^{M^q \times w}$ using $\mat Q$ as follows:
    \begin{itemize}
        \item For each $\mat e \in [M]^q$, do:
        \begin{itemize}
            \item If $\mat Q_{\mat e, \mat v} = 0$, then set $\mat R^{(\mat v)}_{\mat e} = \mat 0^w$.
            \item Otherwise, set $\mat R^{(\mat v)}_{\mat e} = \mat V_{\mat e}$.
        \end{itemize}
    \end{itemize}
    Now let $\mat R$ be the horizontal concatenation of the matrices $\{\mat R^{(\mat v)} : \mat v \in [N]^q\}$.
    
    In other words, to construct the matrix $\mat R$, we replace every $0$ entry in $\mat Q$ with the row vector $\mat 0^w$, and we replace every $1$ entry in $\mat Q$ with a row vector taken from a width-$w$ reduced Vandermonde matrix.
    \item Let $\mat W \in \mathbb{Z}^{M^q \times h/n}$ be an $(a, h/n)$ reduced Vandermonde matrix, where $a$ is a prime satisfying $M^q < a < 3M^q$.
    \item Let $\mat C = \big[\mat A \| \mat R \| \mat W\big] \in \mathbb{Z}^{M^q \times (qM^q/t + N^qw + h/n)}$.
    \item Construct the basis matrix $\mat B$ by horizontally concatenating $2h$ copies of $\mat C$, and then horizontally appending a single $M^q \times M^q$ identity matrix. For convenience, denote the number of rows in $\mat B$ as $M'$ and the number of columns in $\mat B$ as $N'$.
\end{enumerate}

\paragraph{Relating Back to Theorem \ref{thm:constantgap}.} By our choice of parameters, the time to construct $\mat B$ is $\exp{2^{O(\sqrt{\log n}(\log\log n)^{O(1)})}}$, and all steps (including the VF tensor product) are deterministic. We also have that $M' = \exp{2^{O(\sqrt{\log n}(\log\log n)^{O(1)})}}$ and $N' = \exp{2^{O(\sqrt{\log n}(\log\log n)^{O(1)})}}$. All that remains is to prove:

\begin{lemma}[Completeness] \label{lem:completeness}
    Suppose that the starting SAT instance was satisfiable, so that the hypergraph $H$ satisfies case (1) of the QRDH problem. Then assuming $n$ is sufficiently large, there exists a nonzero vector $\mat x \in \mathbb{Z}^{M'}$ such that $\|\mat x\mat B\|_0 \leq h$, and furthermore $\mat x \mat B \in \{-1,0,1\}^{N'}$.
\end{lemma}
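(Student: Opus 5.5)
Let $V_0 \subseteq V$ be the vertex subset from case (1) of QRDH for $H$. Then $\vert V_0\vert \leq N/r$, and $V_0$ fully contains a set $E_0 \subseteq [M]$ of exactly $h_0 := \lceil \alpha(1/r)^{d-1}M\rceil$ hyperedges. (If it contains more, we discard the extras.) We take $E' = E_0^q \subseteq [M]^q$, so $\vert E'\vert = h_0^q \leq h$ by the choice of $h$.

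The plan is to find a nonzero $\mat x \in \{-1,0,1\}^{M^q}$ supported inside $E'$ with $\mat x \mat C = \mat 0$. Then $\mat x \mat B$ is zero on all $2h$ copies of $\mat C$ and equals $\mat x$ on the identity block. Hence $\|\mat x\mat B\|_0 \leq \vert E'\vert \leq h$ and $\mat x\mat B \in \{-1,0,1\}^{N'}$, which is exactly the claim. We find $\mat x$ by pigeonhole: consider the $2^{h_0^q}$ vectors $\mat y \in \{0,1\}^{M^q}$ supported on $E'$. If two of them, $\mat y \neq \mat y'$, satisfy $\mat y\mat C = \mat y'\mat C$, then $\mat x = \mat y - \mat y'$ works.

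The key step is to bound the number $D$ of columns of $\mat C$ on which the rows indexed by $E'$ have any nonzero entry. We count block by block.
\begin{itemize}
\item \emph{The block $\mat A$.} A block $\mat A^{(S)}$ can be nonzero on $E'$ only if $S \cap E' \neq \emptyset$. This forces the $q-1$ fixed coordinates of $S$ to lie in $E_0$, so there are at most $q h_0^{q-1}$ such $S$, each contributing $M/t$ columns. Since $t \geq n/(\alpha(1/r)^{d-1})$, we have $M/t \leq h_0/n$, so this block contributes at most $q h_0^q/n$ columns.
\item \emph{The block $\mat R$.} A block $\mat R^{(\mat v)}$ is nonzero on $E'$ only if $\mat v \in N^{(q)}_{\mat P}(E') \subseteq V_0^q$. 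There are at most $(N/r)^q$ such $\mat v$, each contributing $w$ columns. Using the upper bound on $w$, this is at most $2(\alpha(1/r)^{d-1}M)^q/n \leq 2h_0^q/n$.
\item \emph{The block $\mat W$.} It has $h/n \leq (2\alpha (1/r)^{d-1}M)^q/n \leq 2^q h_0^q/n$ columns.
\end{itemize}
Altogether $D \leq (q+2+2^q)h_0^q/n$. With $q \leq 2\log n/\log\log n$ this gives $D \leq h_0^q \cdot n^{O(1/\log\log n)}/n$.

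Next we bound the number of possible images. All entries of $\mat C$ lie in $[0, 3M^q)$, so each relevant coordinate of $\mat y\mat C$ lies in $[0, 3M^q h_0^q] \subseteq [0, 3M^{2q}]$. The number of possible images is therefore at most $(3M^{2q}+1)^D = 2^{O(D q\log M)}$. Pigeonhole applies once $h_0^q > c\, D q \log M$ for a suitable constant $c$. This reduces to
\[
c\,q(q+2+2^q)\log M < n.
\]
Here $\log M = 2^{O(\sqrt{\log n}\log^4\log n)} = n^{o(1)}$ and $q(q+2+2^q) = n^{o(1)}$, so the inequality holds for all sufficiently large $n$.

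The main obstacle is this column count. In particular, we must check that the width $M/t$ of the $\mat A^{(S)}$ blocks and the number of $S$ meeting $E'$ together stay a factor roughly $n$ below $h_0^q$. This is exactly where the lower bound on $t$ and the Cartesian-product structure of $E'$ enter. The remaining pieces are routine: the $\mat R$ and $\mat W$ bounds follow directly from the parameter choices for $w$ and $h$, and nonzeroness of $\mat x$ follows from $\mat y \neq \mat y'$.
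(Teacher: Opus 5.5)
Your proposal is correct and follows essentially the same route as the paper: you restrict to $E_0^q$, count the nonzero columns of $\mat A$, $\mat R$, and $\mat W$ on those rows block by block, and then apply pigeonhole to $0/1$ combinations to get a $\{-1,0,1\}$ kernel vector whose image under $\mat B$ lives only in the identity block. The only difference is cosmetic: you verify the pigeonhole inequality directly from $D \le (q+2+2^q)h_0^q/n$, whereas the paper first relaxes the column count to $h^-/\sqrt{n}$.
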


\begin{lemma}[Soundness] \label{lem:soundness}
    Suppose that the starting SAT instance was unsatisfiable, so that the hypergraph $H$ satisfies case (2) of the QRDH problem. Then assuming $n$ is sufficiently large, for all nonzero $\mat x \in \mathbb{Z}^{M'}$, we have $\|\mat x\mat B\|_0 \geq 2h$.
\end{lemma}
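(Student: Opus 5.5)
The plan is to split on whether $\mat x\mat C$ vanishes. Fix a nonzero $\mat x \in \mathbb{Z}^{M'}$ and let $E' \subseteq [M]^q$ be the set of indices of its nonzero entries.

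\emph{Case $\mat x\mat C \neq \mathbf 0$.} Some coordinate of $\mat x\mat C$ is nonzero. Since $\mat B$ contains $2h$ horizontal copies of $\mat C$, that coordinate reappears $2h$ times in $\mat x\mat B$, so $\|\mat x\mat B\|_0 \geq 2h$.

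\emph{Case $\mat x\mat C = \mathbf 0$.} The identity block gives $\|\mat x\mat B\|_0 \geq |E'|$, so it suffices to show $|E'| \geq 2h$. Suppose for contradiction that $|E'| < 2h$. We then have $\mat x\mat A = \mathbf 0$, $\mat x\mat R = \mathbf 0$ and $\mat x\mat W = \mathbf 0$, and I would extract three facts from these.
\begin{enumerate}
\item From $\mat x\mat W = \mathbf 0$ and Lemma \ref{lem:vandindependent}, any at most $h/n$ rows of $\mat W$ are linearly independent, so $|E'| > h/n > 0$.
\item From $\mat x\mat R = \mathbf 0$, look at the block $\mat R^{(\mat v)}$ for each column $\mat v \in V' := N^{(q)}_{\mat P}(E')$. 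The rows of $E'$ that are nonzero in that block are distinct Vandermonde rows, and all of them receive nonzero coefficients. Hence, again by Lemma \ref{lem:vandindependent}, more than $w$ elements of $E'$ are incident to $\mat v$. Each $\mat e \in E'$ is incident to exactly $d^q$ columns, so double counting gives $|V'|\, w < |E'|\, d^q$.
\item From $\mat x\mat A = \mathbf 0$, Theorem \ref{thm:vandtensor} applies; its hypotheses hold because $t$ divides $M$ by construction, $1/t > \beta$ since $t \le 2n r^{d-1}/\alpha \ll n^{\Omega(\log\log n)}$, and the expansion condition is case (2). Choose $\delta$ slightly below $\bigl((1-\beta t)^q |E'|/M^q\bigr)^{1/d}$, which lies in $[0,1]$. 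The theorem then gives $|V'| > N^q\delta$, so $|V'| \gtrsim N^q (1-\beta t)^{q/d} (|E'|/M^q)^{1/d}$.
\end{enumerate}

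Combining the last two bounds and using $|E'| > 0$ gives
\[|E'|^{1-1/d} \gtrsim \frac{w\, N^q (1-\beta t)^{q/d}}{d^q M^{q/d}}.\]
Now substitute $w \geq \bigl(\alpha r^{2-d} M/N\bigr)^q/n$ and compare with $(2h)^{1-1/d}$, where $h \leq (2\alpha r^{1-d}M)^q$. After cancelling, the ratio of the left side to $(2h)^{1-1/d}$ is at least roughly
\[\frac{r^{q/d}\,\alpha^{q/d}\,(1-\beta t)^{q/d}}{n\, d^q\, 2^{O(q)}}.\]
With the chosen parameters, $r^{q/d} = 2^{\Theta(\log n \cdot \log\log n)} = n^{\Theta(\log\log n)}$. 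In contrast, $d^q \le 2^{q\log d} = n^{O(1)}$, $2^{O(q)} = n^{o(1)}$ and $\alpha^{q/d} = n^{-o(1)}$. Finally $\beta t q \ll 1$, so $(1-\beta t)^{q/d} = 1 - o(1)$. The ratio therefore exceeds $1$ for large $n$, contradicting $|E'| < 2h$.

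The main obstacle is this last parameter computation. One must check that the factor $r^{q/d}$, which is the amplified expansion gap, dominates the loss $n\,d^q$ coming from the double counting in step 2 and the division by $n$ in the choice of $w$. One must also make sure the slack factors $2^q$ in the definitions of $h$ and $w$ and the term $(1-\beta t)^q$ are negligible. I would carry out this calculation with explicit exponents of $n$ and $\log\log n$.
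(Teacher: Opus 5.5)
Your proof is correct. Its skeleton matches the paper's. You dispose of $\mat x\mat C \neq \mat 0$ via the $2h$ copies of $\mat C$, use the identity block to reduce to showing $|E'| \geq 2h$, and then play Theorem~\ref{thm:vandtensor} against the Vandermonde blocks $\mat R^{(\mat v)}$. Your double counting in step~2 is just the contrapositive of the paper's averaging step.

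The one real difference is how $\delta$ is chosen. The paper fixes $\delta = \bigl((\alpha(1/r)^{d-1})^q/n^2\bigr)^{1/d}$ once and for all. To invoke Theorem~\ref{thm:vandtensor} with this $\delta$, it needs an a priori bound $|E'| \geq h/n$ that exceeds the threshold $M^q\delta^d/(1-\beta t)^q$. That bound is exactly what $\mat W$ supplies (Claim~\ref{claim:usingW}). The paper then finds a column of multiplicity in $[1,w)$ using Proposition~\ref{claim:avgprinciple}.

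You instead let $\delta$ depend on $|E'|$, which gives $|V'| \geq N^q(1-\beta t)^{q/d}(|E'|/M^q)^{1/d}$ for every size of $E'$. The resulting bound on the average multiplicity, $|E'|^{1-1/d}\, d^q M^{q/d}/\bigl(N^q(1-\beta t)^{q/d}\bigr)$, is nondecreasing in $|E'|$. So checking it at $|E'| = 2h$ handles all smaller sets at once.

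Consequently, your step~1 is used only to get $|E'| > 0$, and that already follows from $\mat x \neq \mat 0$. In other words, your argument shows the $\mat W$ block is not needed for soundness. The price is a single combined exponent computation, in which the exponent of $r$ collapses to $q\bigl[(2-d)-(1-d)(1-1/d)\bigr] = q/d$, as you found. The paper instead uses two separate, individually simpler inequalities.

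A small point: you only have the upper bound $d = O(\sqrt{\log n}/\log^2\log n)$, so you get $r^{q/d} = 2^{\Omega(\log n\log\log n)}$ rather than $\Theta$. That lower bound is all the argument needs.
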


\subsection{Proof of Lemma \ref{lem:completeness} (Completeness)} \label{sec:complete}

    Our goal in this section is to find a short, sparse lattice vector $\mat x \mat B$, assuming the hypergraph $H$ satisfies case (1) of the QRDH problem.
    
    Notice that it will be sufficient to find a nonzero vector $\mat x \in \{-1,0,1\}^{M'}$ with at most 
    \[h^- = \lceil \alpha(1/r)^{d-1}M\rceil^q \leq h\]
    nonzero entries that satisfies $\mat x \mat C = \mat 0$. By construction, this implies that the nonzero entries in $\mat x \mat B$ come entirely from the single $M' \times M'$ identity matrix contained within $\mat B$, so $\mat x \mat B \in \{-1,0,1\}^{N'}$, and $\mat x \mat B$ has at most $h^- \leq h$ nonzero entries.
    
    The overall plan for finding $\mat x$ is to identify a small, compressing submatrix of $\mat C$, and then use the pigeonhole principle to argue that there exists a short, nonzero linear combination of the rows in this submatrix that sums to zero.

    \begin{lemma}
        Suppose that the hypergraph $H$ satisfies case (1) of the QRDH problem. Then assuming $n$ is sufficiently large, there exists a nonzero vector $\mat x \in \{-1,0,1\}^{M'}$ such that $\mat x\mat C = \mat 0$ and $\|\mat x\|_0 \leq h^-$.
    \end{lemma}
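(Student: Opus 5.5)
Let $V^* \subseteq [N]$ be the vertex subset from case (1), with $\vert V^*\vert \leq N/r$, and let $E^* \subseteq [M]$ be a set of exactly $\lceil\alpha(1/r)^{d-1}M\rceil$ hyperedges fully contained in $V^*$. Set $E' = (E^*)^q \subseteq [M]^q$, so that $\vert E'\vert = h^-$. I will restrict attention to coefficient vectors $\mat x \in \{-1,0,1\}^{M'}$ supported inside $E'$. For such $\mat x$ I will count how many columns of $\mat C$ can be nonzero on the rows indexed by $E'$, and how large the entries in those columns are. Then I will use pigeonhole to find two distinct $0/1$ vectors with the same product, and take their difference.

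\emph{Column count.} First consider $\mat A$. For a set $S \in \mathcal{S}$, the block $\mat A^{(S)}$ is nonzero on rows of $E'$ only if $S \cap E' \neq \emptyset$. Because $E'$ is a Cartesian product, this requires the $q-1$ fixed coordinates of $S$ to lie in $E^*$. There are at most $q\vert E^*\vert^{q-1}$ such $S$, each contributing $M/t$ columns. This gives at most $qM\vert E^*\vert^{q-1}/t$ columns. Since $t \geq n/(\alpha(1/r)^{d-1})$, we have $M/t \leq \vert E^*\vert/n$ up to rounding, so the bound is about $q\,h^-/n$.

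Next consider $\mat R$. The rows of $E'$ in $\mat Q$ are supported on $(V^*)^q$, so $\mat R$ contributes at most $(N/r)^q w \leq 2(N/r)^q(\alpha(1/r)^{d-1}M/(N/r))^q/n \leq 2h^-/n$ columns. Finally, $\mat W$ contributes $h/n \leq 2^q h^-/n$ columns. Here I need $h$ to be comparable to $h^-$; the stated range for $h$ allows a $2^q$ factor, so I would either use $h \approx h^-$, or note that $2^q = n^{O(1/\log\log n)}$ is still $\ll n$. In total the number of relevant columns is $D \leq (q + 2 + 2^q)h^-/n$, which is $h^-/n^{1-o(1)}$ since $q = O(\log n/\log\log n)$ and $2^q = n^{o(1)}$.

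\emph{Pigeonhole.} Every entry of $\mat C$ is a nonnegative integer less than $3M^q$. So for $\mat y \in \{0,1\}^{E'}$, every coordinate of $\mat y\mat C$ lies in $[0, 3M^q h^-]$, and $\mat y \mat C$ takes at most $(3M^qh^-+1)^D$ values. There are $2^{h^-}$ choices of $\mat y$. Pigeonhole therefore succeeds once
\[
h^- > D\log(3M^qh^-+1) \approx D\,(q\log M + O(\log h^-)).
\]
This yields $\mat y \neq \mat y'$ with $\mat y\mat C = \mat y'\mat C$, and then $\mat x = \mat y - \mat y' \in \{-1,0,1\}^{M'}$ is nonzero, supported in $E'$ (so $\|\mat x\|_0 \leq h^-$), and satisfies $\mat x\mat C = \mat 0$.

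\emph{Main obstacle.} The delicate step is the parameter check that $D \cdot q\log M < h^-$. Here $\log M = 2^{O(\sqrt{\log n}\log^4\log n)}$, which is superpolynomial in $\log n$, while $D \approx h^-/n^{1-o(1)}$ saves only a factor of about $n$. I need to verify that $q\log M \cdot n^{o(1)} \ll n$. This holds because $2^{O(\sqrt{\log n}\,\mathrm{polylog}\log n)} = n^{o(1)}$. The bookkeeping with the ceilings and with $2^q$ must be done carefully, using that $n$ is sufficiently large.
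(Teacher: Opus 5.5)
Your proposal is correct and follows essentially the same route as the paper. You restrict to the Cartesian product $(E^*)^q$ of size $h^-$, bound the nonzero columns contributed by $\mat A$, $\mat R$, and $\mat W$ separately, and then apply pigeonhole over $\{0,1\}$-combinations and take the difference; your column bound $h^-/n^{1-o(1)}$ is just a slightly tighter form of the paper's $h^-/\sqrt{n}$, and the parameter check you flag (that $q \cdot 2^q \cdot \log M = n^{o(1)}$) is exactly what gives the paper's count $2^{h^-/n^{0.5-o(1)}} < 2^{h^-}$.
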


    \begin{proof}

    By assumption on the hypergraph $H$, there exists a set $V'$ of at most $N/r$ vertices that fully contains at least $\alpha(1/r)^{d-1}M$ hyperedges. Because the cardinality of a finite set is integral, the lower bound is in fact $\lceil\alpha(1/r)^{d-1}M\rceil$. Viewed in terms of the indicator matrix $\mat P$, we have a subset $E' \subseteq [M]$ of size at least $\lceil\alpha(1/r)^{d-1}M\rceil$, such that the nonzero entries of the rows indexed by $E'$ in $\mat P$ are supported on at most $N/r$ distinct columns. If the size of $E'$ strictly exceeds $\lceil\alpha(1/r)^{d-1}M\rceil$, then arbitrarily delete members of $E'$ until its size is exactly $\lceil\alpha(1/r)^{d-1}M\rceil$.

    Denote the $q$-fold Cartesian product of $E'$ with itself as $(E')^q$. Let the row-induced submatrix of $\mat C$ indexed by $(E')^q$ be $\mat C'$. By construction, $\mat C'$ has exactly $h^- = \lceil\alpha(1/r)^{d-1}M\rceil^q$ rows. Below we give an upper bound on the number of nonzero columns.

    \begin{claim}
        Assuming $n$ is sufficiently large, the row-induced submatrix $\mat C'$ has at most $h^-/\sqrt{n}$ nonzero columns.\footnote{This upper bound is quite loose, but it will be sufficient for our purposes.}
    \end{claim}

    \begin{proof}
        Recall that $\mat C = \big[\mat A \| \mat R \| \mat W\big]$, and write $\mat C' = \big[\mat A' \| \mat R' \| \mat W'\big]$, where $\mat A'$, $\mat R'$, and $\mat W'$ are (respectively) the row-induced submatrices of $\mat A$, $\mat R$, and $\mat W$ indexed by $(E')^q$. We show that each of $\mat A'$, $\mat R'$, and $\mat W'$ has $o(h^-/\sqrt{n})$ nonzero columns, which proves the claim for sufficiently large $n$.

        \paragraph{Matrix $\mat A'$.} Using the definition of Vandermonde fortified tensor products, we can write $\mat A \in \mathbb{Z}^{M' \times qM^q/t}$ in terms of a collection $\mathcal{S}$ of $qM^{q-1}$ subsets of $[M]^q$. Recall that $\mathcal{S}$ is defined as
        \[\mathcal{S} = \Big\{\{(e_1, \ldots, e_{\ell-1}, e_\ell', e_{\ell+1}, \ldots, e_q) : e_\ell' \in [M]\} \quad : \quad \ell \in [q] \text{ and } e_1, \ldots, e_{\ell-1}, e_{\ell+1}, \ldots, e_q \in [M]\Big\}.\]
        To construct $\mat A$, we take the horizontal concatenation of all matrices in $\{\mat A^{(S)} : S \in \mathcal{S}\}$, where each $\mat A^{(S)}$ is of width $M/t$ and has exactly $M$ nonzero rows, those being the rows indexed by $S$.

        So to bound the number of nonzero columns in $\mat A'$, it will be sufficient to bound the number of sets $S \in \mathcal{S}$ such that $(E')^q \cap S \neq \emptyset$, and then multiply the result by $M/t$. The number of such sets is exactly the number of choices for $\ell \in [q]$ and $e_1, \ldots, e_{\ell-1}, e_{\ell+1}, \ldots, e_q \in [M]$ such that there exists $e_\ell'$ with $(e_1, \ldots, e_{\ell-1}, e_\ell', e_{\ell+1}, \ldots, e_q) \in (E')^q$. There are $q$ choices for $\ell$, and there are $\lceil\alpha(1/r)^{d-1}M\rceil^{q-1}$ choices for $e_1, \ldots, e_{\ell-1}, e_{\ell+1}, \ldots, e_q$, due to the Cartesian product structure of $(E')^q$.

        Using $q \leq 2\log n/\log\log n$ and $t \geq n/(\alpha(1/r)^{d-1})$, the number of nonzero columns in $\mat A$ is at most
        \begin{align*}
            q\lceil\alpha(1/r)^{d-1}M\rceil^{q-1} \cdot M/t & \leq 2(\log n/\log\log n) \cdot \lceil\alpha(1/r)^{d-1}M\rceil^{q-1} \cdot M/(n/(\alpha(1/r)^{d-1})) \\
            & \leq \lceil\alpha(1/r)^{d-1}M\rceil^{q-1} \cdot M\alpha(1/r)^{d-1} \cdot (2\log n/(n\log\log n)) \\
            & \leq \lceil\alpha(1/r)^{d-1}M\rceil^q \cdot (2\log n/(n\log\log n)) \\
            & \leq o(h^-/\sqrt{n}). \\
        \end{align*}

        \paragraph{Matrix $\mat R'$.} To reason about $\mat R'$, consider the row-induced submatrix $\mat Q'$ of $\mat Q$ consisting of all rows indexed by $(E')^q$. Because $\mat Q = \mat P^{\otimes q}$, the number of nonzero columns in $\mat Q'$ is at most $(N/r)^q$. Now observe that every nonzero column of $\mat Q'$ maps to a set of $w$ nonzero columns in $\mat R'$. So using that
        \[w \leq 2\left(\frac{\alpha(1/r)^{d-1}M}{N/r}\right)^q/n,\]
        the total number of nonzero columns in $\mat R'$ is at most
        \begin{align*}
            (N/r)^q \cdot w & \leq 2(N/r)^q \cdot \left(\frac{\alpha(1/r)^{d-1}M}{N/r}\right)^q/n \\
            & \leq 2(\alpha(1/r)^{d-1}M)^q/n \\
            & \leq o(h^-/\sqrt{n}).
        \end{align*}
        
        \paragraph{Matrix $\mat W'$.} By construction, $\mat W$ has width $h/n$, which means that $\mat W'$ has at most $h/n$ nonzero columns. Using that
        \[h \leq (2\alpha(1/r)^{d-1}M)^q\]
        and
        \[h^- = \lceil \alpha(1/r)^{d-1}M\rceil^q,\]
        we know that $h \leq 2^qh^-$. Because $q \leq 2\log n/\log\log n$, the number of nonzero columns in $\mat W'$ is at most $2^{O(\log n/\log\log n)}h^-/n \leq o(h^-/\sqrt{n})$.
    \end{proof}

    So far, we have established that the number of rows in $\mat C'$ is $h^-$, and the number of nonzero columns in $\mat C'$ is at most $h^-/\sqrt{n}$. To finish the proof, we use a counting argument to show that there always exists a nonzero vector $\mat x' \in \{-1,0,1\}^{h^-}$ such that $\mat x' \mat C' = \mat 0$. After padding $\mat x'$ with zeros, we get a vector $\mat x \in \{-1,0,1\}^{M'}$ with at most $h^-$ nonzero entries such that $\mat x\mat C = \mat 0$.

    By construction, each entry of $\mat C'$ is a nonnegative integer of magnitude less than $3M^q$. So for all $\mat x'' \in \{0,1\}^{h^-}$, every coordinate of $\mat x'' \mat C'$ is nonnegative and less than $3M^qh^{-}$. Furthermore, all but a fixed set of at most $h^-/\sqrt{n}$ of the coordinates are zero. There are $2^{h^-}$ choices for such a vector $\mat x''$, but only the following number of possible values for the vector-matrix product $\mat x'' \mat C'$:
    \begin{align*}
        (3M^qh^{-})^{h^-/\sqrt{n}} & \leq (3M^q\lceil M\rceil^q)^{h^-/\sqrt{n}} \tag{$h^- = \lceil \alpha (1/r)^{d-1}M\rceil^q$, $\alpha < 1, r = \omega(1),$ and $d = \omega(1)$.} \\
        & \leq (3M^{2q})^{h^-/\sqrt{n}} \tag{$M$ is a positive integer.} \\
        & \leq (2^{O(2^{O(\sqrt{\log n}\log^4\log n)}q)})^{h^-/\sqrt{n}} \tag{$M = \exp{2^{O(\sqrt{\log n}\log^4\log n)}}$} \\
        & \leq (2^{2^{O(\sqrt{\log n}\log^4\log n)}})^{h^-/\sqrt{n}} \tag{$q \leq 2\log n/\log\log n$} \\
        & \leq 2^{h^-/n^{0.5 - o(1)}} \\
    \end{align*}
    So by the pigeonhole principle, assuming $n$ is sufficiently large, there must exist two distinct vectors $\mat x'', \mat x''' \in \{0,1\}^h$ such that $\mat x'' \mat C' = \mat x''' \mat C'$. Taking $\mat x' = \mat x'' - \mat x'''$ ensures that $\mat x'$ is nonzero, has all of its entries being in $\{-1,0,1\}$, and satisfies $\mat x' \mat C' = \mat 0$.
    \end{proof}

\subsection{Proof of Lemma \ref{lem:soundness} (Soundness)} \label{sec:sound}
In this subsection, we show that if the hypergraph $H$ satisfies case (2) of the QRDH problem, then every lattice vector $\mat x \mat B$ will have at least $2h$ nonzero entries. Our general proof strategy is to gradually enforce more and more structure on the possible vectors $\mat x \in \mathbb{Z}^{M'}$, until we rule out all possibilities with less than $2h$ nonzero entries.

Before giving the proof, we need to formalize the relationship between the nonzero entries of a vector and the columns of a matrix. To this end, let $\mat x \in \mathbb{Z}^{M'}$ be any vector, and let $\mat B \in \mathbb{Z}^{M' \times N'}$ be any matrix. We say that $\mat x$ \emph{implicates} a column $\mat c$ of $\mat B$ if the linear combination $\mat x \mat B$ assigns a nonzero coefficient to at least one row of $\mat B$ with a nonzero entry in column $\mat c$. Similarly, we say that $\mat x$ implicates a column $\mat c$ \emph{with multiplicity $x$} if the linear combination $\mat x\mat B$ assigns a nonzero coefficient to exactly $x$ rows of $\mat B$ with a nonzero entry in column $\mat c$.

We now massage the soundness condition from a statement about matrix $\mat B$ to a statement about matrix $\mat C$. Notice that any lattice vector $\mat x \mat B$ with $\|\mat x \mat B\|_0 < 2h$ must satisfy $\mat x \mat C = \mat 0$. Otherwise, because there are $2h$ horizontally concatenated copies of $\mat C$ in $\mat B$, the product $\mat x \mat B$ will have at least $2h$ nonzero entries. Also notice that, among all vectors $\mat x$ such that $\mat x \mat C = \mat 0$, the number of nonzero entries in $\mat x \mat B$ is exactly the number of nonzero entries in $\mat x$. This is because $\mat B$ contains a single $M' \times M'$ identity matrix. So to prove soundness, it will be sufficient to show that every nonzero vector $\mat x$ with $\mat x \mat C = \mat 0$ has at least $2h$ nonzero entries, or equivalently:

\begin{lemma}
    Suppose that the hypergraph $H$ satisfies case (2) of the QRDH problem. Then assuming $n$ is sufficiently large, for all nonzero $\mat x \in \mathbb{Z}^{M'}$ such that $\|\mat x\|_0 < 2h$, we have $\mat x \mat C \neq \mat 0$.
\end{lemma}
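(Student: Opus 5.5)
We prove the contrapositive: take any nonzero $\mat x \in \mathbb{Z}^{M'}$ with $\mat x \mat C = \mat 0$, let $E' \subseteq [M]^q$ be its support, and show $\vert E'\vert \geq 2h$. The three blocks of $\mat C = \big[\mat A \| \mat R \| \mat W\big]$ are used in turn.

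\textbf{Step 1: the block $\mat W$.} Since $\mat x \mat W = \mat 0$ and $\mat x \neq \mat 0$, Lemma \ref{lem:vandindependent} forces $\vert E'\vert > h/n$: any nonzero combination of at most $h/n$ distinct rows of an $(a, h/n)$ reduced Vandermonde matrix is nonzero.

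\textbf{Step 2: the block $\mat A$.} Since $\mat x \mat A = \mat 0$, Lemma \ref{lem:mustbelegal} shows that $E'$ is $(t,q)$-legal. Theorem \ref{thm:legalexpand} then applies with $\delta$ chosen so that $\vert E'\vert = M^q\delta^d(1-\beta t)^{-q}$ (up to strictness). It gives
\[\vert N^{(q)}_{\mat P}(E')\vert \geq N^q\bigl(\vert E'\vert(1-\beta t)^q/M^q\bigr)^{1/d}.\]
The hypotheses hold: $t$ divides $M$ by the duplication step, and $1/t > \beta$ because $t \approx n r^{d-1}/\alpha = n^{O(1)}$ while $\beta = n^{-\Omega(\log\log n)}$. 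Moreover $(1-\beta t)^q = 1-o(1)$.

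\textbf{Step 3: the block $\mat R$.} For each implicated column index $\mat v \in N^{(q)}_{\mat P}(E')$, the block $\mat x \mat R^{(\mat v)}$ is a combination of distinct rows of a width-$w$ reduced Vandermonde matrix. These rows are the rows $\mat e \in E'$ with $\mat Q_{\mat e,\mat v} = 1$, and every one of them gets a nonzero coefficient. By Lemma \ref{lem:vandindependent}, $\mat x \mat R^{(\mat v)} = \mat 0$ forces $\mat v$ to be implicated with multiplicity greater than $w$. Each row $\mat e$ meets exactly $d^q$ columns of $\mat Q$, so double counting gives
\[d^q\vert E'\vert > w\,\vert N^{(q)}_{\mat P}(E')\vert.\]

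\textbf{Step 4: combine.} Write $\vert E'\vert = \lambda h$ and suppose for contradiction that $\lambda < 2$. Substituting Step 2 into Step 3 gives
\[d^q \lambda h > w N^q \bigl(\lambda h/M^q\bigr)^{1/d}(1-o(1)).\]
Plug in $h \approx (\alpha r^{1-d}M)^q$ and $w \approx (\alpha r^{1-d}M r/N)^q/n$. The right side becomes roughly
\[h\,\lambda^{1/d}\, r^{q}\,(\alpha r^{1-d})^{q/d}/n \approx h\,\lambda^{1/d}\, r^{q/d}\alpha^{q/d}/n.\]
So we need $d^q \lambda^{1-1/d} n \lesssim (r\alpha)^{q/d}$, i.e. $(r^{1/d}/d)^q \gg n$.

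This parameter check is the main obstacle. We have $r^{1/d} = 2^{\Theta(\sqrt{\log n})/O(\sqrt{\log n}/\log^2\log n)} = 2^{\Omega(\log^2\log n)}$, while $d = 2^{O(\log\log n)}$. With $q \geq \log n/\log\log n$, this gives $(r^{1/d}/d)^q \geq 2^{\Omega(\log n\log\log n)} \gg n$, which is the required contradiction. The Step 1 lower bound $\vert E'\vert > h/n$ is used to keep $\delta \le 1$ meaningful and to handle small supports; the factors $\alpha^{q/d}$, $2^q$ slack in $h$, and $(1-\beta t)^q$ must be tracked carefully. The exact interplay between the range of $\delta$ and the legality threshold is where I would expect the bookkeeping to need the most care.
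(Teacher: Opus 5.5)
Your argument is correct, but it combines the blocks differently from the paper. The paper works around a \emph{fixed} threshold. The block $\mat W$ gives $\|\mat x\|_0 \ge h/n$, and this is exactly what lets Theorem~\ref{thm:vandtensor} be applied once at $\delta = ((\alpha(1/r)^{d-1})^q/n^2)^{1/d}$. Averaging with $\|\mat x\|_0 < 2h$ then yields a column of $\mat Q$ with multiplicity in $[1,w)$ (Proposition~\ref{claim:avgprinciple}). You instead use the whole curve $\vert N^{(q)}_{\mat P}(E')\vert \ge N^q(\vert E'\vert(1-\beta t)^q/M^q)^{1/d}$ together with exact double counting, and reach $d^q\vert E'\vert^{1-1/d} > wN^q(1-\beta t)^{q/d}M^{-q/d}$. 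Your bookkeeping is sound: the powers of $r$ collapse to $r^{-q/d}$, the factors $2^{O(q)}$, $\alpha^{-q/d}$, $(1-\beta t)^{-q/d}$ are $n^{o(1)}$, and $(r^{1/d}/d)^q \ge 2^{\Omega(\log n\log\log n)}$ dominates the remaining $n^{O(1)}$. The non-strict form of Step 2 follows by applying Theorem~\ref{thm:legalexpand} to every $\delta'<\delta_0$.

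What you missed is that $d^q\vert E'\vert^{1-1/d}$ is nondecreasing in $\vert E'\vert$, so only $\vert E'\vert<2h$ is ever used. Step 1 therefore does no work, and your proof already gives $\mat x\big[\mat A\|\mat R\big]\neq\mat 0$ without $\mat W$. Your stated reasons for Step 1 are also off: $\delta_0\le 1$ is automatic since $\vert E'\vert\le M^q$ and $1-\beta t\le 1$, and small supports only make the contradiction easier.

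So the paper's route is more modular: it uses Theorem~\ref{thm:vandtensor} as a black box at one threshold, at the price of the extra $\mat W$ block and some slack absorbed by $r^{q/d}$. Your adaptive choice of $\delta$ is tighter and shows that $\mat W$ is dispensable for soundness.
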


\begin{proof}

    As in the proof of completeness, recall that $\mat C = \big[\mat A \| \mat R \| \mat W\big]$. The bulk of the proof lies in a series of claims that make use of $\mat A$, $\mat R$, and $\mat W$ individually to prove that $\mat x \mat C \neq \mat 0$ under various assumptions about $\mat x$. At the end of the proof we consider all cases together, showing that they exhaust the set of all possible vectors $\mat x$ with $\|\mat x \|_0 < 2h$.
    
    First, we use $\mat W$ to argue that any nonzero vector $\mat x$ which is too sparse immediately has $\mat x \mat C \neq 0$. 

    \begin{claim} \label{claim:usingW}
        Every nonzero vector $\mat x \in \mathbb{Z}^{M'}$ such that $\|\mat x \|_0 < h/n$ satisfies $\mat x \mat W \neq \mat 0$.
    \end{claim}

    \begin{proof}
        Recall that $\mat W$ is simply a width $h/n$ reduced Vandermonde matrix, so by Lemma \ref{lem:vandindependent} every subset of less than $h/n$ rows is linearly independent. This means that any nontrivial linear combination $\mat x \mat W$ with less than $h/n$ nonzero coefficients is nonzero.
    \end{proof}

    Now we reason about the $(t, q)$-VF tensor product matrix $\big[\mat A \| \mat Q\big]$, from which $\big[\mat A \| \mat R\big]$ was constructed.

    \begin{claim} \label{claim:mincolQ}
        Let $\mat x \in \mathbb{Z}^{M'}$ be any vector such that $\|\mat x\|_0 \geq h/n$, and suppose that $\mat x \mat A = \mat 0$.
        Then assuming $n$ is sufficiently large, $\mat x$ implicates more than $N^q\delta$ distinct columns of $\mat Q$, where
        \[\delta = \left(\frac{(\alpha(1/r)^{d-1})^q}{n^2}\right)^{1/d}.\]
    \end{claim}

    \begin{proof}
        To prove the claim, we just use Theorem \ref{thm:vandtensor}, since the matrix $\big[\mat A\|\mat Q\big]$ is exactly the $(t,q)$-VF tensor product of $H$'s indicator matrix $\mat P$.
        
        First we verify that $H$, $\beta$, and $t$ satisfy the preconditions of the theorem.
        We know that $H$ is a hypergraph of arity $d$ with $N$ vertices and $M$ hyperedges. Because $H$ falls into case (2) of the QRDH problem, we also know that $H$ satisfies the expansion requirement. In particular, every vertex subset $V' \subseteq V$ fully contains at most $\left(\frac{\vert V'\vert}{N}\right)^dM + \beta M$ hyperedges, where $\beta = n^{-\Omega(\log\log n)}$. By our choice of parameters, we know that $t$ divides $M$. Below we show that $t \leq n^{O(1)}$; because $\beta = n^{-\Omega(\log\log n)}$, this shows that $1/t > \beta$ for sufficiently large $n$.
        \begin{align*}
            t \leq & \frac{2n}{\alpha(1/r)^{d-1}} \\
            \leq & \frac{4n}{\left(1/2^{\Theta(\sqrt{\log n})}\right)^{O(\sqrt{\log n}/\log^2\log n) - 1}} \tag{$\alpha = 0.5, r = 1/2^{\Theta(\sqrt{\log n})}, d = O(\sqrt{\log n}/\log^2\log n)$}\\
            \leq & \frac{4n}{2^{-O(\log n/\log^2\log n)}} \\
            \leq & n^{O(1)}
        \end{align*}

        Now to apply Theorem \ref{thm:vandtensor}, let $\mat x$ be any vector such that $\|\mat x\|_0 \geq h/n$, and assume that $\mat x \mat A = \mat 0$. The theorem tells us that, if $h/n > \frac{M^q\delta^d}{(1 - \beta t)^q}$ (and hence $\|\mat x\|_0 > \frac{M^q\delta^d}{(1 - \beta t)^q}$), then $\mat x$ implicates more than $N^q\delta$ distinct columns of $\mat Q$. All that remains is to verify the (strict) inequality:

        \begin{align*}
            \frac{M^q\delta^d}{(1 - \beta t)^q} & \leq \frac{M^q\cdot \frac{(\alpha(1/r)^{d-1})^q}{n^2}}{(1 - \beta t)^q} \tag{$\delta = \left(\frac{(\alpha(1/r)^{d-1})^q}{n^2}\right)^{1/d}$}\\
            & \leq M^q \cdot \frac{(\alpha(1/r)^{d-1})^q}{n^2(1 - \beta t)^q} \\
            & \leq M^q \cdot \frac{(\alpha(1/r)^{d-1})^q}{n^2(1 - n^{-\omega(1)})^{2\log n/\log\log n}} \tag{$\beta = n^{-\Omega(\log\log n)}$, $t \leq n^{O(1)}$, $q \leq 2\log n/\log\log n$}\\
            & < M^q \cdot \frac{(\alpha(1/r)^{d-1})^q}{0.5n^2} \tag{$n$ sufficiently large}\\
            & < (\alpha(1/r)^{d-1}M)^q/n \tag{$n$ sufficiently large} \\
            & \leq h/n \tag{$h \geq \lceil\alpha(1/r)^{d-1}M\rceil^q$}
        \end{align*}
    \end{proof}

    Below, we argue that if a vector $\mat x$ is sufficiently sparse and implicates too many distinct columns of $\mat Q$, then we have $\mat x \mat R \neq 0$.

    \begin{claim} \label{claim:implicateR}
        Let $\mat x \in \mathbb{Z}^{M'}$ be any vector such that $\|\mat x\|_0 < 2h$, and assume that $\mat x$ implicates more than
            \[N^q\left(\frac{(\alpha(1/r)^{d-1})^q}{n^2}\right)^{1/d}\]
            distinct columns of $\mat Q$.
        Then assuming $n$ is sufficiently large, we have $\mat x \mat R \neq \mat 0$.
    \end{claim}

    \begin{proof}
        We know that every row of $\mat Q$ has exactly $d^q$ nonzero entries, and that the linear combination $\mat x \mat Q$ assigns a nonzero coefficient to less than $2h$ rows of $\mat Q$. So by the averaging principle, there must exist a column $\mat Q_{\cdot, \mat v^*}$ of $\mat Q$ that is implicated by $\mat x$ with multiplicity at least $1$ and less than
        \[\frac{2h \cdot d^q}{N^q\left(\frac{(\alpha(1/r)^{d-1})^q}{n^2}\right)^{1/d}}.\]
        Fix this column $\mat Q_{\cdot, \mat v^*}$ for the rest of the proof. Manipulating the expression above, we argue that $\mat x$ implicates $\mat Q_{\cdot, \mat v^*}$ with multiplicity strictly less than $w$, assuming $n$ is sufficiently large.
        \begin{proposition} \label{claim:avgprinciple}
        Let $n$, $M$, and $N$ be positive integers, and let
        \begin{align*}
            \alpha & = 0.5 \\
            r & = 2^{\Theta(\sqrt{\log n})} \\
            \log n/\log\log n \leq q & \leq 2\log n/\log\log n \\
            1 \leq d & \leq O(\sqrt{\log n}/\log^2\log n) \\
            h & \leq (2\alpha(1/r)^{d-1}M)^q \\
            \left(\frac{\alpha(1/r)^{d-1}M}{N/r}\right)^q/n \leq w &
        \end{align*}
        Then assuming $n$ is sufficiently large,
        \[\frac{2h \cdot d^q}{N^q\left(\frac{(\alpha(1/r)^{d-1})^q}{n^2}\right)^{1/d}} < w.\]
    \end{proposition}
    \begin{proof}
        We defer the calculations to Appendix \ref{app:claim:implicateR}.
    \end{proof}

    By construction of $\mat R$, we know that the column $\mat Q_{\cdot, \mat v^*}$ maps to a column-induced submatrix $\mat R^{(\mat v^*)}$ of $\mat R$. Each row of $\mat R^{(\mat v^*)}$ is nonzero if and only if the corresponding entry of $\mat Q_{\cdot, \mat v^*}$ is nonzero. Additionally, each of the nonzero rows is a distinct row of width $w$ reduced Vandermonde matrix. So by Proposition \ref{claim:avgprinciple} (assuming $n$ is sufficiently large), $\mat x \mat R^{(\mat v^*)}$ must be a nontrivial linear combination of at least one, and strictly less than $w$, distinct rows of a width $w$ reduced Vandermonde matrix. By Lemma \ref{lem:vandindependent} we know that these rows are linearly independent, so $\mat x \mat R^{(\mat v^*)} \neq \mat 0$, and hence $\mat x \mat R \neq 0$.
    \end{proof}

    With Claims \ref{claim:usingW}, \ref{claim:mincolQ}, and \ref{claim:implicateR} at hand, the lemma follows essentially immediately. Let $\mat x \in \mathbb{Z}^{M'}$ be any nonzero vector such that $\|\mat x\|_0 < 2h$. By Claim \ref{claim:usingW}, we either have $\mat x \mat W \neq \mat 0$, in which case we are done, or we know that $\|\mat x \|_0 \geq h/n$. Assuming $\|\mat x\|_0 \geq h/n$, we know by Claim \ref{claim:mincolQ} that either $\mat x \mat A \neq \mat 0$, in which case we are again done, or $\mat x$ implicates at least
    \[N^q\left(\frac{(\alpha(1/r)^{d-1})^q}{n^2}\right)^{1/d}\]
    distinct columns of $\mat Q$. In this final case, by Claim \ref{claim:implicateR}, it must be that $\mat x \mat R \neq 0$. So in all cases, we have $\mat x \big[\mat A \| \mat R \| \mat W\big] \neq \mat 0$, which by definition of $\mat C$ implies that $\mat x \mat C \neq \mat 0$.
\end{proof}

\section{Hardness of Approximation for Quasi-Random Densest Sub-Hypergraph} \label{sec:qrdhproof}
In this section we prove Theorem \ref{thm:khotqrdh} by adapting the quasi-random PCPs of \cite{khot2006ruling, khot2016hardness}. We re-state the definition of QRDH, and Theorem \ref{thm:khotqrdh}, below.

\paragraph{Problem \ref{prob:qrdh}, Restated.}
    \emph{Given a hypergraph $H = (V, E)$ of arity $d$ with $\vert V\vert = N$ and $\vert E\vert = M$, along with a parameter $r \geq 1$ and parameters $\alpha, \beta \in [0, 1]$, distinguish between the following two cases:
    \begin{enumerate}
        \item There exists a vertex subset $V' \subseteq V$ of size at most $N/r$ that fully contains at least $\alpha(1/r)^{d-1} M$ hyperedges.
        \item Every vertex subset $V' \subseteq V$ fully contains at most $(\vert V'\vert/N)^d M + \beta M$ hyperedges.
    \end{enumerate}
    As shorthand, we refer to the problem as $(N, M, d, r, \alpha, \beta)$-QRDH. We refer to $r$ and $\alpha$ as the \emph{completeness parameters}, and $\beta$ as the \emph{soundness parameter}.} 

\paragraph{Theorem \ref{thm:khotqrdh}, Restated.}
    \emph{There is a deterministic $\exp{2^{O(\sqrt{\log n}\log^4\log n)}}$ time reduction from SAT instances of size $n$ to 
    \begin{align*}
        \big( & N = \exp{2^{O(\sqrt{\log n}\log^4\log n)}}, \\
        & M = \exp{2^{O(\sqrt{\log n}\log^4\log n)}}, \\
        & d = O(\sqrt{\log n}/\log^2\log n), \\
        & r = 2^{\Theta(\sqrt{\log n})}, \alpha = 0.5, \beta = n^{-\Omega(\log\log n)}\big)\textnormal{-QRDH}.
    \end{align*}}

\subsection{Overview}
Before presenting our adaptation in detail, we give some informal background on the existing quasi-random PCPs, and give a brief overview of how we modify and make use of them.

As originally stated, the result in \cite{khot2006ruling} is as follows:

\begin{theorem}[Theorem 1.9 from \cite{khot2006ruling}] \label{thm:khotpcp}
    For every $\varepsilon > 0$, there is an integer $d = O(1/\varepsilon \log(1/\varepsilon))$ such that the following holds for all sufficiently large $n$. There is a randomized $2^{O(n^\varepsilon)}$ time algorithm which takes as input a SAT instance of size $n$ and outputs the description of a PCP verifier, such that:
    \begin{enumerate}
        \item The proof for the verifier is of size $2^{O(n^\varepsilon)}$. We denote using $\Pi$ the set of all proof locations.
        \item The verifier uses $O(n^\varepsilon)$ random bits to choose a subset $Q \subseteq \Pi$ of proof locations, where $\vert Q\vert = d$.
        \item Suppose the starting SAT instance was satisfiable. Then there exists a subset $\Pi' \subseteq \Pi$ of at most half the locations in the proof such that
        \[\Pr_{Q}[Q \subseteq \Pi'] \geq (1 - O(1/d)) \cdot (1/2)^{d-1}.\]
        \item Suppose the starting SAT instance was unsatisfiable. Then for all subsets $\Pi' \subseteq \Pi$ of at most half the locations in the proof,
        \[\Pr_Q[Q \subseteq \Pi'] \leq (1/2)^d + 1/2^{20d}.\]
    \end{enumerate}
\end{theorem}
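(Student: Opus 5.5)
We would not re-derive Theorem~\ref{thm:khotpcp} from scratch, since it is quoted from \cite{khot2006ruling}. We would instead retrace Khot's argument, and we record its shape here because our adaptation follows the same skeleton. The argument composes two PCPs:
\begin{itemize}
\item an \emph{outer PCP}, a Label-Cover/low-degree-test style verifier with $n^{\varepsilon}$-size alphabets and soundness $2^{-\Omega(d)}$ or smaller, obtained from a PCP theorem plus enough repetition;
\item an \emph{inner} quasi-random verifier.
\end{itemize}
The proof $\Pi$ is a table indexed by the points of a space $\mathbb{F}^m$ of size $2^{O(n^\varepsilon)}$. A subset $\Pi'$ of at most half the locations is identified with a $0/1$ function $f$ of mean at most $1/2$. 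The verifier samples $d$ correlated points $Q=\{q_1,\ldots,q_d\}$ from a small structured object (a random low-degree curve or subspace through a random point) chosen with $O(n^\varepsilon)$ random bits. The randomness of the reduction is used to choose a random shift/ordering so that every single query is uniform over $\Pi$ and the tuple is "generic".

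The proof has three steps.

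\emph{Completeness.} Take a satisfying assignment and its low-degree encoding $p$. Let $\Pi'$ be the half of the points on which an appropriate predicate of $p$ holds, for example the points where a designated coordinate of $p$ lies in a fixed half of the field. Because the $d$ queried points lie on one structured object on which $p$ restricts consistently, the first query lands in $\Pi'$ with probability $1/2$. The remaining $d-1$ queries then follow with probability $1-O(1/d)$ for the coupled event. This gives $(1-O(1/d))(1/2)^{d-1}$.

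\emph{Soundness, Fourier step.} Write $f=\mu+g$ with $\mu\le 1/2$ and $\mathbb{E}[g]=0$. Then expand
\[\Pr_Q[Q\subseteq\Pi']=\mathbb{E}\Big[\prod_{i\le d}f(q_i)\Big]=\sum_{S\subseteq[d]}\mu^{d-\vert S\vert}\,\mathbb{E}\Big[\prod_{i\in S}g(q_i)\Big].\]
The $S=\emptyset$ term is at most $(1/2)^d$, so we must show every correlation $\mathbb{E}[\prod_{i\in S}g(q_i)]$ with $S\neq\emptyset$ is at most $2^{-21d}$. For $\vert S\vert=1$ this is immediate from uniform marginals. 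For $\vert S\vert=2$ it follows from pairwise independence of points on random curves, since $g$ has mean zero.

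\emph{Soundness, decoding step (main obstacle).} For larger $S$ we would argue contrapositively: a non-negligible $\vert S\vert$-wise correlation of $g$ along the query distribution forces $g$ to correlate with a structured (low-degree/algebraic) function. We would try first a Cauchy--Schwarz/Gowers-type inverse argument restricted to the curve distribution. The structured function is then decoded, by list-decoding the low-degree code, into a short list of labels for the outer PCP. If the starting formula is unsatisfiable, outer soundness caps the success of any such decoding by $2^{-\Omega(d)}$-type bounds, and with the outer soundness chosen small enough this yields the $1/2^{20d}$ error. The delicate part is making the inverse step quantitatively efficient enough that losses stay below $2^{-20d}$ while $d=O(\frac1\varepsilon\log\frac1\varepsilon)$ and the proof size stays $2^{O(n^\varepsilon)}$; this is where the relation between $d$ and $\varepsilon$ comes from. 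Finally, sets of density strictly below $1/2$ are handled by the same expansion, since $\mu<1/2$ only decreases the main term.
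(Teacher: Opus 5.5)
\textbf{There is a genuine gap: the outer object and the completeness mechanism are the wrong ones.} The paper does not reprove this theorem. It quotes it, and in Section~\ref{sec:qrdhproof} describes the chain behind it:
\begin{enumerate}
\item (randomized) hardness of Minimum Distance of Code;
\item a \emph{homogeneous} algebraic CSP, where each constraint is a homogeneous linear condition on the values of a polynomial at $k=21$ points, a \emph{nonzero} polynomial is sought, and the no case covers every nonzero polynomial of degree up to $1000d$;
\item the quasi-random verifier.
\end{enumerate}

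Your completeness argument never uses that $p$ encodes a \emph{satisfying} assignment. It only uses that $p$ has low degree and restricts consistently to the sampled curve. The same computation therefore applies to the encoding of an arbitrary assignment of an unsatisfiable formula, and would contradict item~4. The arithmetic is also off: first query with probability $1/2$, then the other $d-1$ following with probability $1-O(1/d)$, gives about $1/2$, not $(1/2)^{d-1}$.

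What is needed is a gain of one factor of $2$ over a random set, available \emph{only} on satisfied constraints. Here $d-1$ queries land in $\Pi'$ essentially independently with probability $1/2$ each. A satisfied homogeneous constraint then forces a linear relation among the queried values that puts the last query in $\Pi'$ as well. This requires two things:
\begin{enumerate}
\item the outer constraints must be linear in the encoded values;
\item $\Pi'$ must be cut out by an $\mathbb{F}_2$-linear condition closed under that relation, not an arbitrary ``fixed half of the field''.
\end{enumerate}
The factor $1-O(1/d)$ is simply the fraction of constraints the yes-polynomial satisfies. In Theorem~\ref{thm:khotstart}, $\alpha'=1-\alpha$ passes unchanged to the QRDH completeness parameter, which is how the paper later gets $\alpha=0.5$. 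A generic Label-Cover outer PCP with projection constraints supplies no such linear relation. Your sketch also never says how the outer constraints enter the query distribution at all.

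\textbf{Soundness.} Homogeneity is equally what soundness rests on, so your decoding step has nothing to land on. The zero polynomial satisfies every homogeneous constraint and plays the role of the trivial term $\zeta^d$. Every other correlation has to be decoded into a \emph{nonzero} low-degree polynomial satisfying many constraints, and the CSP's no case forbids exactly that. It is stated up to degree $1000d$ so that the decoded polynomial may have larger degree than the honest one. Decoding into Label-Cover labels does not match this structure. A Gowers-type inverse argument would at best give correlation with a (possibly non-classical) polynomial phase over $\mathbb{F}_2^{m\lambda}$, not with a polynomial over $\mathbb{F}$ satisfying the constraints.

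\textbf{Two further points are off.}
\begin{enumerate}
\item The reduction's randomness is used only to produce the hard MDC instance; substituting a deterministic MDC hardness result derandomizes everything. It is not used for random shifts making queries uniform; that uniformity comes from the verifier's own coins.
\item The dependence of $d$ on $\varepsilon$ is governed by the size of the algebraic encoding, not by the quantitative efficiency of an inverse step. In the paper's rendering, Theorem~\ref{thm:khotstart} needs $\binom{m^*}{d^*}$ to exceed the instance size while producing output of size $\exp(O((m^*d^*)^{O(1)}))$.
\end{enumerate}
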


Notice that the \emph{query pattern} of the verifier is what changes depending on whether the starting SAT instance was satisfiable or not. This is why we do not specify the alphabet for the proof. Also notice that the algorithm which constructs the PCP is \emph{randomized}. As pointed out by Khot and Saket \cite{khot2016hardness}, the only reason this algorithm is randomized is because \cite{khot2006ruling} makes use of a randomized hardness of approximation result for the minimum distance of code (MDC) problem. Using any of the deterministic hardness results that appeared later \cite{cheng2009deterministic, austrin2014simple, bhattiprolu2025pcp}, the construction of the PCP verifier becomes fully deterministic. Khot and Saket \cite{khot2016hardness} also point out that item (4) from Theorem \ref{thm:khotpcp} can be replaced with a stronger condition:
\emph{\begin{enumerate}
    \item[4.] Suppose the starting SAT instance was unsatisfiable. Then for all $\zeta \in [0,1]$ and for all subsets $\Pi'$ containing a $\zeta$ fraction of the locations in the proof, we have
        \[\Pr_Q[Q \subseteq \Pi'] \leq \zeta^d + 1/2^{20d}.\]
\end{enumerate}}

To interpret Theorem \ref{thm:khotpcp} in terms of QRDH, construct a hypergraph $H$ as follows. Execute the $2^{O(n^\varepsilon)}$ time (deterministic) algorithm to construct the PCP verifier. For every proof location, add one vertex to the hypergraph. For every choice of randomness $\mat z \in \{0,1\}^{O(n^\varepsilon)}$, identify the corresponding size-$d$ subset $Q \subseteq \Pi$, and add a hyperedge to $H$ containing the $d$ vertices which correspond to $Q$.

There are $2^{O(n^\varepsilon)}$ proof locations, so $H$ has $N \leq 2^{O(n^\varepsilon)}$ vertices. There are $2^{O(n^\varepsilon)}$ total choices for the bit string $\mat z$, so $H$ has $M \leq 2^{O(n^\varepsilon)}$ hyperedges. By construction, each hyperedge is of arity $d = O(n^\varepsilon)$. 
Items (3) and (4) from Theorem \ref{thm:khotpcp} become:
\emph{\begin{enumerate}
    \item[3.] Suppose that the starting SAT instance was satisfiable. Then there exists a subset $\Pi'$ of exactly half the vertices in $H$ such that a $(1 - O(1/d)) \cdot (1/2)^{d-1}$ fraction of the hyperedges in $H$ have all of their endpoints in $\Pi'$.
    \item[4.] Suppose that the starting SAT instance was unsatisfiable. Then for all $\zeta \in [0,1]$ and for all subsets $\Pi'$ containing a $\zeta$ fraction of the vertices in $H$, the fraction of hyperedges in $H$ that have all of their endpoints in $\Pi'$ is at most $\zeta^d + 1/2^{20d}$.
\end{enumerate}}
We can thus re-interpret Theorem \ref{thm:khotpcp} as:

\begin{theorem}[\cite{khot2006ruling}] \label{thm:khotgraph}
    For every $\varepsilon > 0$, there is a deterministic $2^{O(n^\varepsilon)}$ time reduction from SAT instances of size $n$ to
\begin{align*}
    \big( & N = \exp{O(n^\varepsilon)}, \\
    & M = \exp{O(n^\varepsilon)}, \\
    & d = O(1/\varepsilon \log(1/\varepsilon)), \\
    & r = 2, \alpha = (1-O(1/d)), \beta = 1/2^{20d}\big)\textnormal{-QRDH},
\end{align*}
\end{theorem}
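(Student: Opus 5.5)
The plan is to derive Theorem~\ref{thm:khotgraph} directly from Theorem~\ref{thm:khotpcp}, using two modifications already mentioned in the overview. We make the PCP construction deterministic, and we use the strengthened soundness condition of Khot and Saket~\cite{khot2016hardness}. Then we read off the hypergraph.

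\textbf{Derandomizing the verifier.} The only randomized step in the construction of~\cite{khot2006ruling} is the use of hardness of approximation for the minimum distance of code problem, which serves as the outer PCP / label-cover source. We would replace it with one of the deterministic hardness results~\cite{cheng2009deterministic, austrin2014simple, bhattiprolu2025pcp}. We need to check that their gap and running-time parameters are at least as good as what Khot's construction consumes: a constant gap and polynomial blowup in $n$, before the $2^{O(n^\varepsilon)}$ amplification. If so, the rest of the verifier construction is deterministic and still runs in time $2^{O(n^\varepsilon)}$.

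\textbf{Strengthening soundness.} We would follow~\cite{khot2016hardness}. Khot's soundness analysis bounds $\Pr_Q[Q\subseteq\Pi']$ via a Fourier/quasi-randomness argument on the inner (Reed--Muller-style) test. That argument goes through unchanged for a set $\Pi'$ of arbitrary density $\zeta$, giving $\zeta^d + 2^{-20d}$ instead of only the $\zeta\le 1/2$ case. This is the step where care is needed. We must verify that the error term is uniform in $\zeta$ and does not depend on $|\Pi'|$ being at most half. We expect this to be the main obstacle, though it is already asserted in~\cite{khot2016hardness}, so we would cite it rather than reprove it.

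\textbf{Translating to QRDH.} We build $H$ with one vertex per proof location and one hyperedge per random string $\mat z$. Then $N$ and $M$ are both $\exp(O(n^\varepsilon))$, and each hyperedge has $|Q|=d$ vertices. If some random strings query fewer than $d$ distinct locations, we pad or discard them, assuming the verifier always queries exactly $d$ distinct locations as in Theorem~\ref{thm:khotpcp}.

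In the completeness case, take $V'=\Pi'$, which has size at most $N/2=N/r$. It fully contains at least $(1-O(1/d))(1/2)^{d-1}M=\alpha(1/r)^{d-1}M$ hyperedges.

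In the soundness case, any $V'$ has density $\zeta=|V'|/N$. The strengthened item~4 then gives at most $(|V'|/N)^dM+2^{-20d}M$ contained hyperedges, which is exactly $\beta=2^{-20d}$.

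Finally, $d=O(\frac1\varepsilon\log\frac1\varepsilon)$ is inherited from Theorem~\ref{thm:khotpcp}, which completes the parameter check.
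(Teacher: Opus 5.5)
Your proposal is correct and follows essentially the same route as the paper. You derandomize Theorem~\ref{thm:khotpcp} by substituting a deterministic MDC hardness result, use the strengthened all-$\zeta$ soundness from \cite{khot2016hardness}, and read off the QRDH instance with one vertex per proof location and one hyperedge per random string, giving $r=2$, $\alpha = 1-O(1/d)$, $\beta = 2^{-20d}$; like the paper, you cite the derandomization and the uniform-in-$\zeta$ soundness rather than re-verifying them.
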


Khot and Saket \cite{khot2016hardness} adapt the quasi-random PCP in \cite{khot2006ruling} to show hardness of approximation for the bipartite expansion problem. Phrased in terms of QRDH, their quasi-random PCP allows us to (essentially) recover Theorem \ref{thm:khotgraph}, but with $r$ set to a larger power of two. Informally, if we use exactly the same reduction as \cite{khot2006ruling, khot2016hardness}, but adjust the parameters so that $d$ becomes superconstant, we immediately get a deterministic reduction from SAT instances of size $n$ to the following, where $\varepsilon > 0$ is any constant.
\begin{align*}
    \big( & N = \exp{2^{O(\sqrt{\log n}\log^4\log n)}}, \\
    & M = \exp{2^{O(\sqrt{\log n}\log^4\log n)}}, \\
    & d = O(\sqrt{\log n}/\log^2\log n), \\
    & r = 2^{\Theta(\sqrt{\log n})}, \alpha \geq 0.5, \beta = \varepsilon\big)\textnormal{-QRDH}.
\end{align*}

In particular, while we can take $\beta = 1/2^{20d}$ when $d$ is a constant, $\beta$ stops scaling inverse-exponentially with $d$ as soon as $d = \omega(1)$. To drive $\beta$ all the way down to $n^{-\Omega(\log\log n)}$, it turns out that we only need to modify the very first step in the quasi-random PCP reduction, which is to show sufficiently strong NP hardness of approximation for MDC. In Section \ref{sec:strongermdc}, we replace this NP hardness proof with an $n^{O(\log\log n)}$ time reduction that achieves an even stronger approximation gap. In Section \ref{sec:carrythrough}, we show that this improved approximation gap is inherited through the rest of the reduction, allowing us to take $\beta = n^{-\Omega(\log\log n)}$.

\subsection{Strong Hardness of Approximation for MDC} \label{sec:strongermdc}

Our starting point is the minimum distance of code problem. We define the \emph{relative weight} of a vector as the number of nonzero entries divided by the total number of entries. Similarly, we define the \emph{relative minimum distance} of a linear code as the minimum relative weight over all nonzero codewords. The \emph{relative number of zero entries} in a vector is the number of zero entries divided by the total number of entries.

\begin{problemnormal}[Gap Minimum Distance of Code (GapMDC)]
    Given a matrix $\mat G \in \mathbb{F}^{M \times N}$ over a finite field $\mathbb{F}$, along with two parameters $0 < \alpha < \beta < 1$, distinguish between the following two cases:
    \begin{enumerate}
        \item There exists a nonzero vector $\mat x \in \mathbb{F}^M$ such that $\mat x\mat G$ has relative weight at most $\alpha$.
        \item For all nonzero vectors $\mat x \in \mathbb{F}^M$, $\mat x \mat G$ has relative weight at least $\beta$.
    \end{enumerate}
    As shorthand, we refer to the problem as $(\mathbb{F}, N, M, \alpha, \beta)$-GapMDC. We assume that all rows of $\mat G$ are linearly independent; otherwise the problem is trivial.
\end{problemnormal}

NP hardness of the exact minimum distance of code problem was first shown by Vardy \cite{vardy2002intractability}. Constant factor hardness of approximation was then proved by Dumer, Micciancio, and Sudan \cite{dumer2003hardness}, but only under the assumption that NP $\not\subseteq$ RP. Cheng and Wan \cite{cheng2009deterministic} were the first to de-randomize the hardness result, with later simplifications and refinements by Austrin and Khot \cite{austrin2014simple}, Micciancio \cite{micciancio2014locally}, and Bhattiprolu, Guraswami, and Ren \cite{bhattiprolu2025pcp}. We use the following deterministic hardness result:

\begin{theorem}[\cite{bhattiprolu2025pcp}] \label{thm:baseMDC}
    There exist constants $0 < \alpha < \beta < 1$ such that there is a deterministic $n^{O(1)}$ time reduction from SAT instances of size $n$ to $(\mathbb{F}_2, N, M, \alpha, \beta)$-GapMDC, where $N = n^{O(1)}$ and $M = n^{O(1)}$.
\end{theorem}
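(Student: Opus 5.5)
This theorem is quoted from \cite{bhattiprolu2025pcp}, so the plan is to recover it as a black box, not to build a new reduction. The route is to compose deterministic pieces that are already in the literature on minimum distance of codes. We start from a deterministic PCP for SAT with quasi-polynomial (in fact polynomial) size. One option is the Reed--Muller based ``PCP with the special structure'' used by \cite{austrin2014simple}. Another is the linear-algebraic PCP of \cite{bhattiprolu2025pcp}, whose constraints are linear equations over $\mathbb{F}_2$, each touching a bounded number of variables.

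From such an instance we form the standard code: the rows of $\mat G$ span the space of satisfying assignments to the homogenized linear system, tensored with (or concatenated to) an explicit code of good distance. In the yes case, the structure of the PCP (a satisfying assignment that is ``locally sparse'') gives a codeword of relative weight at most $\alpha$. In the no case, every nonzero codeword either violates many constraints or has large weight, giving relative weight at least $\beta$.

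The order of steps is as follows.
\begin{enumerate}
\item Fix the deterministic PCP, with perfect completeness and a constant soundness gap.
\item Build the Vardy-style linear code whose low-weight codewords correspond to accepting proofs.
\item Replace the randomized locally dense code of \cite{dumer2003hardness} with the deterministic tensor-code or concatenation gadget of \cite{austrin2014simple,cheng2009deterministic}. This gadget homogenizes the problem, turning the nearest codeword problem into the minimum distance problem.
\item Verify that the number of rows $M$ and columns $N$ stays $n^{O(1)}$. This requires that the gadget's blocklength is polynomial and that the tensoring degree is constant.
\end{enumerate}

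The main obstacle is step 3, namely showing that homogenization creates no spurious sparse codewords in the no case. In \cite{austrin2014simple} this is handled by tensoring with a Reed--Solomon-type code and arguing that the tensor code's minimum distance is multiplicative. The concern is nonzero codewords supported only on the gadget part.

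To handle this, I would first try the Austrin--Khot argument. Their key step shows that in the no case every codeword of the tensor product with a rank $\ge 2$ structure has large weight, while rank-one codewords reduce to the original nearest codeword instance.

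Since the constants $\alpha<\beta$ only need to exist, no gap amplification is required here. Any amplification to the stronger gap the paper later needs is deferred to Section~\ref{sec:strongermdc}, by tensoring this base instance $O(\log\log n)$ times.
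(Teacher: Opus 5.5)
Your plan matches the paper: the paper gives no argument for this statement and simply imports it from \cite{bhattiprolu2025pcp} as a black box, so that citation is the entire proof. The reconstruction you sketch around it is unnecessary and partly off. The cited reduction is PCP-free: it starts from a gapless NP-hard problem, not a linear PCP. And if you instead route through \cite{austrin2014simple} or \cite{cheng2009deterministic}, you must check that the source gives constant \emph{relative} thresholds $\alpha<\beta$ (as the asymptotically good codes of \cite{austrin2014simple} do), not merely a constant-factor gap between absolute distances, because Lemma~\ref{lem:anyconstants} needs $\alpha',\beta'$ to be independent of $n$.
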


Using techniques similar to \cite{khot2006ruling}, we amplify the gap between the ``yes'' and ``no'' cases significantly.

\begin{theorem} \label{thm:bigMDC}
    There is a deterministic $n^{O(\log\log n)}$ time reduction from SAT instances of size $n$ to
    \[(\mathbb{F}_{2^\lambda}, N, M, 0.5, (1-n^{-\Omega(\log\log n)}))\text{-GapMDC},\]
    where $N = n^{\Theta(\log\log n)}$, $M = n^{O(\log\log n)}$, $N \leq 2^\lambda \leq N^2$, and $\lambda$ is a power of $2$.
\end{theorem}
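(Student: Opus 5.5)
Starting from Theorem \ref{thm:baseMDC}, we obtain a binary code $\mat G_0 \in \mathbb{F}_2^{M_0 \times N_0}$ with $M_0, N_0 = n^{O(1)}$ and constant relative distances $\alpha_0 < \beta_0$ in the two cases. We amplify it in two steps: tensoring to push the soundness distance close to $1$, then concatenation/lifting over a large extension field to fix the completeness weight at $0.5$.

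\textbf{Step 1 (tensoring).} Take the $k$-fold tensor product $\mat G_1 = \mat G_0^{\otimes k}$ over $\mathbb{F}_2$ with $k = \Theta(\log\log n)$. Relative minimum distance is exactly multiplicative under tensoring (the standard fact that the tensor code's distance is the product of the component distances). Hence in the yes case some nonzero codeword has relative weight at most $\alpha_0^k$, and in the no case every nonzero codeword has relative weight at least $\beta_0^k$. The dimensions are $N_0^k, M_0^k = n^{O(\log\log n)}$, and the construction time is the same.

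This alone does not give relative distance $1 - n^{-\Omega(\log\log n)}$, so we next work with the complement: the relative number of zeros.

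\textbf{Step 2 (large-field lifting).} Choose $\lambda$ a power of two with $N \le 2^\lambda \le N^2$, where $N = n^{\Theta(\log\log n)}$ is the final length. Replace the code by its concatenation with a Reed--Solomon or Hadamard-like map over $\mathbb{F}_{2^\lambda}$. Concretely, view the message space as $\mathbb{F}_{2^\lambda}$-linear combinations of the rows of $\mat G_1$, and evaluate each coordinate under a family of random-looking but explicit linear functionals, for example all evaluations $\mat y \mapsto \sum_j \mat y_j \theta^{j}$ for $\theta \in \mathbb{F}_{2^\lambda}$.

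The target properties are as follows.

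In the \textbf{no case}, a codeword of the lifted code has a zero coordinate only if, for the corresponding $\theta$, a polynomial of degree at most $D$ in $\theta$ vanishes, or the underlying $\mathbb{F}_2$-codeword is zero on a block. The zero fraction is then at most roughly $(1-\beta_0)^k + D/2^\lambda$. Matching this to $n^{-\Omega(\log\log n)}$ requires instead taking the tensor power of the base code so that its zero fraction is $(1-\beta_0)$-small in each factor. The right way is to tensor the relative-zero-fraction: we should first make $\beta_0$ close to $1$ (for instance via concatenation with a Hadamard code, getting $\beta_0 \ge 1/2 + c$ per symbol), then apply a product-over-symbols structure so that zeros multiply. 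Since $2^{\lambda} \ge N$, the Schwartz--Zippel term $D/2^\lambda$ is $n^{-\Omega(\log\log n)}$ provided the degree $D = n^{O(1)}$.

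In the \textbf{yes case}, the sparse codeword remains sparse after lifting, with relative weight at most $\alpha_0^k$. We then pad with a block of coordinates on which the yes-witness is nonzero (or repeat columns) to land exactly at weight $0.5$ while keeping the no-case bound at $1 - n^{-\Omega(\log\log n)}$.

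\textbf{Main obstacle.} The delicate step is Step 2: getting the \emph{zero fraction} in the no case to decay like $n^{-\Omega(\log\log n)}$ while the yes-case weight stays at most $0.5$. Plain tensoring makes the yes case sparser but does not drive no-case zeros to near zero. What I would try first mirrors \cite{khot2006ruling}:
\begin{enumerate}
\item Concatenate with Reed--Solomon over $\mathbb{F}_{2^\lambda}$ so that any nonzero message yields a codeword with zero fraction at most $\deg/2^\lambda$ on each nonzero block.
\item Combine this with the $k$-fold tensored gap, so that no-case codewords are nonzero on a $\beta_0^k$ fraction of blocks; no, rather on \emph{most} blocks after a suitable reweighting.
\end{enumerate}
Verifying that these two bounds combine to $1 - n^{-\Omega(\log\log n)}$, and that the yes case still stays at $0.5$, is where I expect the real work.
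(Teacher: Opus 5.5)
Your plan never specifies the amplification step, and the concrete pieces it does commit to would not work.

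\textbf{The gadget.} Tensoring multiplies relative \emph{weights}, not zero fractions. After Step 1 the no-case zero fraction is $1-\beta_0^k$, not $(1-\beta_0)^k$. Symbol-by-symbol concatenation with a Hadamard or Reed--Solomon inner code cannot raise the relative distance above that of the outer code. Applying $\mat y\mapsto\sum_j \mat y_j\theta^j$ to the whole codeword is no better: it gives \emph{every} nonzero codeword, the yes-witness included, relative weight close to $1$, which erases the gap.

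What makes zeros multiply is the construction of Lemma \ref{lem:expandervand}. For each length-$r$ walk on a constant-degree expander whose vertices are the coordinates, take the $r$ corresponding columns and multiply them by an $r\times w$ Vandermonde matrix. A block vanishes only if all $r$ coordinates on its walk vanish, which happens for at most a $(1-D+\delta)^r$ fraction of walks. Every other block has at most $r$ zeros out of $w$. Hence $D'\ge 1-((1-D+\delta)^r+r/w)$.

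The same gadget inflates the yes-case weight by up to a factor of $r$ ($D'\le rD$). So your claim that the witness keeps weight $\alpha_0^k$ after lifting is false. Padding up to weight exactly $0.5$ is also unnecessary, since the yes condition only asks for ``at most'', and a reduction cannot add columns tailored to an unknown witness.

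\textbf{The parameter tension.} This factor-$r$ loss is the tension you leave unresolved, and resolving it needs an idea your plan lacks. The number of walks is $N\cdot 2^{O(r)}$, so staying within $n^{O(\log\log n)}$ forces $r=O(\log n\log\log n)$. With such $r$, reaching zero fraction $n^{-\Omega(\log\log n)}$ requires the code fed into the final walk to have \emph{constant} soundness, while its completeness must be below $1/(2r)$. Plain $\Theta(\log\log n)$-fold tensoring cannot provide this. It drives the soundness down to $\beta_0^k=1/\mathrm{polylog}(n)$ along with the completeness. You would then need $r\approx\log n\log\log n/\beta_0^k$, and the resulting $2^{\Omega(r)}$ walks exceed $n^{O(\log\log n)}$.

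\textbf{How the paper resolves it.} The proof proceeds in three stages.

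First, it tunes the base constants over a large enough extension field, by constant-fold tensoring plus a constant-length walk (Lemma \ref{lem:anyconstants}).

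Second, it iterates (Lemma \ref{lem:inversepolylog}): $\mat G^{(\ell)}$ is the output of Lemma \ref{lem:expandervand} on $\mat G^{(\ell-1)}\otimes\mat G$, using a \emph{constant} walk length $q$. Each round multiplies the completeness by $\alpha q=1/2$, while the walk restores the soundness from $\beta^2$ back to $\beta$. After $2\log\log n$ rounds the completeness is $1/\log^2 n$, the soundness is still $0.75$, and the size is $n^{O(\log\log n)}$.

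Third, it applies a single long walk with $r=\log n\log\log n$, $w=n^{\log\log n}$ and $\delta=1/4$. This gives yes-case distance at most $rD\le\log\log n/\log n<0.5$ and no-case distance at least $1-(2^{-r}+r/w)=1-n^{-\Omega(\log\log n)}$.
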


The rest of this sub-section is dedicated to proving Theorem \ref{thm:bigMDC}.

\paragraph{Vandermonde Matrices.} A key subroutine for all steps of our MDC reduction will be to construct linear combinations of vectors using coefficients that come from a Vandermonde matrix.

\begin{definition}[Vandermonde Matrix]
    Given a finite field $\mathbb{F}$ and positive integers $a < \vert\mathbb{F}\vert, b < \vert\mathbb{F}\vert$, an $a \times b$ Vandermonde matrix over $\mathbb{F}$ is a matrix $\mat V \in \mathbb{F}^{a \times b}$ defined as follows. Associate each column index $j \in [b]$ with a distinct nonzero member $c(j) \in \mathbb{F}$. Now set $\mat V_{i, j} = c(j)^i$ for all $i, j$.
\end{definition}

Note that these matrices can be constructed efficiently and deterministically. We use the following lemma to reason about linear combinations that use Vandermonde matrices.

\begin{lemma} \label{lem:vandcombo}
    Let $\mathbb{F}$ be a finite field, and let $a, b, c$ be positive integers such that $a < \vert \mathbb{F}\vert$ and $b < \vert\mathbb{F}\vert$. Let $\mat V$ be an $a \times b$ Vandermonde matrix over $\mathbb{F}$, and let $\mat C \in \mathbb{F}^{c \times a}$ be any matrix. Then for all vectors $\mat x \in \mathbb{F}^c$ such that $\mat x \mat C \neq \mat 0$, it holds that $\mat x \mat C \mat V$ has at most $a$ zero entries.
\end{lemma}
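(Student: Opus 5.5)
Write $\mat y = \mat x \mat C \in \mathbb{F}^a$, which is nonzero by hypothesis. The $j$\textsuperscript{th} entry of $\mat x \mat C \mat V$ is $\sum_{i \in [a]} \mat y_i \, c(j)^i$. So the plan is to interpret this entry as the evaluation at the point $c(j)$ of the polynomial
\[f(z) = \sum_{i \in [a]} \mat y_i z^i \in \mathbb{F}[z].\]
The number of zero entries of $\mat x \mat C \mat V$ is then exactly the number of column indices $j \in [b]$ with $f(c(j)) = 0$. Since the map $j \mapsto c(j)$ is injective by the definition of a Vandermonde matrix, this count is at most the number of distinct roots of $f$ in $\mathbb{F}$ that lie in $\{c(1), \ldots, c(b)\}$.

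Next I bound the number of such roots. Because $\mat y \neq \mat 0$, $f$ is a nonzero polynomial of degree at most $a$. A nonzero univariate polynomial over a field has at most as many roots as its degree, so $f$ has at most $a$ roots. The bound can even be sharpened: $f(z) = z \cdot g(z)$ with $g(z) = \sum_{i \in [a]} \mat y_i z^{i-1}$ nonzero of degree at most $a-1$. All points $c(j)$ are nonzero, so $f(c(j)) = 0$ iff $g(c(j)) = 0$, which gives at most $a-1$ zero entries. Either way, at most $a$ zero entries occur, as claimed.

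There is essentially no obstacle here. The only points to check are the indexing convention (exponents run from $1$ to $a$, so the degree bound is $a$ rather than $a-1$ unless we factor out $z$), and the fact that distinct columns correspond to distinct evaluation points, which holds because the $c(j)$ are chosen distinct. The hypotheses $a, b < \vert\mathbb{F}\vert$ are only needed for the matrix to be well defined (enough distinct nonzero field elements) and play no further role in the argument.
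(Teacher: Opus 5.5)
Your proof is correct and is essentially the paper's argument: you set $\mat y = \mat x \mat C \neq \mat 0$, read the entries of $\mat y \mat V$ as evaluations of a nonzero polynomial of degree at most $a$ at the distinct points $c(j)$, and bound the number of zero entries by the number of roots. Your sharpening to at most $a-1$ zero entries, obtained by factoring out $z$ and using that every $c(j)$ is nonzero, is also valid but is not needed for the lemma.
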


\begin{proof}
    Let $\mat x \in \mathbb{F}^c$ be any vector such that $\mat x \mat C \neq \mat 0$, and let $\mat w = \mat x \mat C \in \mathbb{F}^a$. Then $\mat x \mat C \mat V = \mat w \mat V$ is simply the evaluation of nonzero polynomial of degree at most $a$ on $b$ distinct points. Any such polynomial can have at most $a$ roots, so $\mat x \mat C \mat V$ has at most $a$ zero entries.
\end{proof}

\paragraph{Expander Graphs.} As in \cite{khot2006ruling}, we make critical use of expander graphs. We use the following explicit construction:

\begin{lemma}[See e.g. \cite{moshkovitz2008two} Lemma 5.2] \label{lem:basicexpander}
    There is a constant $\kappa < 1$ and a function $T(\Delta) = \Theta(\Delta)$ such that the following holds for all positive integers $n$ and $\Delta$. There is a deterministic $O((n\Delta)^{O(1)})$ time algorithm which outputs an undirected (multi)graph $G = (V, E)$ where $\vert V\vert = n$, every vertex is of the same degree $T(\Delta)$, and the adjacency matrix for $G$ has its second largest eigenvalue being of magnitude at most $(T(\Delta))^\kappa$.
\end{lemma}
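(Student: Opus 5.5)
The plan is to build the family in two stages. First I construct a \emph{constant-degree} explicit expander on exactly $n$ vertices for every $n$. Then I raise it to a suitable power, which sets the degree to $\Theta(\Delta)$ while the ratio $\log(\text{second eigenvalue})/\log(\text{degree})$ stays fixed below $1$.

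\textbf{Stage 1 (constant degree, every $n$).} I will fix an explicit family with spectral gap, such as the Gabber--Galil graphs on $\mathbb{Z}_m^2$, of constant degree $D_1$. Given $n$, I take the smallest $m$ with $m^2 \geq n$; then $m^2 \leq 4n$ for $n \geq 1$. I then collapse the $m^2$ vertices onto $n$ vertices by merging disjoint groups of at most $4$ original vertices into one. Next I add self-loops so that every vertex has the same degree $D_0 = O(D_1)$. Edge expansion survives merging up to a constant factor, since a set $S$ of merged vertices pulls back to a set of at most $4|S|$ original vertices, so Cheeger's inequality again gives a constant spectral gap in the second \emph{largest} eigenvalue. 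To also control the most negative eigenvalue, I make the graph lazy: at least half of each vertex's degree is self-loops. All eigenvalues are then nonnegative, and the second largest in magnitude is at most $\lambda_0 \leq (1-\epsilon)D_0$ for an absolute constant $\epsilon > 0$. Everything here is deterministic and polynomial in $n$.

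\textbf{Stage 2 (degree $\Theta(\Delta)$).} Set $\kappa = \log \lambda_0/\log D_0 < 1$; this is a constant because $D_0$ and $\epsilon$ are constants, and I enlarge $D_0$ if needed so that $\lambda_0 \geq 1$. For a given $\Delta \geq D_0$, I pick $k$ with $D_0^k \leq \Delta < D_0^{k+1}$, let $G^k$ be the multigraph whose adjacency matrix is $A^k$ (edges are walks of length $k$), and set $T(\Delta) = D_0^k \in (\Delta/D_0, \Delta]$, so $T(\Delta) = \Theta(\Delta)$. Then $G^k$ is $T(\Delta)$-regular. Its second largest eigenvalue in magnitude is $\lambda_0^k = (D_0^k)^{\kappa} = T(\Delta)^\kappa$. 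For $\Delta < D_0$ I simply take $k = 0$ or $k = 1$ and absorb the discrepancy into the $\Theta$ and into the constant. Computing $A^k$ by repeated squaring takes $\mathrm{poly}(n,\log \Delta)$ arithmetic operations on numbers at most $\Delta$. Writing out the multigraph explicitly takes $O(n\Delta)$ edges, so the total time is $(n\Delta)^{O(1)}$.

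\textbf{Main obstacle.} I expect the main difficulty to be Stage 1 for \emph{arbitrary} $n$, since algebraic constructions exist only for special sizes. Specifically, I must check that merging vertices and padding with self-loops preserves a constant spectral gap. I would prove this via edge expansion plus both directions of Cheeger's inequality, using laziness to rule out eigenvalues near $-D_0$. If that bookkeeping becomes messy, my fallback is to cite a zig-zag style construction that directly yields explicit constant-degree expanders for every $n$.
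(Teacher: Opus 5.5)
Your proposal is correct. It differs from the paper only in that the paper gives no proof of this lemma: it cites Moshkovitz--Raz (Lemma 5.2) and uses the result as a black box in Corollary~\ref{cor:getexpander}. You supply a self-contained derivation along the standard lines. First you build an explicit constant-degree expander on every $n$: Gabber--Galil on $\mathbb{Z}_m^2$, merged down to $n$ vertices, padded to regularity with self-loops, and made lazy. Then you raise it to a power, taking $\kappa = \log\lambda_0/\log D_0$ and $T(\Delta) = D_0^{\lfloor \log_{D_0}\Delta\rfloor}$. The citation buys brevity. Your route makes explicit where $\kappa$ and $T$ come from, and shows in passing that $T(\Delta)\le \Delta$ can be taken to be a power of a fixed constant. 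Laziness is the right tool for the most negative eigenvalue, which the hard direction of Cheeger does not control. With laziness, every nontrivial eigenvalue of $A^k$ is $\mu_i^k$ with $0\le \mu_i\le\lambda_0$, so $\lambda_0^k = T(\Delta)^\kappa$ is exactly the bound you need.

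One small correction to the merging step. The fact that the preimage of $S$ has at most $4|S|$ vertices is not what drives the argument. For $|S|\le n/2$, the preimage of $S$ has at least $|S|$ vertices. The preimage of $V\setminus S$ has at least $|V\setminus S|\ge |S|$ vertices. So at least $h|S|$ original edges cross the cut, where $h$ is the edge expansion of the Gabber--Galil graph (obtained from its spectral gap via the easy direction of Cheeger). The factor $4$ enters only through the degree increase from $D_1$ to about $4D_1$. That increase is what costs a constant in normalized expansion before you apply the hard direction of Cheeger. Two further simplifications are possible. You need not enlarge $D_0$ to force $\lambda_0\ge 1$, since $\lambda_0$ is only an upper bound and may be increased freely. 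And for $\Delta<D_0$, the choice $k=0$ (the identity matrix, $T=1$) already satisfies the required bound trivially.
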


A well-known result is that random walks along expander graphs have good mixing properties with respect to vertex subsets. We always identify a walk with the (ordered) set of vertices that it visits.

\begin{lemma}[See e.g. \cite{vadhan2012pseudorandomness} Theorem 4.17] \label{lem:basicwalk}
    Let $G = (V, E)$ be an undirected (multi)graph with $n$ vertices, each of degree $d$. Suppose that the magnitude of the second largest eigenvalue of $G$'s adjacency matrix is at most $\rho$. Then for all positive integers $r$ and all subsets $S \subseteq V$, the probability that a random walk $(v_1, \ldots, v_r)$ along $G$ has its vertices contained within $S$ is at most $(\vert S\vert/n + \rho/d)^r$.
\end{lemma}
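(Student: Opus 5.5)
Since this is a standard expander mixing bound, I would prove it with the usual spectral argument: write the walk probability as a product of matrices and bound the operator norm of a ``restricted walk'' matrix. Throughout, the walk starts at a uniformly random vertex $v_1$, and each step moves to a uniformly random neighbor (counting multi-edges).

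Let $\mat W = \frac{1}{d}\mat A$ be the normalized random-walk matrix. Here $\mat A$ is the adjacency matrix of $G$. Then $\mat W$ is symmetric and doubly stochastic, with top eigenvector $\mat 1$ (eigenvalue $1$), and all other eigenvalues have magnitude at most $\lambda := \rho/d$. Let $\mat P$ be the diagonal $0/1$ matrix projecting onto the coordinates in $S$, and set $\mu = \vert S\vert/n$. Then
\[\Pr[(v_1, \ldots, v_r) \subseteq S] = \tfrac{1}{n}\,\mat 1\mat P(\mat W\mat P)^{r-1}\mat 1^{\top} = \tfrac{1}{n}\,\langle \mat 1\mat P, \mat 1\mat P(\mat P\mat W\mat P)^{r-1}\rangle,\]
using $\mat P^2 = \mat P$ to insert projections on both sides of each $\mat W$.

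The key step is the operator-norm bound $\|\mat P\mat W\mat P\| \leq \mu + \lambda$. To prove it, decompose $\mat W = \mat J + \mat E$, where $\mat J = \frac{1}{n}\mat 1^{\top}\mat 1$ is the projection onto constants.
\begin{itemize}
\item Since $\mat W$ is symmetric with orthonormal eigenbasis containing $\mat 1/\sqrt n$, the remainder $\mat E$ acts as $\mat W$ on $\mat 1^{\perp}$ and kills $\mat 1$. Hence $\|\mat E\| \leq \lambda$, and so $\|\mat P\mat E\mat P\| \leq \lambda$.
\item For any $\mat x$, we have $\mat x\mat P\mat J\mat P = \frac{1}{n}\langle \mat x, \mat 1_S\rangle \mat 1_S$. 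By Cauchy--Schwarz its norm is at most $\frac{1}{n}\sqrt{\vert S\vert}\,\|\mat x\|\sqrt{\vert S\vert} = \mu\|\mat x\|$.
\end{itemize}
The triangle inequality gives $\|\mat P\mat W\mat P\| \leq \mu + \lambda$.

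Finally, apply Cauchy--Schwarz to the inner product:
\[\Pr[(v_1, \ldots, v_r) \subseteq S] \leq \tfrac{1}{n}\|\mat 1_S\|^2(\mu+\lambda)^{r-1} = \mu(\mu+\lambda)^{r-1} \leq (\mu+\lambda)^r,\]
which is the claimed bound $(\vert S\vert/n + \rho/d)^r$.

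The only delicate point is the norm bound on $\mat P\mat J\mat P$, together with verifying $\|\mat E\| \leq \lambda$. The latter uses symmetry of $\mat A$ (true because $G$ is undirected, even with multi-edges) and regularity, which makes $\mat 1$ an eigenvector. Everything else is bookkeeping.
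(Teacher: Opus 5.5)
Your proof is correct. The paper gives no proof of its own and simply cites Vadhan's Theorem~4.17, and your argument is essentially the standard proof of that theorem: write the probability via the restricted walk operator $\mat P\mat W\mat P$, bound its norm by splitting off $\mat J$, and finish with Cauchy--Schwarz. The one difference is your split $\mat W = \mat J + \mat E$ with $\|\mat E\|\le\lambda$, which gives exactly the $\mu+\lambda$ the lemma states; the cited source instead writes $\mat W=(1-\lambda)\mat J+\lambda\mat E'$ with $\|\mat E'\|\le 1$ and obtains the slightly sharper $\mu+\lambda(1-\mu)$, which the lemma does not need.
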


Combining Lemmas \ref{lem:basicexpander} and \ref{lem:basicwalk}, we have the following corollary.

\begin{corollary} \label{cor:getexpander}
    For all $\delta \in (0,1)$, there exists a positive integer $d = O((1/\delta)^{O(1)})$ such that the following holds for all positive integers $n$. There is an $O((n/\delta)^{O(1)})$ time deterministic algorithm which outputs a $d$-regular graph $G = (V, E)$ on $n$ vertices such that, for all positive integers $r$ and all subsets $S \subseteq V$, the probability that a random walk $(v_1, \ldots, v_r)$ along $G$ has its vertices contained within $S$ is at most $(\vert S\vert/n + \delta)^{r}$.
\end{corollary}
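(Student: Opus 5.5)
The plan is to combine Lemma~\ref{lem:basicexpander} and Lemma~\ref{lem:basicwalk} directly, choosing the degree parameter $\Delta$ so that the error term $\rho/d$ in Lemma~\ref{lem:basicwalk} falls below $\delta$.

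\textbf{Step 1: normalized second eigenvalue.} Given $\delta \in (0,1)$, Lemma~\ref{lem:basicexpander} supplies, for any $\Delta$, a $T(\Delta)$-regular multigraph on $n$ vertices whose second eigenvalue has magnitude at most $\rho = T(\Delta)^\kappa$. The normalized quantity is therefore
\[
\rho/T(\Delta) \le T(\Delta)^{\kappa-1}.
\]
Since $\kappa<1$, this tends to zero as $T(\Delta)$ grows. It suffices to pick $\Delta$ with $T(\Delta) \ge (1/\delta)^{1/(1-\kappa)}$, which gives $T(\Delta)^{\kappa-1}\le\delta$.

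\textbf{Step 2: size of $d$.} Because $T(\Delta)=\Theta(\Delta)$, we can take $\Delta = \Theta\big((1/\delta)^{1/(1-\kappa)}\big)$, adjusting constants to absorb the hidden constant in $T$. Then $d := T(\Delta) = O\big((1/\delta)^{1/(1-\kappa)}\big)$. As $\kappa$ is an absolute constant, this is $O((1/\delta)^{O(1)})$. Note that $d$ depends only on $\delta$, not on $n$, as the corollary requires.

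\textbf{Step 3: running time and conclusion.} The algorithm of Lemma~\ref{lem:basicexpander} runs in time $O((n\Delta)^{O(1)})$. Since $\Delta = O((1/\delta)^{O(1)})$, this is $O((n/\delta)^{O(1)})$. Applying Lemma~\ref{lem:basicwalk} to the output graph $G$ with regularity $d$ and eigenvalue bound $\rho$, for every $r$ and every $S\subseteq V$ the probability that the walk stays inside $S$ is at most
\[
(|S|/n+\rho/d)^r \le (|S|/n+\delta)^r.
\]

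\textbf{Main obstacle.} There is no real obstacle; the only care needed is bookkeeping. One should check that $T(\Delta)$ can be made large enough using an integer $\Delta$; a lower bound $T(\Delta)\ge c\Delta$ for some constant $c>0$ suffices. One should also note that the constant hidden in $O(\cdot)$ for $d$ is uniform in $\delta$.
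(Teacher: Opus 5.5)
Your proposal is correct and matches the paper, which gives no separate proof and simply states that the corollary follows by combining Lemma~\ref{lem:basicexpander} with Lemma~\ref{lem:basicwalk}. Your choice of $\Delta$ with $T(\Delta) \ge (1/\delta)^{1/(1-\kappa)}$ is exactly the bookkeeping that combination needs: it gives $\rho/d = T(\Delta)^{\kappa-1} \le \delta$, $d = O((1/\delta)^{1/(1-\kappa)})$, and running time $O((n/\delta)^{O(1)})$.
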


\paragraph{Combining Expanders with Vandermonde Matrices.} We combine Lemma \ref{lem:vandcombo} with Corollary \ref{cor:getexpander} to get a useful subroutine for our MDC reduction.

\begin{lemma} \label{lem:expandervand}
    Let $w \geq r$ be positive integers, let $\delta \in (0,1)$, let $\mathbb{F}$ be a finite field of size more than $w$, and let $N$ and $M$ be positive integers. There is an $O((NMw\log\vert\mathbb{F}\vert)^{O(1)}(1/\delta)^{O(r)})$ time deterministic algorithm which takes as input a matrix $\mat G \in \mathbb{F}^{M \times N}$ and outputs a matrix $\mat G' \in \mathbb{F}^{M \times N'}$ such that $w \leq N' \leq O(Nw(1/\delta)^{O(r)})$, and such that the following holds. Let $D$ be the relative minimum distance of the code $\{\mat x \mat G : \mat x \in \mathbb{F}^M\}$, and let $D'$ be the relative minimum distance of the code $\{\mat x \mat G' : \mat x \in \mathbb{F}^M\}$. Then
    \begin{enumerate}
        \item $D' \leq rD$, and \label{item:expanderupper}
        \item $D' \geq 1 - \left((1 - D + \delta)^r + r/w\right)$. \label{item:expanderlower}
    \end{enumerate}
\end{lemma}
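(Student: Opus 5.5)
We build $\mat G'$ by the expander-walk construction of \cite{khot2006ruling}, followed by a Vandermonde-based folding of each walk. Concretely, apply Corollary \ref{cor:getexpander} with parameter $\delta$ to get a $d$-regular graph $G$ on vertex set $[N]$ (identified with the columns of $\mat G$), where $d = (1/\delta)^{O(1)}$. Enumerate all walks $(v_1,\ldots,v_r)$ in $G$; there are $N d^{r-1} = N(1/\delta)^{O(r)}$ of them. For each walk, let $\mat G_{\text{walk}} \in \mathbb{F}^{M\times r}$ be the column-induced submatrix of $\mat G$ on columns $v_1,\ldots,v_r$ (with repetition). Let $\mat V$ be an $r\times w$ Vandermonde matrix over $\mathbb{F}$, which exists since $r \le w < |\mathbb{F}|$. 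Append the $w$ columns $\mat G_{\text{walk}}\mat V$ to $\mat G'$. Then $N' = N d^{r-1} w$, which meets the size bound, and the running time is polynomial in this size times the field-arithmetic cost.

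For item \ref{item:expanderupper}, take a nonzero $\mat x$ with $\mat x\mat G$ of relative weight $D$, and let $S$ be its set of nonzero columns. If a walk avoids $S$, then $\mat x\mat G_{\text{walk}} = \mat 0$, so the whole block $\mat x \mat G_{\text{walk}}\mat V$ vanishes. By regularity of $G$, a uniformly random walk has each $v_i$ uniformly distributed, so a union bound gives that the walk hits $S$ with probability at most $rD$. Since every block has the same width $w$, the relative weight of $\mat x \mat G'$ is at most $rD$.

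For item \ref{item:expanderlower}, fix any nonzero $\mat x$, and let $Z$ be the set of zero coordinates of $\mat x\mat G$; we have $|Z|\le (1-D)N$. Any walk that leaves $Z$ at some step has $\mat x\mat G_{\text{walk}}\ne \mat 0$. By Lemma \ref{lem:vandcombo}, the block $\mat x\mat G_{\text{walk}}\mat V$ then has at most $r$ zero entries among its $w$, that is, a zero fraction of at most $r/w$. (The lemma says "degree at most $a$"; since the exponents run from $1$ to $r$, we can factor out a power of $c$, leaving a polynomial of degree less than $r$ evaluated at nonzero points, so the number of roots is at most $r$ either way.) Walks that stay entirely inside $Z$ contribute at most a $(|Z|/N+\delta)^r \le (1-D+\delta)^r$ fraction, by Corollary \ref{cor:getexpander}. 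Averaging over blocks, the relative number of zeros in $\mat x\mat G'$ is at most $(1-D+\delta)^r + r/w$, which gives the claimed lower bound on $D'$.

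The only delicate point is the measure bookkeeping: every walk contributes an equal-width block, so uniform averaging over walks is exactly averaging over coordinates of $\mat G'$. The upper bound also relies on the uniform-marginal property of walks on regular graphs. The rest is routine.
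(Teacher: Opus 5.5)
Your proposal is correct and follows the paper's proof essentially step for step. It uses the same expander-walk construction, with each walk's $M \times r$ column block multiplied by an $r \times w$ Vandermonde matrix, the same union bound over walk positions for item~1, and the same split for item~2 into walks that hit a nonzero coordinate of $\mat x \mat G$ (at most $r/w$ zeros per block by Lemma~\ref{lem:vandcombo}) versus walks trapped in the zero set (Corollary~\ref{cor:getexpander}). One small tidy-up, which the paper handles by its phrasing: in item~2 you should fix $\mat x$ with $\mat x \mat G \neq \mat 0$ rather than just $\mat x \neq \mat 0$, since otherwise $|Z| \le (1-D)N$ can fail; this loses nothing, because $\mat x \mat G = \mat 0$ forces $\mat x \mat G' = \mat 0$.
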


\begin{proof}
    Invoke Corollary \ref{cor:getexpander} with parameter $\delta$ to get a graph $G = (V, E)$ on $N$ vertices, where each vertex is of degree $d = O((1/\delta)^{O(1)})$. Assign each vertex $v \in V$ to a distinct column $\mat c^{(v)}$ of $\mat G$. Now do the following for every walk $(v_1, \ldots, v_r)$ on $G$:
    \begin{enumerate}
        \item Horizontally concatenate the columns $\mat c^{(v_1)}, \ldots, \mat c^{(v_r)}$ to get a matrix $\mat C \in \mathbb{F}^{M \times r}$.
        \item Let $\mat V \in \mathbb{F}^{r \times w}$ be a Vandermonde matrix; such a matrix exists because $\vert \mathbb{F}\vert > w \geq r$ by assumption.
        \item Let $\mat G^{(v_1, \ldots, v_r)} = \mat C \mat V$.
    \end{enumerate}
    The final matrix $\mat G'$ is the horizontal concatenation of all $\mat G^{(v_1, \ldots, v_r)}$ formed as above.
    
    The number of rows in $\mat G'$ is equal to the number of rows in $\mat G$, so $\mat G'$ is of height $M$ as required. The number of columns $N'$ in $\mat G'$ is $w$ times the number of length-$r$ walks in the graph $G$; because the graph is $O((1/\delta)^{O(1)})$-regular, we have $w \leq N' \leq O(Nw(1/\delta)^{O(r)})$. By Corollary \ref{cor:getexpander}, the graph $G$ can be constructed deterministically in time $O((N/\delta)^{O(1)})$. Combining all of this with the fact that arithmetic operations over $\mathbb{F}$ can be performed in time $(\log\vert\mathbb{F}\vert)^{O(1)}$, the time to construct $\mat G'$ is $O((NMw\log\vert\mathbb{F}\vert)^{O(1)} \cdot (1/\delta)^{O(r)})$. All that remains is to argue the relative minimum distance properties.

    \begin{claim} \label{claim:upper}
        $D' \leq rD$.
    \end{claim}
    \begin{proof}
        Let $\mat x \in \mathbb{F}^M$ be any nonzero vector such that $\mat x \mat G$ has relative weight at most $D$; such a vector exists by definition of $D$. We argue that $\mat x \mat G'$ has relative weight at most $rD$, which implies that $D' \leq rD$.

        Notice that every length-$r$ walk $(v_1, \ldots, v_r)$ in the graph $G$ such that $\mat x \mat c^{(v_i)} = 0$ for all $i \in [r]$ maps to a column-induced submatrix $\mat G^{(v_1, \ldots, v_r)}$ such that $\mat x \mat G^{(v_1, \ldots, v_r)} = \mat 0$. Therefore, all of the nonzero entries in $\mat x \mat G'$ must correspond to length-$r$ walks $(v_1, \ldots, v_q)$ on the graph $G$ such that there exists $i \in [q]$ with $\mat x \mat c^{(v_i)} \neq 0$. Because every submatrix $\mat G^{(v_1, \ldots, v_q)}$ has the same width, to upper bound the relative weight of $\mat x \mat G'$ it will be sufficient to upper bound the fraction of length-$r$ walks on $G$ satisfying this not-all-zeros property.

        We know that at most $DN$ of the columns in $\mat G$ have a nonzero dot product with $\mat x$. By the regularity of graph $G$, each of these columns corresponds to a vertex that appears in an $r/N$ fraction of the walks. So the total number of walks with at least one vertex corresponding to this set of columns is at most $rD$.
    \end{proof}

    \begin{claim} \label{claim:lower}
        $D' \geq 1 - \left((1 - D + \delta)^r + r/w\right)$.
    \end{claim}
    \begin{proof}
        Let $\mat x \in \mathbb{F}^M$ be such that $\mat x \mat G \neq \mat 0$. By definition of $D$, we know that $\mat x \mat G$ has relative weight at least $D$. We now argue that the relative number of zero entries in $\mat x \mat G'$ is at most
        \[(1 - D + \delta)^r + r/w,\]
        which is sufficient to prove the claim.

        As described previously, every length-$r$ walk $(v_1, \ldots, v_r)$ in the graph $G$ maps to a column-induced submatrix $\mat G^{(v_1, \ldots, v_r)}$ of $\mat G'$. For each walk $(v_1, \ldots, v_r)$, if there exists $i \in [r]$ such that $\mat x \mat c^{(v_i)} \neq 0$, then by Lemma \ref{lem:vandcombo} the relative number of zero entries in $\mat x \mat G^{(v_1, \ldots, v_r)}$ is at most $r/w$. So the relative number of zero entries in $\mat x \mat G'$ contributed by walks of this type is at most $r/w$.

        All the remaining zero entries come from walks $(v_1, \ldots, v_r)$ such that $\mat x \mat c^{(v_i)} = 0$ for all $i \in [r]$. 
        We know that at most a $1 - D$ fraction of the vertices in graph $G$ correspond to columns of $\mat G$ that have a dot product of zero with $\mat x$. By Corollary \ref{cor:getexpander}, the probability that a random length-$r$ walk on $G$ has all of its vertices contained within this subset is at most $(1 - D + \delta)^r.$ Thus, the fraction of submatrices $\mat G^{(v_1, \ldots, v_r)}$ such that $\mat x \mat G^{(v_1, \ldots, v_r)} = \mat 0$ is at most $(1 - D + \delta)^r.$ Because every submatrix is of the same width $w$, this quantity also serves as an upper bound on the relative number of additional zero entries, giving a total upper bound of
        \[(1 - D + \delta)^r + r/w\]
        on the relative number of zero entries.
    \end{proof}
    Combining Claims \ref{claim:upper} and \ref{claim:lower}, we have proved the lemma.
\end{proof}

\paragraph{An Initial Strengthening.} Over a sufficiently large extension field of $\mathbb{F}_2$, we can amplify the completeness and soundness parameters to any constants in $(0, 1)$, while only incurring a polynomial blow-up in the instance size and running time.

\begin{lemma} \label{lem:anyconstants}
    There is a computable function $f(\alpha, \beta)$ such that the following holds. Let $0 < \alpha < \beta < 1$ be any constants, let $\mathbb{F}$ be any extension field of $\mathbb{F}_2$ of size more than $f(\alpha, \beta)$, and let $n$ be any positive integer. There is a deterministic $O((n\log\vert\mathbb{F}\vert)^{O(1)})$ time reduction from SAT instances of size $n$ to $(\mathbb{F}, N, M, \alpha, \beta)$-GapMDC, where $N = n^{O(1)}$ and $M = n^{O(1)}$.
\end{lemma}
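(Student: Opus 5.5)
We start from Theorem~\ref{thm:baseMDC}, which gives a deterministic $n^{O(1)}$ time reduction from SAT to $(\mathbb{F}_2, N_0, M_0, \alpha_0, \beta_0)$-GapMDC for some fixed constants $0<\alpha_0<\beta_0<1$. Then we lift and amplify with Lemma~\ref{lem:expandervand}. First we move from $\mathbb{F}_2$ to the extension field $\mathbb{F}$ by viewing the generator matrix $\mat G_0 \in \mathbb{F}_2^{M_0 \times N_0}$ as a matrix over $\mathbb{F}$. The relative minimum distance $D_0$ of the code over $\mathbb{F}$ must be related to the one over $\mathbb{F}_2$.
\begin{itemize}
\item The upper bound is trivial: an $\mathbb{F}_2$-codeword is an $\mathbb{F}$-codeword.
\item For the lower bound we use the standard fact that $\mathbb{F}$ is an $\mathbb{F}_2$-vector space with basis $\omega_1,\ldots,\omega_k$. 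A nonzero $\mathbb{F}$-combination $\mat x\mat G_0$ decomposes as $\sum_j \omega_j \mat x^{(j)}\mat G_0$ with $\mat x^{(j)}\in\mathbb{F}_2^{M_0}$ not all zero. Its support is the union of the supports of the $\mat x^{(j)}\mat G_0$, so it contains the support of a nonzero $\mathbb{F}_2$-codeword.
\end{itemize}
Hence $D_0$ satisfies $D_0 \le \alpha_0$ in the YES case and $D_0 \ge \beta_0$ in the NO case over $\mathbb{F}$ as well.

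The amplification is a single application of Lemma~\ref{lem:expandervand} with constant parameters $r$, $w$ and $\delta$. Item~\ref{item:expanderupper} gives $D' \le r\alpha_0$ in the YES case. Item~\ref{item:expanderlower} gives $D' \ge 1-(1-\beta_0+\delta)^r - r/w$ in the NO case. The task is to choose the constants so that $r\alpha_0 \le \alpha$ and $1-(1-\beta_0+\delta)^r-r/w\ge\beta$ for the prescribed targets $\alpha<\beta$.

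As stated, these requirements conflict. The YES bound needs $r$ small, about $\alpha/\alpha_0$. The NO bound with $r=1$ is only about $\beta_0$, which may be below $\beta$. So I would first preprocess to make the base gap ratio large. One way is to first shrink $\alpha_0$ while keeping $\beta_0$ bounded away from $0$, for example by tensoring the $\mathbb{F}_2$ code with itself a constant number of times. Distances of tensor codes multiply exactly, so the ratio becomes $(\beta_0/\alpha_0)^k$. Then we pick $r$ with $r\alpha_0^k\le\alpha$ while $1-(1-\beta_0^k+\delta)^r$ is large.

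This still requires $r\beta_0^k \gtrsim \log(1/(1-\beta))$ while $r\alpha_0^k\le\alpha$. That works once $(\beta_0/\alpha_0)^k \gg \log(1/(1-\beta))/\alpha$, since $1-(1-x)^r\approx 1-e^{-rx}$ for small $x$. Choosing $\delta \ll \beta_0^k$ and $w \gg r/(1-\beta)$ absorbs the error terms. The field size condition $|\mathbb{F}|>w$ defines $f(\alpha,\beta)$, which is computable because all of $k,r,\delta,w$ are explicit functions of $\alpha,\beta,\alpha_0,\beta_0$.

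Padding is needed because the achieved YES distance may be strictly below $\alpha$, but GapMDC only needs \emph{at most} $\alpha$, so this is fine. The running time is polynomial in $n$ and $\log|\mathbb{F}|$, since all exponents $(1/\delta)^{O(r)}$ are constants.

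The main obstacle I expect is this parameter balancing. In particular, the NO-case bound from Lemma~\ref{lem:expandervand} only becomes strong when $r\cdot(\text{distance})$ is large, while the YES bound grows linearly in $r$. This forces the preliminary gap-ratio amplification by tensoring.
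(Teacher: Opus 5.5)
Your proposal is correct and follows essentially the same route as the paper: embed the $\mathbb{F}_2$ code from Theorem~\ref{thm:baseMDC} into $\mathbb{F}$, tensor a constant number of times so that the gap ratio $(\beta_0/\alpha_0)^k$ becomes large, and then apply Lemma~\ref{lem:expandervand} once. The constant parameters are chosen so that $r\alpha_0^k\le\alpha$ and $r\beta_0^k=\Omega(\log(2/(1-\beta)))$, with $\delta$ a constant fraction of $\beta_0^k$ and $w=\Theta(r/(1-\beta))$, which yields $f(\alpha,\beta)=w$. Your basis-decomposition argument for why the lower bound on distance survives the field extension is a welcome justification of a step the paper only asserts.
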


\begin{proof}
    Let $\alpha'$ and $\beta'$ be, respectively, the completeness and soundness parameters guaranteed by Theorem \ref{thm:baseMDC}. We know that $0 < \alpha' < \beta' < 1$. Choose any $0 < \alpha < \beta < 1$ to be the target completeness and soundness parameters. Now let $q$ be the smallest positive integer such that
    \[\left\lceil\frac{2\log_2(2/(1-\beta))}{(\beta')^q}\right\rceil \leq \left\lfloor \frac{\alpha}{(\alpha')^q}\right\rfloor,\]
    and then let $q'$ be the smallest positive integer satisfying
    \[\frac{2\log_2(2/(1-\beta))}{(\beta')^q}\leq q' \leq \frac{\alpha}{(\alpha')^q}.\]
    A solution always exists by our assumptions on $\alpha, \beta, \alpha',$ and $\beta'$. We define the function $f(\alpha, \beta)$ as $f(\alpha, \beta) = q'\lceil 2/(1-\beta)\rceil$.

    To start the reduction, let $\mathbb{F}$ be any extension field of $\mathbb{F}_2$ of size more than $f(\alpha, \beta)$. Apply Theorem \ref{thm:baseMDC} to the starting SAT instance to get a matrix $\mat G \in \mathbb{F}_2^{M \times N}$, where $N = n^{O(1)}$ and $M = n^{O(1)}$. If the SAT instance was satisfiable, then the relative minimum distance of the code $\{\mat x \mat G : \mat x \in \mathbb{F}_2^M\}$ is at most $\alpha'$, and otherwise it is at least $\beta'$. Notice that we can directly embed $\mat G$ into any extension field of $\mathbb{F}_2$ while maintaining the minimum distance properties, so assume that we embed $\mat G$ into $\mathbb{F}$.

    We modify the matrix $\mat G$ via two operations to get matrices $\mat G'$ and $\mat G^*$. Matrix $\mat G^*$ is the generator matrix for our final code. First, we compute $\mat G' = \mat G^{\otimes q}$, i.e. the $q$-fold tensor product of $\mat G$ with itself. Because the relative minimum distance of a code is multiplicative under tensoring, the relative minimum distance of the code $\mathcal{C}' = \{\mat x\mat G' : \mat x \in \mathbb{F}^{M^q}\}$ is at most $(\alpha')^q$ if the starting SAT instance was satisfiable, and otherwise it is at least $(\beta')^q$. Because $q$ is independent of $n$, the size of $\mat G'$ is polynomial in $n$, and the time to construct $\mat G'$ is\footnote{The $(\log \vert\mathbb{F}\vert)^{O(1)}$ factor comes from performing arithmetic operations over $\mathbb{F}$.} $O((n\log\vert\mathbb{F}\vert)^{O(1)})$.

    Now apply the reduction from Lemma \ref{lem:expandervand} to $\mat G'$, where we set $w = q'\lceil 2/(1-\beta)\rceil$, $r = q'$, and $\delta = (\beta')^q/2$; the lemma applies because $\vert \mathbb{F}\vert > f(\alpha, \beta) = w$. Let the resulting matrix to be $\mat G^*$. Because $w, r,$ and $\delta$ are independent of $n$, the time to construct $\mat G^*$ is $O((n\log\vert\mathbb{F}\vert)^{O(1)})$, and $\mat G^*$ is of size $n^{O(1)}$. All that remains is to bound the relative minimum distance of the code $\mathcal{C}^* = \{\mat x \mat G^* : \mat x \in \mathbb{F}^{M^q}\}$.

    \begin{claim} \label{claim:anyconstyes}
        Suppose that the starting SAT instance was satisfiable, meaning the relative minimum distance of $\mathcal{C}'$ is at most $(\alpha')^q$. Then the relative minimum distance of $\mathcal{C}^*$ is at most $\alpha$.
    \end{claim}

    \begin{proof}
        By Lemma \ref{lem:expandervand}, we know that the relative minimum distance of $\mathcal{C}^*$ is at most $q'$ times the relative minimum distance of $\mathcal{C}'$. Using that $q' \leq \frac{\alpha}{(\alpha')^q}$, the relative minimum distance of $\mathcal{C}^*$ is bounded as
        \[q'(\alpha')^q \leq \frac{\alpha}{(\alpha')^q}(\alpha')^q = \alpha.\]
    \end{proof}

    \begin{claim}
        Suppose that the starting SAT instance was not satisfiable, meaning the relative minimum distance of $\mathcal{C}'$ is at least $(\beta')^q$. Then the relative minimum distance of $\mathcal{C}^*$ is at least $\beta$.
    \end{claim}

    \begin{proof}
        Let the relative minimum distance of $\mathcal{C}'$ be $D'$, and the relative minimum distance of $\mathcal{C}^*$ be $D^*$. Again using Lemma \ref{lem:expandervand}, we know that
        \[D^* \geq 1 - \left((1 - D' + \delta)^r + r/w\right).\]
        Substituting our choice of parameters, this becomes
        \begin{align*}
            D^* & \geq 1 - \left((1 - D' + \delta)^{q'} + q'/w\right) \tag{$r = q'$} \\
            & \geq 1 - \left((1 - (\beta')^q + (\beta')^q/2)^{q'} + q'/w\right) \tag{$D' \geq (\beta')^q$, $\delta = (\beta')^q/2$} \\
            & \geq 1 - \left((1 - (\beta')^q/2)^{(2/(\beta')^q)\log_2(2/(1-\beta))} + q'/w\right) \tag{$q' \geq \frac{2\log_2(2/(1-\beta))}{(\beta')^q}$} \\
            & \geq 1 - \left((1/2)^{\log_2(2/(1-\beta))} + q'/w\right) \tag{$(\beta')^q/2 \in (0,1)$} \\
            & \geq 1 - \left(\frac{1 - \beta}{2} + q'/w\right) \\
            & \geq 1 - \left(\frac{1 - \beta}{2} + \frac{1 - \beta}{2}\right) \tag{$w = q'\lceil 2/(1-\beta)\rceil$} \\
            & \geq \beta
        \end{align*}
    \end{proof}
    Combining the above claims, we achieve the desired minimum distance properties for $\mat G^*$.
\end{proof}

\paragraph{Making $\alpha$ Sub-Constant.} We now describe a self-reduction for MDC that allows us to improve the parameter $\alpha$ significantly, while keeping $\beta$ at a fixed constant.

\begin{lemma} \label{lem:inversepolylog}
    There is a computable function $f(\beta)$ such that the following holds. Let $c > 0$ and $\beta \in (0,1)$ be any constants, let $\mathbb{F}$ be any extension field of $\mathbb{F}_2$ of size more than $f(\beta)$, and let $n$ be any positive integer. There is a deterministic $O((n\log\vert\mathbb{F}\vert)^{O(\log\log n)})$ time reduction from SAT instances of size $n$ to $(\mathbb{F}, N, M, 1/(\log n)^c, \beta)$-GapMDC, where $N = n^{O(\log\log n)}$ and $M = n^{O(\log\log n)}$.
\end{lemma}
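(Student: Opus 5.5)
Start from Lemma \ref{lem:anyconstants} with completeness $\alpha_0$ and soundness $\beta_0$, both constants close to each other's scale. Then tensor the code $s = \Theta(\log\log n)$ times. Relative minimum distance is multiplicative under tensoring, so completeness becomes $\alpha_0^{s}$ and soundness $\beta_0^{s}$. The size becomes $n^{O(\log\log n)}$, and the running time $O((n\log|\mathbb{F}|)^{O(\log\log n)})$.

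The tensoring alone drives soundness down as well, so I will restore it with the expander/Vandermonde step of Lemma \ref{lem:expandervand}. Apply it with walk length $r$ and $\delta = \beta_0^{s}/2$, where $r$ satisfies
\[
\frac{2\log_2(2/(1-\beta))}{\beta_0^{s}} \le r \le \frac{1}{(\log n)^c\,\alpha_0^{s}},
\]
and take $w = r\lceil 2/(1-\beta)\rceil$. This is exactly the template of Lemma \ref{lem:anyconstants}.
\begin{itemize}
\item \emph{Completeness.} By item \ref{item:expanderupper}, $D' \le r\alpha_0^{s} \le 1/(\log n)^c$.
\item \emph{Soundness.} By item \ref{item:expanderlower}, $D' \ge 1 - \bigl((1-\beta_0^{s}/2)^{r} + r/w\bigr) \ge \beta$, by the same chain of inequalities as before.
\end{itemize}

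The key requirement is that the window for $r$ be nonempty. This needs $(\beta_0/\alpha_0)^{s} \ge 2(\log n)^c\log_2(2/(1-\beta))$. Since $\beta_0/\alpha_0$ must be a constant strictly greater than $1$, and large enough to make the window nonempty, I will invoke Lemma \ref{lem:anyconstants} with, say, $\alpha_0 = 1/4$ and $\beta_0 = 1/2$ (any fixed ratio above $1$ works). Then $s = \lceil (c+1)\log_2\log n\rceil + O(1)$ suffices, and $s = O(\log\log n)$.

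The main obstacle is the cost of the expander step. That cost is $(1/\delta)^{O(r)}$ and $w = O(r)$, with $1/\delta = 2\beta_0^{-s} = (\log n)^{O(1)}$. Since $r \approx \beta_0^{-s}\cdot O(1) = (\log n)^{O(c)}$, we get $(1/\delta)^{O(r)} = 2^{(\log n)^{O(c)}\log\log n}$, which for $c$ large is superpolynomial and could exceed $n^{O(\log\log n)}$. I expect this is where the real work lies.

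My first fix is to apply the walk step to the base code before tensoring, then tensor. Concretely:
\begin{enumerate}
\item Using Lemma \ref{lem:anyconstants} directly (its constants are arbitrary), get $\alpha_0$ tiny and $\beta_0 = \beta^{1/s}$-type values near $1$.
\item Tensor, noting that $\beta_0^{s}$ must remain at least $\beta$. This forces $\beta_0 \ge \beta^{1/s} = 1 - O(1/\log\log n)$, which is no longer a constant.
\end{enumerate}
So I will instead choose field size and walk parameters per stage. One option is a single constant-parameter boosting step to reach $\beta_0 = 1 - 1/\mathrm{polyloglog}$; the cleaner option is alternating. In the alternating scheme I repeat $O(\log\log n)$ rounds, each consisting of one squaring (tensor) step followed by one walk step with constant $r$ and $\delta$. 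Each round keeps soundness at least $\beta$ while shrinking completeness by a constant factor, and blows up the size by a polynomial factor only in the tensor step, i.e.\ squaring $N$.

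Squaring $O(\log\log n)$ times, however, gives $n^{2^{O(\log\log n)}}$, which is too large. So rounds must instead tensor with the fixed base code $\mat G$: each round multiplies the size by $n^{O(1)}$ and completeness by $\alpha_0$. Soundness becomes $\beta\cdot\beta_0$, and a walk step with constant $r$ restores it to $\beta$. Over $O(c\log\log n)$ rounds this yields size $n^{O(\log\log n)}$ and completeness $2^{-\Omega(\log\log n)}$ times a constant per round, i.e.\ $1/(\log n)^c$. The walk step's factor $r$ must not cancel the completeness gain, which is ensured by choosing $\alpha_0 r < 1/2$ in Lemma \ref{lem:anyconstants}. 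The field size bound $f(\beta)$ comes from the $w$ needed at constant $r$.
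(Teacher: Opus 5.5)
Your final scheme is exactly the paper's proof, and the earlier single-long-walk and repeated-squaring attempts do fail for the reasons you give, so they can be dropped. That scheme runs $c\log\log n$ rounds, each tensoring the current code with the fixed base code from Lemma~\ref{lem:anyconstants} and then applying Lemma~\ref{lem:expandervand} with constant $r,\delta,w$. Per round this restores soundness to $\beta$, shrinks completeness by a factor $\alpha_0 r\le 1/2$, and grows size by only $n^{O(1)}$. The paper takes $\beta_0=\beta$ (which also supplies the base case of your soundness invariant), $\delta=\beta^2/2$, $r=q$ for the least $q$ with $(1-\beta^2/2)^q\le(1-\beta)/2$, $w=q\lceil 2/(1-\beta)\rceil$, and $\alpha_0=1/(2q)$, so that $f(\beta)=\max(f'(1/(2q),\beta),\,w)$ depends only on $\beta$.
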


\begin{proof}
    Fix any constants $c > 0$ and $\beta \in (0,1)$, and solve for the smallest positive integer $q$ such that
    \[(1 - \beta^2/2)^q \leq \frac{1-\beta}{2};\]
    a solution always exists because $\beta \in (0,1)$. Let $\alpha = 1/(2q)$, and set $f(\beta) = \max(f'(\alpha, \beta), q\lceil 2/(1-\beta)\rceil)$, where $f'(\alpha, \beta)$ is the function from Lemma \ref{lem:anyconstants}.

    To start the reduction, let $\mathbb{F}$ be any extension field of $\mathbb{F}_2$ of size more than $f(\beta)$, and apply Lemma \ref{lem:anyconstants} to the starting SAT instance with this field $\mathbb{F}$ and parameters $\alpha$ and $\beta$. This gives, deterministically in $(n\log\vert \mathbb{F}\vert)^{O(1)}$ time, a matrix $\mat G \in \mathbb{F}^{M \times N}$ where $N = n^{O(1)}$ and $M = n^{O(1)}$.

    From here, we use a recursive procedure to construct the generator matrix for our final GapMDC instance. To this end, set $L = c\log\log n$, and define $\mat G^{(1)} = \mat G$. Set parameters $\delta = \beta^2/2$, $r = q$, and $w = q\lceil 2/(1-\beta)\rceil$. Do the following for $\ell = 2$ up to $\ell = L$:
    \begin{enumerate}
        \item Let $\mat A^{(\ell)} = \mat G^{(\ell-1)} \otimes \mat G$. In other words, $\mat A^{(\ell)}$ is the tensor product of the previous matrix $\mat G^{(\ell-1)}$ with the starting matrix $\mat G$.
        \item Invoke the algorithm from Lemma \ref{lem:expandervand} on matrix $\mat A^{(\ell)}$ with parameters $\delta, r,$ and $w$; by our choice of $f(\beta)$, $\mathbb{F}$ is large enough to apply the lemma. The resulting matrix is $\mat G^{(\ell)}$.
    \end{enumerate}
    Our final GapMDC instance will be with respect to the matrix $\mat G^{(L)}$.

    We know that the starting matrix $\mat G$ is of size $n^{O(1)}$, and that the width of $\mat G^{(\ell)}$ is only a constant factor larger than the width of $\mat A^{(\ell)}$. So each matrix $\mat G^{(\ell)}$ is larger than the previous matrix $\mat G^{(\ell-1)}$ by the same $n^{O(1)}$ factor. Because there are $O(\log\log n)$ iterations, the size of the final matrix $\mat G^{(L)}$ is thus $n^{O(\log\log n)}$. The construction is deterministic and proceeds in $O((n\log\vert\mathbb{F}\vert)^{O(\log\log n)})$ time.\footnote{As before, the dependence on $\log \vert \mathbb{F}\vert$ comes from performing arithmetic operations over $\mathbb{F}$.} All that remains is to bound the relative minimum distance of the code $\mathcal{C}^{(L)} = \{\mat x \mat G^{(L)} : \mat x \in \mathbb{F}^{M^L}\}$ in terms of the original code $\mathcal{C} = \{\mat x \mat G : \mat x \in \mathbb{F}^M\}$.

    \begin{claim}
        Suppose that the starting SAT instance was satisfiable, meaning that the relative minimum distance of $\mathcal{C}$ is at most $\alpha$. Then the relative minimum distance of $\mathcal{C}^{(L)}$ is at most $1/(\log n)^c$.
    \end{claim}
    \begin{proof}
        We give a proof by induction, showing that for all $\ell \in [L]$, the relative minimum distance of the code $\mathcal{C}^{(\ell)} = \{\mat x \mat G^{(\ell)} : \mat x \in \mathbb{F}^{M^\ell}\}$ is at most $1/2^\ell$. The base case $\ell = 1$ follows immediately because $\alpha$ is necessarily at most $1/2$, and the final case (when $L = c\log\log n$) is sufficient to prove the claim because $1/2^L = 1/(\log n)^c$.

        All that remains is the inductive case. To this end, fix $\ell \in [2, L]$, and suppose that $\mathcal{C}^{(\ell-1)}$ has relative minimum distance at most $1/2^{\ell-1}$. Because the relative minimum distance is multiplicative under tensoring, we know that the code $\{\mat x \mat A^{(\ell)} : \mat x \in \mathbb{F}^{(n')^\ell}\}$ has relative minimum distance at most $\alpha/2^{\ell-1}$. By Lemma \ref{lem:expandervand}, the relative minimum distance increases by at most a factor of $q$ when converting from this code to $\mathcal{C}^{(\ell)}$, giving an upper bound of
        \[\alpha q/2^{\ell-1} = (1/2)/2^{\ell-1} = 1/2^\ell.\]
    \end{proof}

    \begin{claim}
        Suppose that the starting SAT instance was not satisfiable, meaning that the relative minimum distance of $\mathcal{C}$ is at least $\beta$. Then the relative minimum distance of $\mathcal{C}^{(L)}$ is at least $\beta$.
    \end{claim}
    \begin{proof}
        As in the previous claim, we give a proof by induction, showing that relative minimum distance of $\mathcal{C}^{(\ell)} = \{\mat x \mat G^{(\ell)} : \mat x \in \mathbb{F}^{M^\ell}\}$ is at most $\beta$. The base case $\ell = 1$ is immediate, so let $\ell \in [2, L]$. Assuming that the relative minimum distance of $\mathcal{C}^{(\ell-1)}$ is at least $\beta$, we know that the code $\{\mat x \mat A^{(\ell)} : \mat x \in \mathbb{F}^{M^\ell}\}$ has relative minimum distance at least $\beta^2$. Applying Lemma \ref{lem:expandervand}, the relative minimum distance of $\mathcal{C}^{(\ell)}$ is thus at least
        \begin{align*}
            1 - \left((1 - \beta^2 + \delta)^r + r/w\right) & \geq 1 - \left((1 - \beta^2/2)^q + q/w\right) \tag{$\delta = \beta^2/2$, $r = q$} \\
            & \geq 1 - \left(\frac{1-\beta}{2} + q/w\right) \tag{$(1 - \beta^2/2)^{q} \leq \frac{1-\beta}{2}$} \\
            & \geq 1 - \left(\frac{1-\beta}{2} + \frac{1-\beta}{2}\right) \tag{$w = q\lceil 2/(1-\beta)\rceil$} \\
            & \geq \beta
        \end{align*}
    \end{proof}
    Combining the two claims, we have that $\mathcal{C}^{(L)}$ satisfies the required distance properties.
\end{proof}

\paragraph{Superconstant Length Expander Walks.} We finish by using a single iteration of Lemma \ref{lem:expandervand}, where the walk length is $r = \log n\log\log n$. Recall Theorem \ref{thm:bigMDC}:

\paragraph{Theorem \ref{thm:bigMDC}, Restated.}
    \emph{There is a deterministic $n^{O(\log\log n)}$ time reduction from SAT instances of size $n$ to
    \[(\mathbb{F}_{2^\lambda}, N, M, 0.5, (1-n^{-\Omega(\log\log n)}))\text{-GapMDC},\]
    where $N = n^{\Theta(\log\log n)}$, $M = n^{O(\log\log n)}$, $N \leq 2^\lambda \leq N^2$, and $\lambda$ is a power of $2$.}

\begin{proof}
    Let $\mathbb{F} = \mathbb{F}_{2^\lambda}$ be an extension field of $\mathbb{F}_2$ such that $\lambda \in [2\log n\log\log n, 4\log n\log\log n)$ is a power of two. (At the very end of the proof, we will cast from this field to an even larger field to get our final generator matrix.)
    
    Apply Lemma \ref{lem:inversepolylog} to the starting SAT instance with this field $\mathbb{F}$ and parameters $c = 2$ and $\beta = 0.75$. Assuming $n$ is sufficiently large, we have that $\vert \mathbb{F}\vert$ is greater than the function $f(\beta)$ from the lemma statement. This gives, deterministically in time $n^{O(\log\log n)}$, a matrix $\mat G \in \mathbb{F}^{M \times N}$, where $N = n^{O(\log\log n)}$ and $M = n^{O(\log\log n)}$. If the starting SAT instance was satisfiable, then the relative minimum distance $D$ of the code $\mathcal{C} = \{\mat x \mat G : \mat x \in \mathbb{F}^M\}$ is at most $1/\log^2 n$, and otherwise it is at least $0.75$.

    Now invoke the algorithm from Lemma \ref{lem:expandervand} on $\mat G$ with parameters $r = \log n \log\log n$, $w = n^{\log\log n}$, and $\delta = 0.25$; the lemma applies because $\vert \mathbb{F}\vert \geq n^{2\log\log n} > w$ whenever $n$ is sufficiently large. The algorithm is deterministic and runs in time $n^{O(\log\log n)}$. Denote the resulting matrix as $\mat G' \in \mathbb{F}^{M' \times N'}$. We know that $n^{\Omega(\log\log n)} \leq w \leq N' \leq n^{O(\log\log n)}$ and $M' = n^{O(\log\log n)}$. Let the code generated by $\mat G'$ be $\mathcal{C}' = \{\mat x \mat G' : \mat x \in \mathbb{F}^{M'}\}$, and denote by $D'$ the relative minimum distance of this code.
    
    If $\mathcal{C}$ has relative minimum distance $D \leq 1/\log^2 n$, then we know by Lemma \ref{lem:expandervand} that the relative minimum distance of $\mathcal{C}'$ is at most $rD \leq (\log n\log\log n)/\log^2 n < 0.5$, assuming $n$ is sufficiently large. On the other hand, if the relative minimum distance of $\mathcal{C}$ is at least $0.75$, then we know by Lemma \ref{lem:expandervand} that the code $\mathcal{C}'$ has relative minimum distance at least
    \begin{align*}
        1 - ((1 - D + \delta)^r + r/w) & = 1 - ((0.5)^{\log n\log\log n} + \log n\log\log n/(n^{\log\log n})) \\
        & = 1 - n^{-\Omega(\log\log n)}
    \end{align*}

    Now we adjust the size of $\mathbb{F}$ and the width of $\mat G'$ to fit the theorem's requirements. If the width of $\mat G'$ is less than $2^\lambda$, then horizontally concatenate copies of $\mat G'$ until the width of the resulting matrix $\mat G^*$ is at least $2^\lambda$. Otherwise, take $\mat G^* = \mat G'$. Denote by $N^*$ the width of $\mat G^*$, and notice that $N^* = n^{\Theta(\log\log n)}$. Then, solve for the unique positive integer $x$ such that $N^* \leq 2^{2^x\lambda} < (N^*)^2$, and cast $\mat G^*$ from $\mathbb{F}$ to the extension field $\mathbb{F}_{2^{2^x\lambda}}$. This is the generator matrix for our final GapMDC instance. By construction, $\mat G^*$ inherits the relative minimum distance properties of $\mat G'$, and the final $\lambda^* = 2^x\lambda$ is still a power of two.
\end{proof}

\subsection{From MDC to Quasi-Random Densest Sub-Hypergraph.} \label{sec:carrythrough}

With our hardness result for MDC at hand, the rest of the reduction to QRDH proceeds identically to the reductions described in \cite{khot2006ruling} and \cite{khot2016hardness}, just with a different choice of parameters. The next step is to reduce to a special type of CSP.

\begin{problemnormal}[HomAlgCSP]
    A \emph{Homogeneous Algebraic CSP} is parameterized by four integers $M$, $k$, $d$, and $m$, along with a finite field $\mathbb{F}$ and a set of constraints $\mathcal{C}$ of size $\vert\mathcal{C}\vert = M$. The constraints and assignments are defined as follows:
    \begin{enumerate}
        \item Each constraint $C \in \mathcal{C}$ is of the form $C = (\mat p^{(1)}, \ldots, \mat p^{(k)}, H)$, where $\mat p^{(1)}, \ldots, \mat p^{(k)} \in \mathbb{F}^m$ are points and $H \subseteq \mathbb{F}^k$ is a linear subspace.
        \item An \emph{assignment polynomial $f$} for the HomAlgCSP is a polynomial $f : \mathbb{F}^m \rightarrow \mathbb{F}$.
        \item We say that a constraint $C \in \mathcal{C}$ is \emph{satisfied} by the assignment polynomial $f$ iff the vector $f(\mat p^{(1)}) \| \ldots \| f(\mat p^{(k)})$ is orthogonal to $H$.
    \end{enumerate}
    The goal is to find a \emph{nonzero} assignment polynomial $f$ of degree at most $d$ satisfying the maximum number of constraints.
\end{problemnormal}

More precisely, we reduce to a gap version of the problem:

\begin{problemnormal}[GapAlgCSP]
    Given a Homogeneous Algebraic CSP with parameters $M, k, d,$ and $m$ over a finite field $\mathbb{F}$, along with additional parameters $1 \geq \alpha > \beta \geq 0$, the $(\mathbb{F}, M, k, d, m, \alpha, \beta)$-GapAlgCSP problem is to distinguish between the following two cases:
    \begin{enumerate}
        \item There exists a nonzero assignment polynomial of degree at most $d$ satisfying at least an $\alpha$ fraction of the constraints.
        \item Every nonzero assignment polynomial of degree at most $1000d$ satisfies at most a $\beta$ fraction of the constraints.
    \end{enumerate}
\end{problemnormal}

Notice that a gap is present in \emph{two} regards. First, we want the assignment polynomial in case (1) to satisfy a larger fraction of constraints than any possible assignment polynomial in case (2). Additionally, we want case (2) to hold even for assignment polynomials of degree a factor of $1000$ larger than in case (1).

Theorem 3.4 and Remark 3.5 from \cite{khot2006ruling} can be stated formally as a reduction from GapMDC to GapAlgCSP, as confirmed by Khot \cite{khotpersonal}:


\begin{theorem}[\cite{khot2006ruling}] \label{thm:khotstart}
    Let $N$ and $M$ be any positive integers, let $\alpha, \beta \in [0,1]$ be any real numbers, let $m^*$ and $d^*$ be any positive integers satisfying $\binom{m^*}{d^*} \geq \max(N, M)$, and let $\lambda$ be an integer satisfying $N \leq 2^\lambda \leq N^2$. There is a deterministic $\exp{O((m^*d^*)^{O(1)})}$ time\footnote{This running time is just a very loose upper bound.} reduction from
    \[(\mathbb{F}_{2^\lambda}, N, M, \alpha, \beta)\text{-GapMDC}\]
    to
    \[(\mathbb{F}_{2^\lambda}, M', k = 21, d', m', \alpha', \beta')\text{-GapAlgCSP,}\]
    where
    \begin{align*}
        M' & = \exp{O((m^*d^*)^{O(1)})} \\
        d' & = O(d^*) \\
        m' & = O((m^*d^*)^{O(1)}) \\
        \alpha' & = 1 - \alpha \\
        \beta' & = \max((1 - \beta), O(d^*/2^\lambda)).
    \end{align*}
\end{theorem}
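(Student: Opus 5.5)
The plan is to follow the construction behind Theorem 3.4 and Remark 3.5 of \cite{khot2006ruling}: encode a message $\mat x \in \mathbb{F}^M$ and its codeword $\mat x\mat G \in \mathbb{F}^N$ as values of a single low-degree polynomial $f : \mathbb{F}^{m'} \to \mathbb{F}$. Each constraint of the HomAlgCSP will be a homogeneous linear test, defined by a subspace $H \subseteq \mathbb{F}^{21}$, on $21$ evaluations of $f$. The test says that one codeword coordinate is correctly computed from $f$ and is zero. The satisfied fraction then tracks the fraction of zero coordinates of the encoded codeword, so $\alpha' = 1-\alpha$ and $\beta' \approx 1-\beta$.

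\textbf{Step 1: embedding indices as points.} Since $\binom{m^*}{d^*} \geq \max(N,M)$, I would assign to each message index $i \in [M]$ and each codeword index $j \in [N]$ a distinct $d^*$-subset of $[m^*]$, i.e.\ a $0/1$ point of weight $d^*$. The degree-$d^*$ multilinear monomials $\prod_{s \in S} z_s$ then act as ``indicator'' polynomials. The intended honest polynomial is
\[
g_{\mat x}(\mat z) = \sum_i \mat x_i \prod_{s \in S_i} z_s .
\]
The codeword coordinate $(\mat x\mat G)_j = \sum_i \mat G_{ij}\mat x_i$ is a fixed linear functional of the coefficients of $g_{\mat x}$. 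To make it checkable locally, I would augment the variable set (so $m' = (m^*d^*)^{O(1)}$) so that this functional becomes an evaluation, or a sum of constantly many evaluations, of a single polynomial $f$ of degree $d' = O(d^*)$. The standard route writes the linear map as a low-degree extension of $\mat G$ combined with a sum-check style identity over a subcube, with the sum collapsed by restricting to low-degree curves.

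\textbf{Step 2: constraints and completeness.} Each constraint picks $j \in [N]$ together with auxiliary randomness (a line or curve through the relevant points). It queries $f$ at $k=21$ points and requires the resulting vector to be orthogonal to $H$. Concretely, $H$ encodes both ``the curve restriction is consistent with degree $O(d^*)$ and with the identity defining coordinate $j$'' and ``the value representing $(\mat x\mat G)_j$ is $0$''. All these conditions are linear and homogeneous, so the CSP is a HomAlgCSP. For completeness, take the low-weight nonzero codeword $\mat x\mat G$ and the honest $f$ built from $\mat x$. Then $f \neq 0$, $f$ has degree $\le d'$, and every constraint whose $j$ is a zero coordinate is satisfied. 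That is at least a $1-\alpha$ fraction. The number of constraints $M'$ is the number of (index, curve) choices, which is $\exp{O((m^*d^*)^{O(1)})}$, and so is the running time.

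\textbf{Step 3: soundness, the main obstacle.} Let $f \ne 0$ have degree at most $1000d'$. I must decode a message $\mat x$ from $f$ and show two things: first, that $\mat x\ne 0$, or else that few constraints are satisfied; second, that constraints for $j$ with $(\mat x\mat G)_j \neq 0$ pass only with probability $O(d^*/2^\lambda)$. The second point follows from Schwartz--Zippel along the random curves, using $2^\lambda \ge N$ so that the field is large. The delicate point is the first one: a nonzero $f$ might encode the zero message and pass the tests trivially. Homogeneity rules out forcing $f \ne 0$ by constant terms, so the consistency part of $H$ must be designed so that $f$ encoding $\mat x = 0$ but $f\neq 0$ fails almost all constraints. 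I would first try to obtain this from the degree gap: allowing degree up to $1000d'$ still leaves the curve restrictions of degree $\ll 2^\lambda/21$. Each constraint then compares $f$ to its ``decoded'' part, and any nonzero residual is caught except with probability $O(d^*/2^\lambda)$. Combining the two cases, every nonzero $f$ satisfies at most $\max\bigl(1-\beta,\, O(d^*/2^\lambda)\bigr)$ of the constraints, which is $\beta'$.
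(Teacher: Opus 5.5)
The paper gives no argument of its own for this statement. It imports Theorem~3.4 and Remark~3.5 of \cite{khot2006ruling} as a black box, with the translation into these parameters confirmed by Khot \cite{khotpersonal}, so a citation would have matched the paper. As a self-contained reconstruction, however, your outline has a genuine gap exactly where the content of Khot's theorem lies. Your completeness and size accounting would be fine given a construction, but Steps~1--2 never supply one. You do not specify the encoding $\mat x\mapsto f$: in $g_{\mat x}$ the message sits in the coefficients, and $(\mat x\mat G)_j=\sum_i\mat G_{ij}\mat x_i$ depends on up to $M$ of them. Nor do you give the homogeneous $21$-point identity that certifies $(\mat x\mat G)_j=0$. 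The mechanisms you gesture at do not provide it as described. Sum-check over $\{0,1\}^{m^*}$ has $\Theta(m^*)$ rounds with auxiliary partial-sum polynomials, and reading $f$ along a curve through many points costs as many queries as the degree of the restriction, so neither collapses to $21$ queries to one polynomial. Nor can $H$ encode ``the curve restriction is consistent with degree $O(d^*)$''. Degree-$20$ polynomials interpolate arbitrary values on any $21$ distinct points of $\mathbb{F}^{m'}$, so once $d'\ge 20$ every $21$-point homogeneous relation satisfied by all polynomials of degree $\le d'$ is vacuous. Such a test is also unnecessary, since in GapAlgCSP the degree bound is a promise on the assignment. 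Each constraint must therefore be an identity tied to the specific structure of the honest encodings, and that identity is the missing idea.

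Step~3 then names the main difficulty rather than resolving it. In the no case the quantifier ranges over every nonzero polynomial of degree $\le 1000d'$ in $m'$ variables, not just over honest encodings. For your Schwartz--Zippel strategy to go through, you need a structural lemma of roughly this form: for each $j$, the polynomials $f$ of degree $\le 1000d'$ whose constraint functional at $j$ vanishes identically in the test randomness are exactly the honest encodings $f_{\mat x}$ with $(\mat x\mat G)_j=0$, where $\mat x\mapsto f_{\mat x}$ is injective. This is what rules out nonzero $f$ that ``encode'' $\mat x=\mat 0$, as well as mixtures $f_{\mat x}+g$. ``Compare $f$ to its decoded part'' presupposes a decoding map and a $21$-query linear test that computes the residual; neither is given. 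The appeal to the degree gap also does not say where the factor $1000$ enters. Finally, even granting these pieces, your accounting (zero coordinates pass, nonzero coordinates pass with probability $O(d^*/2^\lambda)$) yields $(1-\beta)+O(d^*/2^\lambda)$, which does not imply the stated $\max\bigl(1-\beta,O(d^*/2^\lambda)\bigr)$. Reaching the max requires a sharper dichotomy: honest encodings of nonzero messages always fail at their nonzero coordinates, while any $f$ with a nonzero residual fails all but an $O(d^*/2^\lambda)$ fraction of the constraints.
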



Now consider starting with a SAT instance of size $n$ and applying our reduction from Theorem~\ref{thm:bigMDC}, which gives an instance of
\[(\mathbb{F}_{2^\lambda}, N, M, 0.5, (1-n^{-\Omega(\log\log n)}))\text{-GapMDC},\]
where $N = n^{\Theta(\log\log n)}$, $M = n^{O(\log\log n)}$, $N \leq 2^\lambda \leq N^2$, and additionally $\lambda$ is a power of $2$. We wish to reduce from this problem to an instance of GapAlgCSP by means of Theorem \ref{thm:khotstart}. Setting
\[m^* = \lceil2^{\sqrt{\log n}\log^4\log n}\rceil\]
and
\[d^* = \lceil\sqrt{\log n}/\log^2\log n\rceil,\]
we have
\[\binom{m^*}{d^*} \geq \left(\frac{m^*}{d^*}\right)^{d^*} \geq 2^{\Omega(\log n \log^2\log n)} \geq n^{\Omega(\log^2\log n)},\]
which is greater than $N$ and $M$, assuming $n$ is sufficiently large. This gives the following corollary:

\begin{corollary} \label{cor:getAlgCSP}
    There is a deterministic $\exp{2^{O(\sqrt{\log n}\log^4\log n)}}$ time reduction from SAT instances of size $n$ to
    \[(\mathbb{F}_{2^\lambda}, M, k = 21, d, m, \alpha = 0.5, \beta = n^{-\Omega(\log\log n)})\text{-GapAlgCSP},\]
    where
    \begin{align*}
        \lambda & = \Theta(\log n\log\log n) \text{ is a power of 2} \\
        M & = \exp{2^{O(\sqrt{\log n}\log^4\log n)}} \\
        d & = O(\sqrt{\log n}/\log^2\log n) \\
        m & = 2^{O(\sqrt{\log n}\log^4\log n)}.
    \end{align*}
\end{corollary}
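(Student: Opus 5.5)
The plan is to compose Theorem \ref{thm:bigMDC} with Theorem \ref{thm:khotstart}, using the parameter choices fixed just above, and then check every resulting parameter against the corollary. Starting from a SAT instance of size $n$, Theorem \ref{thm:bigMDC} gives, deterministically in time $n^{O(\log\log n)}$, an instance of $(\mathbb{F}_{2^\lambda}, N, M, 0.5, 1-n^{-\Omega(\log\log n)})$-GapMDC. Here $N = n^{\Theta(\log\log n)}$, $M = n^{O(\log\log n)}$, $N \leq 2^\lambda \leq N^2$, and $\lambda$ is a power of $2$. The lower bound $\log N = \Omega(\log n\log\log n)$ together with $\lambda \leq 2\log N$ forces $\lambda = \Theta(\log n\log\log n)$, which is the claimed bound on $\lambda$.

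Next we set $m^* = \lceil 2^{\sqrt{\log n}\log^4\log n}\rceil$ and $d^* = \lceil\sqrt{\log n}/\log^2\log n\rceil$. The computation preceding the corollary shows that $\binom{m^*}{d^*} \geq n^{\Omega(\log^2\log n)}$, which exceeds both $N$ and $M$ for large $n$ because they are only $n^{O(\log\log n)}$. So the hypotheses of Theorem \ref{thm:khotstart} hold, and we apply it.

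The remaining work is parameter bookkeeping. First, $m^*d^* = 2^{O(\sqrt{\log n}\log^4\log n)}$, and raising to any fixed power keeps this form. Hence the running time $\exp{O((m^*d^*)^{O(1)})}$ and the number of constraints $M'$ are both $\exp{2^{O(\sqrt{\log n}\log^4\log n)}}$, and $m' = 2^{O(\sqrt{\log n}\log^4\log n)}$. This running time also dominates the $n^{O(\log\log n)}$ cost of the first step. Second, $d' = O(d^*) = O(\sqrt{\log n}/\log^2\log n)$, and $\alpha' = 1 - 0.5 = 0.5$.

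The soundness parameter is the one step needing a genuine check. Theorem \ref{thm:khotstart} gives $\beta' = \max(1-\beta, O(d^*/2^\lambda))$. The first term is $n^{-\Omega(\log\log n)}$ by Theorem \ref{thm:bigMDC}. For the second term, $2^\lambda \geq N = n^{\Omega(\log\log n)}$ while $d^* \leq \sqrt{\log n}$, so $d^*/2^\lambda = n^{-\Omega(\log\log n)}$ as well. Therefore $\beta' = n^{-\Omega(\log\log n)}$, and the composed reduction yields exactly the GapAlgCSP instance stated in Corollary \ref{cor:getAlgCSP}.
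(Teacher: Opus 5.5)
Your proposal is correct and matches the paper's argument: compose Theorem \ref{thm:bigMDC} with Theorem \ref{thm:khotstart}, taking $m^* = \lceil 2^{\sqrt{\log n}\log^4\log n}\rceil$ and $d^* = \lceil\sqrt{\log n}/\log^2\log n\rceil$ so that $\binom{m^*}{d^*} \geq \max(N,M)$. Your explicit checks that $\lambda = \Theta(\log n\log\log n)$ and that $\beta' = \max(1-\beta, O(d^*/2^\lambda)) = n^{-\Omega(\log\log n)}$ spell out bookkeeping the paper leaves implicit.
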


Now, we are nearly done. As confirmed by Khot \cite{khotpersonal}, the reduction from GapAlgCSP to QRDH appearing in \cite{khot2016hardness} yields the following theorem:

\begin{theorem}[\cite{khot2016hardness}] \label{thm:tohypergraph}
    Let $M, d,$ and $m$ be any positive integers, let $\alpha, \beta \in [0, 1]$ be any real numbers, and let $\lambda$ and $\lambda'$ be any positive integers such that $\lambda'$ divides $\lambda$. There is a deterministic $(M2^\lambda)^{O((md)^{O(1)})}$ time\footnote{This running time is just a very loose upper bound.} reduction from
    \[(\mathbb{F}_{2^\lambda}, M, k = 21, d, m, \alpha, \beta)\text{-GapAlgCSP}\]
    to
    \[\left((M2^\lambda)^{O((md)^{O(1)})}, (M2^\lambda)^{O((md)^{O(1)})}, O(d), 2^{\lambda'}, \alpha, \beta^{\Omega(1)} + 2^{-\Omega(\lambda)}\right)\text{-QRDH}.\]
\end{theorem}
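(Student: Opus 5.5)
The plan is to re-trace the Khot--Saket construction \cite{khot2016hardness} (itself built on the inner verifier of \cite{khot2006ruling}) with general parameters, tracking the dependence of every quantity on $M, d, m, \lambda, \lambda'$. We first build a PCP verifier whose proof locations become the vertices of the hypergraph and whose random query tuples become the hyperedges. The proof is meant to encode a nonzero assignment polynomial $f:\mathbb{F}_{2^\lambda}^m\to\mathbb{F}_{2^\lambda}$ of degree at most $d$, given by its restrictions to low-dimensional affine/linear subspaces, or to low-degree curves, of $\mathbb{F}_{2^\lambda}^m$.

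A vertex is a pair (subspace $U$, label), where the label ranges over a quotient of $\mathbb{F}_{2^\lambda}$ by an $\mathbb{F}_{2^{\lambda'}}$-structured subspace of codimension one over $\mathbb{F}_{2^{\lambda'}}$. This uses that $\lambda'$ divides $\lambda$, so that each location has exactly $2^{\lambda'}$ possible labels and an honest table picks out a $1/r = 2^{-\lambda'}$ fraction of the vertices. A hyperedge is generated as follows:
\begin{enumerate}
\item pick a uniformly random constraint $C=(\mathbf p^{(1)},\ldots,\mathbf p^{(21)},H)$;
\item pick a random low-dimensional subspace or curve through its points;
\item pick $O(d)$ further random locations on it;
\item take as the hyperedge the $O(d)$ (location, label) pairs that are jointly consistent with the homogeneous linear relation imposed by $H$ and by restriction to the common subspace.
\end{enumerate}
Counting gives $N,M\le (M2^\lambda)^{O((md)^{O(1)})}$ and arity $O(d)$. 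The construction is plainly deterministic once the deterministic inputs from Corollary~\ref{cor:getAlgCSP} are in hand.

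For completeness, fix a degree-$\le d$ nonzero $f$ satisfying an $\alpha$ fraction of the constraints. Let $V'$ be the set of vertices whose label agrees with $f$, so $|V'|=N/r$. The key point is \emph{homogeneity}: the labels in a hyperedge satisfy a linear dependency, so whenever the constraint is satisfied, $d-1$ of the queried labels being correct forces the last one. This yields the $\alpha(1/r)^{d-1}$ rather than $\alpha(1/r)^{d}$ fraction of fully contained hyperedges.

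Soundness is the main obstacle. Suppose some $V'$ of density $\zeta$ fully contains more than $\zeta^{d}M+\beta^{c}M+2^{-c\lambda}M$ hyperedges, for a suitable constant $c>0$. I would follow Khot's argument:
\begin{enumerate}
\item Write the containment probability as an expectation of a product of $d$ indicator functions over correlated queries.
\item Expand in characters of the label group. The trivial character contributes exactly $\zeta^{d}$, so any excess must come from a nontrivial character correlated across a constraint-respecting tuple.
\item Use the algebraic structure of the query distribution (random low-degree curves through the constraint points) together with a low-degree test / list-decoding lemma for Reed--Muller codes over $\mathbb{F}_{2^\lambda}$, in the style of Raz--Safra or subspace-versus-point. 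This turns such a correlation into a single nonzero polynomial of degree at most $1000d$ that agrees with the induced table on a non-negligible fraction of subspaces.
\item Show that this polynomial satisfies more than a $\beta$ fraction of the constraints, contradicting case (2) of GapAlgCSP.
\end{enumerate}
The losses are a polynomial loss from the decoding step, giving $\beta^{\Omega(1)}$, and a Schwartz--Zippel / low-degree-test error of order $(md)^{O(1)}/2^{\lambda}$, absorbed into $2^{-\Omega(\lambda)}$.

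The delicate point I expect to be hardest is making the character expansion uniform in $\zeta$, so that the bound is $\zeta^d$ plus an additive error for \emph{every} density rather than only $\zeta=1/r$. This is exactly the strengthening noted by Khot and Saket. It should go through because the error terms come from nontrivial characters and do not depend on $\zeta$. Finally, the additive error must stay additive rather than exponential in $d$ once $d$ is superconstant. I would verify this carefully by keeping the decoding step's loss per hyperedge independent of the arity.
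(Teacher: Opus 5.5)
The paper does not prove this statement itself. It invokes the GapAlgCSP-to-QRDH reduction of \cite{khot2016hardness} as a black box, with the parameter bookkeeping confirmed by Khot. Re-tracing that reduction is therefore the right plan. However, the construction you actually sketch is not a quasi-random PCP, and its soundness fails outright. Your vertices are (location, label) pairs with $r$ labels per location. Your hyperedges are tuples of such pairs whose labels are consistent with \emph{homogeneous} linear relations, and for your completeness step (``$d-1$ correct labels force the last'') each random choice has $r^{d-1}$ consistent tuples. The all-zero label tuple is consistent with every homogeneous relation. So the set $V_0$ of vertices carrying the zero label has density exactly $1/r$ and fully contains one of the $r^{d-1}$ consistent tuples of every random choice. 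That is a $(1/r)^{d-1}$ fraction of all hyperedges, regardless of whether the CSP instance is satisfiable. This is at least the completeness value $\alpha(1/r)^{d-1}$, so it violates case (2) whenever the QRDH gap is nontrivial. In your soundness steps 3--4, nothing prevents the decoded polynomial from being the zero polynomial, which is exactly what $V_0$ encodes, and GapAlgCSP says nothing about it.

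The missing idea is that the encoding must neutralize the trivial solution through the density accounting itself. In Khot's PCP the vertices are the proof locations, one per location, as in the paper's discussion preceding Theorem~\ref{thm:khotgraph}, and the honest proof is a subset of locations. That subset must be chosen so that the zero polynomial does not yield a density-$1/r$ set. For instance, suppose membership means the value lies in a fixed codimension-one $\mathbb{F}_{2^{\lambda'}}$-subspace $S$. Then $f=0$ gives the whole vertex set, whose containment $1 = 1^d$ is harmless, while a nonzero low-degree $f$ gives density about $1/r$.

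Two further points would also need repair. First, consider a relation $\sum_i c_i y_i = 0$ in which some ratio $c_i/c_j$ lies outside $\mathbb{F}_{2^{\lambda'}}$. It imposes no condition at all on labels in $\mathbb{F}_{2^\lambda}/S$. The reason is that the multiplicative stabilizer of $S$ is exactly $\mathbb{F}_{2^{\lambda'}}$, so $S + cS = \mathbb{F}_{2^\lambda}$ for every $c \notin \mathbb{F}_{2^{\lambda'}}$, and every label tuple lifts to a solution. Hence the constraint subspace $H \subseteq \mathbb{F}_{2^\lambda}^{21}$ is not tested by label consistency at all. The dependency responsible for the $(1/r)^{d-1}$ must have its coefficients in the subfield.

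Second, the real content of soundness is asserted rather than argued. That content is a bound of the form $\zeta^d$ plus an additive error, uniform in $\zeta$ and not multiplied by the $2^d-1$ cross terms of the product expansion. Moreover, an error of order $(md)^{O(1)}/2^{\lambda}$ is not $2^{-\Omega(\lambda)}$ for the arbitrary $m$ and $d$ the statement allows. It only suffices in the paper's application, where $md = 2^{o(\lambda)}$.
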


The final step is to combine Corollary \ref{cor:getAlgCSP} with Theorem \ref{thm:tohypergraph}. To this end, let $\lambda'$ be a power of two in the range $[\sqrt{\log n}, 2\sqrt{\log n})$, and set $r = 2^{\lambda'}$. We know that, assuming $n$ is sufficiently large, $\lambda'$ divides any possible value of $\lambda$ from Corollary \ref{cor:getAlgCSP} (since $\lambda$ is also a power of two), so we recover Theorem \ref{thm:khotqrdh}:

\paragraph{Theorem \ref{thm:khotqrdh}, Restated.}
    \emph{There is a deterministic $\exp{2^{O(\sqrt{\log n}\log^4\log n)}}$ time reduction from SAT instances of size $n$ to 
    \begin{align*}
        \big( & N = \exp{2^{O(\sqrt{\log n}\log^4\log n)}}, \\
        & M = \exp{2^{O(\sqrt{\log n}\log^4\log n)}}, \\
        & d = O(\sqrt{\log n}/\log^2\log n), \\
        & r = 2^{\Theta(\sqrt{\log n})}, \alpha = 0.5, \beta = n^{-\Omega(\log\log n)}\big)\textnormal{-QRDH}.
    \end{align*}}

\section*{Acknowledgments}
We gratefully acknowledge Subhash Khot for his inputs, and especially for confirming the correctness of our applications of his theorems. This research was supported in part from a Simons Investigator Award, DARPA expMath award, NSF grant 2333935, BSF grant 2022370, a Xerox Faculty Research Award, a Google Faculty Research Award, an Okawa Foundation Research Grant, and the Symantec Chair of Computer Science. This material is based upon work supported by the Defense Advanced Research Projects Agency through Award HR001126CE054.

\newpage
\bibliographystyle{alpha}
\bibliography{refs.bib}
\newpage

\appendix

\section{Proof of Claim \ref{claim:calculategamma}} \label{app:proof:calculategamma}

Recall the statement of the claim:

\paragraph{Claim \ref{claim:calculategamma}, Restated.}
    \emph{Let $p \geq 1$ be any constant, let $M$ be a parameter such that $1 \leq M \leq$ \\ $\exp{2^{O(\sqrt{\log n}(\log\log n)^{O(1)})}}$, and let $M' = \exp{n^{1/\log\log n}\log M}$. Then
        \[2^{n^{1/\log\log n}/p - 1} \geq \exp{\Omega\left((\log M')^{1 - (\log\log\log M')^{O(1)}/\sqrt{\log\log M'}}\right)}.\]}

\begin{proof}
    Taking logs, we have the following, assuming that $n$ is sufficiently large and we take $c$ to be an appropriate constant.
    \[\log n/\log\log n \quad \leq \quad \log\log M' \quad \leq \quad \sqrt{\log n}(\log\log n)^c + \log n/\log\log n\]
    Again assuming $n$ is sufficiently large, further manipulation gives:
    \begin{align*}
        \sqrt{\log\log M'}(\log\log\log M')^{c+2} \quad \geq & \quad \sqrt{\log n/\log\log n} \cdot (\log\log n - \log\log\log n)^{c+2} \\
        \geq & \quad \sqrt{\log n}(\log\log n)^c
    \end{align*}
    Therefore we can write:
    \begin{align*}
        \log\log M' - \sqrt{\log\log M'}(\log\log\log M')^{c+2} \quad \leq & \quad \log n/\log\log n \\
        \log\log M'\left(1 - (\log\log\log M')^{c+2}/\sqrt{\log\log M'}\right) \quad \leq & \quad \log n/\log\log n \\
        \exp{(\log M')^{1 - (\log\log\log M')^{c+2}/\sqrt{\log\log M'}}} \quad \leq & \quad \exp{n^{1/\log\log n}} \\
        \exp{\Omega\left((\log M')^{1 - (\log\log\log M')^{O(1)}/\sqrt{\log\log M'}}\right)} \quad \leq & \quad \exp{n^{1/\log\log n}/p - 1} \tag{Using that $p \geq 1$ is a constant.}
    \end{align*}
\end{proof}

\section{Proof of Proposition \ref{claim:avgprinciple}} \label{app:claim:implicateR}

Below, we re-state the proposition.

\paragraph{Proposition \ref{claim:avgprinciple}, Restated.}
        \emph{Let $n$, $M$, and $N$ be positive integers, and let
        \begin{align*}
            \alpha & = 0.5 \\
            r & = 2^{\Theta(\sqrt{\log n})} \\
            \log n/\log\log n \leq q & \leq 2\log n/\log\log n \\
            1 \leq d & \leq O(\sqrt{\log n}/\log^2\log n) \\
            h & \leq (2\alpha(1/r)^{d-1}M)^q \\
            \left(\frac{\alpha(1/r)^{d-1}M}{N/r}\right)^q/n \leq w &
        \end{align*}
        Then assuming $n$ is sufficiently large,
        \[\frac{2h \cdot d^q}{N^q\left(\frac{(\alpha(1/r)^{d-1})^q}{n^2}\right)^{1/d}} < w.\]}

    \begin{proof}
    \begin{align*}
        \frac{2h \cdot d^q}{N^q\left(\frac{(\alpha(1/r)^{d-1})^q}{n^2}\right)^{1/d}} \leq & \frac{2h \cdot (O(\sqrt{\log n}/\log^2\log n))^{2\log n/\log\log n}}{N^q\left(\frac{(\alpha(1/r)^{d-1})^q}{n^2}\right)^{1/d}} \tag{$d = O(\sqrt{\log n}/\log^2\log n)), q \leq 2\log n/\log\log n$} \\
        \leq & \frac{hn^{O(1)}}{N^q\left(\frac{(\alpha(1/r)^{d-1})^q}{n^2}\right)^{1/d}} \\
        \leq & \frac{hn^{O(1)}}{N^q(\alpha(1/r)^{d-1})^{q/d}/n^{O(1)}} \tag{$d \geq 1$} \\
        \leq & \frac{hn^{O(1)}}{N^q\alpha^{q/d}(1/r)^{q-q/d}} \\
        \leq & \frac{hn^{O(1)}}{N^q(1/2)^{2\log n/\log\log n}(1/r)^{q-q/d}} \tag{$\alpha = 0.5, q \leq 2\log n/\log\log n, d \geq 1$} \\
        \leq & \frac{hn^{O(1)}}{N^q(1/r)^{q-q/d}} \\
        \leq & \frac{(2\alpha(1/r)^{d-1}M)^qn^{O(1)}}{N^q(1/r)^{q - q/d}} \tag{$h \leq (2\alpha(1/r)^{d-1}M)^q$} \\
        \leq & \frac{2^{2\log n/\log\log n}(\alpha(1/r)^{d-1}M)^qn^{O(1)}}{N^q(1/r)^{q - q/d}} \tag{$q \leq 2\log n/\log\log n$}\\
        \leq & \left(\frac{\alpha(1/r)^{d-1}M}{N/r}\right)^q \cdot \frac{n^{O(1)}}{(1/r)^{-q/d}} \\
        \leq & wn \cdot \frac{n^{O(1)}}{(1/r)^{-q/d}} \tag{$w \geq \left(\frac{\alpha(1/r)^{d-1}M}{N/r}\right)^q/n$}\\
        \leq & w \cdot \frac{n^{O(1)}}{r^{q/d}} \\
        \leq & w \cdot \frac{n^{O(1)}}{\left(2^{\Theta(\sqrt{\log n})}\right)^{(\log n/\log\log n)/(O(\sqrt{\log n}/\log^2\log n))}} \tag{$r = 2^{\Theta(\sqrt{\log n})}, q \geq \log n/\log\log n, d = O(\sqrt{\log n}/\log^2\log n$}\\
        \leq & w/2^{\Omega(\log n\log\log n)}\\
        < & w \tag{Assuming $n$ sufficiently large.}
    \end{align*}
    \end{proof}

\end{document}